\documentclass[pdflatex,sn-basic,Numbered]{sn-jnl}


\usepackage{setspace}
\setstretch{1.2}

\usepackage{amsthm}%
\usepackage{graphicx}%
\usepackage{multirow}%
\usepackage{amsmath,amssymb,amsfonts}%
\usepackage{mathrsfs}%
\usepackage[title]{appendix}%
\usepackage{xcolor}%
\usepackage{textcomp}%
\usepackage{manyfoot}%
\usepackage{booktabs}%
\usepackage{algorithm}%
\usepackage{algorithmicx}%
\usepackage{algpseudocode}%
\usepackage{listings}%
\usepackage{url}

\usepackage{stmaryrd}
\usepackage{enumerate}
\usepackage{mathtools}
\usepackage{ebproof}
\usepackage{proof}
\usepackage{mathbbol}
\usepackage[inline]{enumitem}
\usepackage{hyperref}
\hypersetup{
    colorlinks,
    linkcolor={black},
    urlcolor={black}
}
\usepackage{tikz-cd}



\newtheorem{theorem}{Theorem}[section]
\newtheorem{proposition}[theorem]{Proposition}
\newtheorem{lemma}[theorem]{Lemma} 

\newtheorem{definition}[theorem]{Definition}
\newtheorem{example}[theorem]{Example}

\AtEndEnvironment{example}{\hfill {\tiny $\blacksquare$}}


\newcommand{\IMLL}{\textrm{IMLL}}
\newcommand{\MILL}{\IMLL}

\renewcommand{\varGamma}{\mathrm{\Gamma}}
\renewcommand{\varDelta}{\mathrm{\Delta}}
\renewcommand{\varSigma}{\mathrm{\Sigma}}
\renewcommand{\varLambda}{\mathrm{\Lambda}}
\renewcommand{\varXi}{\mathrm{\Xi}}
\renewcommand{\varTheta}{\mathrm{\Theta}}

\newcommand{\IPL}{\textrm{IPL}}
\newcommand{\BI}{\textrm{BI}}

\renewcommand{\emptyset}{\varnothing}
\renewcommand{\phi}{\varphi}


\makeatletter
\def\labelandtag#1#2{\begingroup
   \def\@currentlabel{#2}%
   \phantomsection\label{#1}\endgroup
}
\makeatother

\newcommand{\myhypertarget}[2]{%
  \phantomsection
  \hypertarget{#1}{#2}%
  \expandafter\gdef\csname targettext@#1\endcsname{#2}%
}
\newcommand{\myhyperlink}[1]{%
  \hyperlink{#1}{\csname targettext@#1\endcsname}%
}

\newcommand{\base}[1]{\mathscr{#1}}
    \newcommand{\baseB}{\base{B}}
    \newcommand{\baseC}{\base{C}}
    \newcommand{\baseD}{\base{D}}
    \newcommand{\baseE}{\base{E}}
    \newcommand{\baseM}{\base{M}}
    \newcommand{\baseN}{\base{N}}
    \newcommand{\baseO}{\base{O}}
    \newcommand{\baseX}{\base{X}}
    \newcommand{\baseY}{\base{Y}}
\newcommand{\emptybase}{\varnothing}
\newcommand{\at}[1]{\ensuremath{\mathrm{#1}}}
\newcommand{\baseGeq}{\supseteq}

\newcommand{\supp}[2]{\Vdash_{\!\!#2}^{#1}}
    
\newcommand{\entails}{\Vdash}
\newcommand{\proves}[1][]{\vdash_{\!\!#1}}

\newcommand{\system}[1]{\mathsf{#1}}
\newcommand{\Atoms}{\set{A}}
\newcommand{\Formulas}{\set{F}}
\newcommand{\Bunches}{\set{B}}
    \newcommand{\BunchesWithHole}{\dot{\Bunches}}
\newcommand{\set}[1]{\mathbb{#1}}

\newcommand{\biflat}[1]{{#1}^\flat}
\newcommand{\binat}[1]{{#1}^\natural}

\newcommand{\deriveBaseM}[1]{\vdash_{\!\!#1}}

\newcommand{\suppTor}[1]{\Vdash_{#1}}
\newcommand{\suppM}[2]{\Vdash_{#1 }^{ #2 }}

\newcommand{\At}{\mathbb{A}}
\newcommand{\mand}{\mathrel{\ast}}
\newcommand{\mto}{\wand}
\newcommand{\mtop}{\top^{\ast}}

\renewcommand{\atop}{\top}

\newcommand{\addcontext}{\!\mathrel{\fatsemi}\!}
\newcommand{\multcontext}{\!\mathrel{\fatcomma}\!}
    \newcommand{\addcomma}{\addcontext}
    \newcommand{\multcomma}{\multcontext}
    \newcommand{\acomma}{\addcontext}
    \newcommand{\mcomma}{\multcontext}
    \newcommand{\aacomma}{\mathrel{\fatsemi}}
    \newcommand{\mmcomma}{\mathrel{\fatcomma}}
\newcommand{\commavar}{\star}

\newcommand{\holeDepth}[1]{\mathrm{holeDepth}{(#1)}}
\newcommand{\nestDepth}[2]{\mathrm{nestDepth}{(#1)}(#2)}

\newcommand{\seq}{\triangleright}


\DeclareSymbolFont{bbsymbol}{U}{bbold}{m}{n}
\DeclareMathSymbol{\fatsemi}{\mathbin}{bbsymbol}{"3B}
\DeclareMathSymbol{\fatcomma}{\mathbin}{bbsymbol}{"2C}

\newcommand{\buncheq}{\equiv}
\newcommand{\bunchWeakerThan}{\preceq}
    \newcommand{\bunchStrongerThan}{\succeq}

\newcommand{\wand}{\mathrel{-\mkern-12mu-\mkern-12mu\ast}}
\newcommand{\munit}{\varnothing_{\!\times}}
\newcommand{\aunit}{\varnothing_{\!+}}
\newcommand{\I}{\top^{\ast}}
\newcommand{\empset}{\varnothing}

\newcommand{\weak}{\rn{w}}
\newcommand{\cont}{\rn{c}}
\newcommand{\exch}{\rn{e}}
\newcommand{\rn}[1]{\mathsf{#1}}

\newcommand{\irn}[1]{\ensuremath{\rn{#1}_\mathsf{I}}}
\newcommand{\ern}[1]{\ensuremath{\rn{#1}_\mathsf{E}}}

\newcommand{\ax}{\rn{id}}

  \newcommand{\flatBI}[1]{{#1}^{\flat}}
\newcommand{\deflatBI}[1]{{#1}^{\natural
}}

\newcommand{\baseBI}{\base{N}}
\newcommand{\calculusBI}{\system{NBI}}

\newcommand{\natN}{\mathbb{N}}
\newcommand{\lrangle}[1]{\langle #1 \rangle}

\makeatletter
\let\orgdescriptionlabel\descriptionlabel
\renewcommand*{\descriptionlabel}[1]{%
  \let\orglabel\label
  \let\label\@gobble
  \phantomsection
  \edef\@currentlabel{#1}%
  \let\label\orglabel
  \orgdescriptionlabel{#1}%
}
\makeatother

\author*[1]{\fnm{Tao} \sur{Gu}}\email{tao.gu.18@ucl.ac.uk}

\author*[1]{\fnm{Alexander V.} \sur{Gheorghiu}}\email{alexander.gheorghiu.19@ucl.ac.uk}

\author*[1,2,3]{\fnm{David J.} \sur{Pym}}\email{d.pym@ucl.ac.uk, david.pym@sas.ac.uk}

\affil*[1]{\orgdiv{Department of Computer Science}, \orgname{University College London}, \orgaddress{\street{Gower Street}, \city{London}, \postcode{WC1E 6BT}, \country{United Kingdom}}}

\affil[2]{\orgdiv{Department of Philosophy}, \orgname{University College London}, \orgaddress{\street{Gower Street}, \city{London}, \postcode{WC1E 6BT}, \country{United Kingdom}}}

\affil[3]{\orgdiv{Institute of Philosophy}, \orgname{University of London}, \orgaddress{\street{Malet St}, \city{London}, \postcode{WC1E 7HU}, \country{United Kingdom}}}

\renewcommand{\abstract}[1]{\textbf{Abstract.} #1}

\title[Proof-theoretic Semantics for BI]{Proof-theoretic Semantics for the Logic of Bunched Implications}

\begin{document}

\maketitle              

\abstract{ 
The \emph{logic of bunched implications} (BI) can be seen as the free combination of \emph{intuitionistic propositional 
logic} (IPL) and \emph{intuitionistic multiplicative linear logic} (IMLL). We present here 
a base-extension semantics (B-eS) for BI in the spirit of Sandqvist's B-eS for IPL, 
deferring an analysis of proof-theoretic validity, in the sense of Dummett and Prawitz, 
to another occasion. Essential to BI's formulation in proof-theoretic terms is the concept 
of a `bunch' of hypotheses that is familiar from relevance logic. Bunches amount to trees 
whose internal vertices are labelled with either the IMLL context-former or the IPL 
context-former and whose leaves are labelled with propositions or units for the context-formers. This structure presents 
significant technical challenges in setting up a base-extension semantics for BI. Our 
approach starts from the B-eS for IPL and the B-eS for IMLL and provides a systematic 
combination. Such a combination requires that base rules carry bunched structure, and so 
requires a more complex notion of derivability in a base and a correspondingly richer 
notion of support in a base. One reason why BI is a substructural logic of interest is 
that the `resource interpretation' of its semantics, given in terms of sharing and 
separation and which gives rise to Separation Logic in the field of program verification, is quite distinct from the 
`number-of-uses' reading of the propositions of linear logic as resources. This resource 
reading of BI provides useful intuitions in the formulation of its proof-theoretic semantics. We discuss a simple example of the use of the given B-eS in security modelling. } 

\keywords{Logic \and Semantics \and Proof Theory \and Proof-theoretic Semantics \and Substructural Logic  \and bunched implications}

\section{Introduction} \label{sec:introduction}


In model-theoretic semantics (M-tS), logical consequence is defined in terms of truth in models, which are abstract mathematical structures in which propositions are interpreted and their truth is judged. In the standard reading given by Tarski~\cite{Tarski1936,Tarski2002}, a propositional formula $\phi$ follows (model-theoretically) from a context
$\varGamma$ iff every model of $\varGamma$ is a model of $\phi$:  
\[
  \begin{array}{r@{\quad}c@{\quad}l}
    \mbox{$\varGamma \models \phi$} & \mbox{iff} & \mbox{for all models $\mathcal{M}$, if  $\mathcal{M} \models \psi$ for all $\psi \in \varGamma$, then $\mathcal{M} \models \phi$}
  \end{array}   
\]
Therefore, consequence is understood as the \emph{transmission of truth}. Importantly, from this perspective, \emph{meaning} and \emph{validity} are characterized \emph{in terms of truth}. 

Proof-theoretic semantics (P-tS) is an alternative approach to meaning and validity in which these two concepts are characterized in terms of \emph{proofs}, where `proofs' are understood as objects denoting collections of acceptable inferences from accepted premisses. This is subtle. Crucially, P-tS is not about providing a proof system for the logic, but about explicating meaning \emph{in terms of `proof'}. Indeed, as Schroeder-Heister \cite{Schroeder2007modelvsproof} observes, since no formal system is fixed (only notions of inference) the relationship between semantics and provability for the logic in question remains the same as it has always been. In particular, soundness and completeness with respect to a P-tS are desirable features of formal systems. Essentially, what differs in P-tS is that \emph{proof}  plays the part of \emph{truth} in M-tS. 

We defer to Schroeder-Heister~\cite{Schroeder2006validity,Schroeder2007modelvsproof,SEP-PtS} for discussion of the motivation and philosophy for P-tS. The paradigm of meaning supporting P-tS is \emph{inferentialism}; that is, the view that meaning (or validity) arises from rules of inference (see Brandom~\cite{Brandom2000}). This stands in contrast to, for example, denotationalism as the paradigm of meaning underpinning M-tS. 

Current work in P-tS largely follows two different approaches. We call the first \emph{proof-theoretic validity} (P-tV): it aims to define what makes an object purporting to be a `proof' \emph{valid}. Major approaches in P-tV follow ideas by Dummett \cite{Dummett1991logical} and Prawitz \cite{Prawitz1971ideas,Prawitz2006natural} --- see, for example, Schroeder-Heister \cite{Schroeder2006validity}. It is closely related to, though not coincident with, BHK semantics --- see Schroeder-Heister~\cite{Schroeder2007modelvsproof}.

We call the second approach to P-tS \emph{base-extension semantics} (B-eS): it aims to give a semantics to the logical constants through defining the validity of formulae in terms of consequence/inference in arbitrary pre-logical systems called \emph{bases}. Major ideas in B-eS relevant to this paper follow Makinson, Piecha, Sandqvist, and Schroeder-Heister, and Stafford~\cite{makinson2014inferential,Piecha2017definitional,Sandqvist2005inferentialist,Sandqvist2009CL,Sandqvist2015base,Schroeder2016atomic,Stafford2021}. In Section~\ref{sec:intro-B-eS}, we explain the B-eS for intuitionistic propositional logic (IPL) by Sandqvist~\cite{Sandqvist2015base} and explain how it motivates the work in this paper. 

Importantly, the nomenclature outlined here is taken from the historical development of the respective areas, but is not intended to divide the techniques of P-tS. In particular, both P-tV and B-eS concern \emph{validity} --- the former \emph{of} proofs, and the latter \emph{in terms of} proofs --- and both may use `bases' and `base-extensions' in so doing.  

This paper delivers a P-tS for \emph{the logic of bunched implications} (BI) \cite{o1999logic,Brotherston2012,GP2023-SABI}. The logic is, perhaps, best-known for its role as the logical basis for the program logic known as \emph{Separation Logic} \cite{Ishtiaq2001,reynolds2002separation,pym2019separation} and its many derivatives. Formally, as explained in \cite{Ishtiaq2001}, 
Separation Logic amounts to a specific theory of BI that is concerned with memory locations. This application of BI is essentially based on its reading as a logic of resource --- see, for example, Pym~\cite{pym2019resource}. In this paper, we concentrate on the technical development of P-tS for BI. 

Why study B-eS for BI? One motivation is to be able to understand and account for \emph{resources} in formal systems, semantically. Consider the celebrated `resource interpretation' of \emph{linear logic} (LL)~\cite{girard1995} via the number-of-uses reading in which a proof of a formula $\phi \multimap \psi$ determines a function that \emph{uses} its arguments exactly once. This reading is entirely proof-theoretic and is \emph{not} at all expressed in the truth-functional semantics of the logic --- see Girard \cite{girard1995}, Allwein and Dunn \cite{Allwein}, and Coumans et al. \cite{COUMANS201450}. Meanwhile, this reading of resource is quite clearly seen in the B-eS for IMLL by Gheorghiu et al.~\cite{gheorghiu2023imll}. This illustrates the value of studying B-eS for substructural logics in informatics as far as their use as logics of resources according to the typical number-of-uses reading. Of course, there are other applications of substructural logics to formal systems, but they all tend to be close to the proof-theoretic accounts of the logic for which P-tS is ideally suited.

Thus, on the one hand we have P-tS as a semantic paradigm in which a certain common notion of resource is handled explicitly and naturally, on the other we have BI as a logic whose resource interpretation has formidable applications to informatics and systems science. This paper brings these two worlds together. This connection is further explored by the present authors in \cite{MFPS-IRS-2024}. 

However, resource-sensitive semantics is not the only reason to study BI. Another use of the logic is as the basis for van~Benthem-Hennessy-Milner-type logics associated with process algebras that provide a foundation for simulation modelling tools --- see, for example, Anderson and Pym~\cite{ANDERSON201663}. Kuorikoski and Reijula \cite{jaakko} have recently used P-tS in the context of simulation modelling --- briefly, execution of a model (or a program, akin to a proof in a formal system) is understood as one of its possible interpretations --- giving another reason to study the B-eS for BI in particular. More 
fundamentally, however, BI is a logic that integrates `additive' (here, intuitionistic) and `multiplicative' structure and is closely related to the class of relevance logics --- note, for relevance logics the additive and multiplicative structures are often given opposite notation to what is used in BI. Hence, studying the P-tS for BI expands the field into the wider space of logical systems across philosophy, mathematics, and informatics. In this paper, we illustrate the value of the B-eS of BI by application to yet another domain: rule-based access control --- see Section~\ref{sec:support}.

In Section~\ref{sec:intro-B-eS}, we illustrate how B-eS for IPL and for IMLL, and discuss the challenges in handling BI. In Section~\ref{sec:bi}, we introduce BI in detail, providing its definition at the level of detail that is required to set up its B-eS. and including its sequential natural deduction system, NBI. In Section~\ref{sec:bes}, we set up BI's Be-S and give, in Section~\ref{sec:eg-access-control}, a toy example of how the semantics can be used to model a resource-sensitive systems example (about access control), where we also briefly discuss the developments of this idea that have been presented in \cite{MFPS-IRS-2024}. In Sections \ref{sec:soundness} and 
\ref{sec:completeness}, we establish the soundness and completeness, respectively, of BI's sequential natural deduction system, NBI. Finally, in Section~\ref{sec:conclusion}, we summarize our contribution and briefly discuss some directions for further work. 

\section{Background: Base-extension Semantics}
\label{sec:intro-B-eS}

A \emph{base} is a set of \emph{atomic rules}. An \emph{atomic rule} is a natural deduction rule (in the sense of Gentzen~\cite{Gentzen}) over atomic propositions (not closed under substitution); for example, following Sandqvist~\cite{sandqvistwld}, from the propositions  `Tammy is a fox' and `Tammy is female' one may infer `Tammy is a vixen',
\[
    \infer{\text{Tammy is a vixen}}{\text{Tammy is a fox} & \text{Tammy is female}}
\]
This is the standard approach to atomic rules, which will be generalized in this paper to accommodate certain technical issues explained below; in particular, we move from a natural  deduction presentation to a sequent calculus presentation. See Piecha and Schroeder-Heister~\cite{Schroeder2016atomic,Piecha2017definitional} for various interpretations of atomic systems. 

In B-eS, the validity of atoms and the meaning of the logical connectives is inductively defined relative to bases --- see, for example, Figure~\ref{fig:sandqvist} for \emph{intuitionistic propositional logic} (IPL), Figure~\ref{fig:IMLL} for \emph{intuitionistic multiplicative linear logic} (IMLL), and Figure~\ref{fig:suppBI} for \emph{the logic of bunched implications} (BI), which are all explained below. The choice of the form of atomic rules has profound repercussions; for example, limiting the atomic rules to simple production rules in the B-eS for IPL yields incompleteness, or (conversely) generalizing atomic rules include discharged hypotheses and making a subtle choice about the treatment of disjunction (cf. the incompleteness results by Piecha and Schroeder-Heister~\cite{Piecha2015failure,Piecha2016completeness,Piecha2019incompleteness}) in the semantics of CPL delivers a semantics for IPL --- see Sandqvist~\cite{Sandqvist2005inferentialist,Sandqvist2009CL,Sandqvist2015hypothesis}. This understanding of B-eS is the form of P-tS used in this paper; see Goldfarb~\cite{goldfarb2016dummett} and Stafford and Nascimento~\cite{Stafford2023} for alternative approaches. 


\begin{figure}[t]
\hrule \vspace{2mm}
 \[
        \begin{array}{l@{\qquad}c@{\quad}l@{\quad}r}
            \suppTor{\base{B}} \at{p}  & \mbox{iff} &   \proves[\base{B}] \at{p} & \mbox{(At)}  \\[1mm]
             \suppTor{\base{B}} \phi \to \psi & \mbox{iff} & \phi \suppTor{\base{B}} \psi & (\to) \\[1mm]
             \suppTor{\base{B}} \phi \land \psi   & \mbox{iff} & \mbox{$\suppTor{\base{B}} \phi$ and $\suppTor{\base{B}} \psi$} & (\land) \\[1mm]
             \suppTor{\base{B}} \phi \lor \psi & \mbox{iff} &  \mbox{$\forall \base{C} \supseteq \baseB$ and $\forall \at{p} \in \At$, if $\phi \suppTor{\base{C}} \at{p}$ and  $\psi \suppTor{\base{C}} \at{p}$, then $\suppTor{\base{C}} \at{p}$} & (\lor)  \\[1mm]
             \suppTor{\base{B}} \bot & \mbox{iff} &    \suppTor{\base{B}} \at{p} \text{ for any } \at{p} \in \At & (\bot) \\[1mm]
           \hspace{-1em} \varGamma \suppTor{\base{B}} \phi & \mbox{iff} & \mbox{$\forall \base{C} \supseteq \baseB$, if $\suppTor{\base{C}} \psi$ for every $\psi \in \varGamma$, then $\suppTor{\base{C}} \phi$ } &  \mbox{(Inf)} 
        \end{array}
        \]
        \vspace{2mm}
     \dotfill 
         \[
        \raisebox{-1em}{
        \infer[\rn{r}]{\at{c}}{\deduce{\at{p}_1}{[\at{P}_1]} & ... & \deduce{\at{p}_n}{[\at{P}_n]} }
        }
        \quad 
            \begin{array}{r@{\quad}l}
                \mbox{\hypertarget{ipl-ref}{\textsc{ref}}:} & \mbox{$\at{P} , \at{p} \vdash_\base{B} \at{p}$} \\ 
                \mbox{\hypertarget{ipl-app}{\textsc{app}}:} & \mbox{If $r \in \baseB$ and $\at{P},\at{P}_i \proves[\base{B}] \at{p}_i$ for $i = 1\ldots n$, then $\at{P} \proves[\base{B}] \at{c}$} 
            \end{array}
    \]
    \hrule
    \caption{Base-extension Semantics for IPL} 
    \label{fig:sandqvist}
\end{figure}

BI can be understood as the free combination of IPL --- with connectives $\land, \lor, \to, \top, \bot$ --- and IMLL --- with connectives $\mand, \wand, \I$ (the unit of $\ast$). Both of these logics already have  B-eS --- see Sandqvist~\cite{Sandqvist2015base} and Gheorghiu et al.~\cite{gheorghiu2023imll} --- which suggests that the semantics for BI should be a `free combination' of those semantics. The same is true of BI's model-theoretic semantics, but this topic is quite delicate --- see, for example, work by Docherty, Galmiche, Gheorghiu, M\'ery, Pym~\cite{Docherty2019,Galmiche2005,Alex:BI_Semantics}.  The issues arise out of the complex structures involved in BI. As a consequence of having two implications, contexts in BI are not the typical \emph{flat} data structures used in logic (e.g., lists or multisets), but instead are layered structures called \emph{bunches}, a term that derives from the relevance logic literature --- see, for example, Read~\cite{Read1988}. Therefore the technical challenges for providing a B-eS for BI amount to understanding how to handle the bunch structure, and how that manifests in the resource reading of the logic and semantics. The technical details are contained within the body of the paper. The remainder of this section provides intuitions as to how these ideas are developed.

Our starting point is the B-eS for IPL by Sandqvist~\cite{Sandqvist2015base}. Fix a (denumerable) set of atoms $\At$. A base $\base{B}$ is a set of atomic rules as described above and provability in a base ($\proves[\baseB])$ is defined as in Figure~\ref{fig:sandqvist} --- that is, inductively, by \hyperlink{ipl-ref}{\textsc{ref}} and \hyperlink{ipl-app}{\textsc{app}}. This provides the base case of the \emph{support} relation defined in Figure~\ref{fig:sandqvist} that provides the semantics --- here, $\at{p},\at{p}_1,\ldots,\at{p}_n,\at{c} \in \At$, $\at{P}_1,\dots \at{P}_n,\ldots \subseteq \At$ (finite), and $\phi,\psi, \ldots$ denote IPL-formulae, and $\varGamma$ denotes a set of IPL-formulae. It differs from the standard Kripke semantics for IPL \cite{kripke1965semantical} in the treatment of disjunction ($\lor$) and absurdity ($\bot$); indeed, this choice is critical as Piecha and Schroeder-Heister~\cite{Piecha2015failure,Piecha2016completeness,Piecha2019incompleteness} have shown that a meta-level `or' for disjunction results in incompleteness.

Recently, the present authors have given 
a B-eS for IMLL~\cite{gheorghiu2023imll} in the style of the B-eS for IPL given above. This is summarized in Figure~\ref{fig:IMLL} in which $\at{P}_1,\ldots,\at{P}_n,\at{U},\at{V}$ are \emph{multisets} of atoms, $\varGamma, \varDelta$ are \emph{multisets} of IMLL formulae, and $\mmcomma$ denotes multiset union. There are a few features to note about this semantics. First, observe that provability in a base ($\proves[\baseB]$) has been made substructural as the base case (\hyperlink{imll-ref}{\textsc{ref}}) is restricted to have only the conclusion as a context and the inductive step (\hyperlink{app}{\textsc{app}}) combines the contexts delivering each premiss of an atomic rule for the context delivering the conclusion of that rule. Second, observe that the support judgement 
$\suppM{}{}$ is parametrized on a multiset of atoms, which can usefully be thought 
of as `resources'. Doing this manifests the number-of-uses resource reading of IMLL restricted to atoms. The role of resource parametrization 
can be seen particularly clearly in the \hyperlink{imll-inf}{(Inf)} clause: the resources required for the sequent $\varGamma \seq \phi$ are combined with those required for $\varGamma$ in order to deliver those required for $\phi$. Third, the treatment of the support of a 
combination of contexts $\varGamma \mcomma \varDelta$ follows the na\"{\i}ve Kripke-style interpretation of multiplicative conjunction, corresponding to an introduction rule in natural 
deduction, but the support of the tensor product $\otimes$ follows the form of a natural deduction elimination rule. Fourth, observe that everywhere there are restrictions to the atomic case (e.g., the $\otimes$-clause takes the form of the $\otimes$-elimination rule \emph{restricted} to atomic conclusions), which are required to ensure the connection to provability in a base.  Overall, we observe that aside from expected modifications on the set-up for IPL, there are two key technical developments required in this work: first, the use of resources to capture substructurality; second, a delicate treatment 
of the relationship between the multiplicative context-building operation and the corresponding multiplicative conjunction.

\begin{figure}[t]
\hrule
\vspace{2mm}
\[
 \begin{array}{l@{\quad}c@{\quad}l@{\qquad}r}
            \mbox{$\suppM{\baseB}{\at{P}} \at{p}$} &  \mbox{iff} & \mbox{$\at{P} \deriveBaseM{\baseB} \at{p}$} & \mbox{(At)}  \\ [1mm]
           \mbox{$\suppM{\baseB}{\at{P}} \phi \multimap \psi$} &  \mbox{iff} & 
            \mbox{$\phi \suppM{\baseB}{\at{P}} \psi$} &   \mbox{($\multimap$)} \\[1mm]
             \mbox{$\suppM{\baseB}{\at{P}} \phi \otimes \psi$} & \mbox{iff} &\mbox{$\forall \baseX \baseGeq \baseB$, $\forall \at{U}$, $\forall \at{p} \in\At$, if $\phi \mcomma \psi \suppM{\baseX}{\at{U}} \at{p}$, then $\suppM{\baseX}{\at{P} \mmcomma \at{U}} \at{p}$} & \mbox{($\otimes$)} \\ [1mm]
           \mbox{$\suppM{\baseB}{\at{P}} \mtop$} & \mbox{iff} & \mbox{$\forall \baseX \baseGeq \baseB$, $\forall \at{U}$, $\forall \at{p} \in \At$, if $\suppM{\baseX}{\;\at{U}} \at{p}$, then $\suppM{\baseX}{\;\at{P} \;\mcomma\;  \at{U}} \at{p}$} &  \mbox{($\mtop$)}  \\ [1mm]
           \mbox{$ \suppM{\baseB}{\at{P}} \varGamma \mcomma \varDelta$} & \mbox{iff} &  \mbox{$\exists \at{U},\at{V}$ s.t. $\at{P} = (\at{U} \mcomma \at{V})$, $\suppM{\baseB}{\at{U}} \varGamma$, and  $\suppM{\baseB}{\at{V}} \varDelta$} &  \mbox{(\,\,$\mcomma$\,\,)} 
            \\[1mm]
             \hspace{-1.4em} \mbox{$\varGamma \suppM{\baseB}{\at{P}} \phi$} & \mbox{iff} & 
        \mbox{$\forall \baseX \baseGeq \baseB$ and $\forall \at{U}$, if $\suppM{\baseX}{\at{U}} \varGamma$, then $\suppM{\baseX}{\;\at{P} \;\mcomma\; \at{U}} \phi$} & \mbox{\hypertarget{imll-inf}{(Inf)}}
        \end{array}
\]
    \dotfill \vspace{-1mm}
         \[
        \raisebox{-1em}{
        \infer[\rn{r}]{\at{c}}{\deduce{\at{p}_1}{[\at{P}_1]} & ... & \deduce{\at{p}_n}{[\at{P}_n]} }
        }
        \quad 
            \begin{array}{r@{\quad}l}
                \mbox{\hypertarget{imll-ref}{\textsc{ref}}:} & \at{p} \proves[\baseB] \at{p} \\ 
                \mbox{\hypertarget{imll-app}{\textsc{app}}:} & \mbox{If $\rn{r} \in \baseB$ and $\at{S}_i \mcomma \at{P}_i \deriveBaseM{\baseB} \at{p}_i$ for $i = 1 \ldots n$, then $\at{S}_1 \mcomma \dots \mcomma \at{S}_n \deriveBaseM{\baseB} \at{c}$} 
            \end{array}
    \]
\hrule
\caption{Base-extension Semantics for IMLL} \label{fig:IMLL}
\vspace{-1em}
\end{figure}

 In this paper, we deliver the B-eS for BI by generalizing the set-up for IMLL from multisets to bunches, mixing multiplicative structure and additive structure. There are a number of technical challenges that arise when doing this. First, we diverge from the traditional literature on atomic systems in P-tS --- see, for example, Piecha and Schroeder-Heister~\cite{Schroeder2016atomic,Piecha2017definitional} --- as natural deduction in the sense of Gentzen~\cite{Gentzen} appears insufficiently expressive to make all the requisite distinctions with regards to multiplicative and additive structures in bunches. 
 Second, in order to recover the appropriate amount of structurality in the clauses of the logical constants we introduce a \emph{bunch-extension} relation. Third, we introduce formally a notion of \emph{contextual bunch} (i.e., 
a bunch with a hole) in order to handle substitution in bunches parameterizing resources in the support relation. The importance of this structure can be seen from the (Inf) clause 
that we shall need for BI (see Figure~\ref{fig:suppBI}): 
\begin{equation*}
\begin{array}{l@{\quad}c@{\quad}l@{\qquad}r}
   \varGamma \supp{\at{R}(\cdot)}{\base{B}} \phi & \mbox{iff} & 
   \mbox{$\forall \base{X} \baseGeq \base{B}$, $\forall \at{U} \in \Bunches(\At)$, 
   if $\supp{\at{U}}{\baseX} \varGamma$, then $\supp{\at{R(U)}}{\baseX} \phi$}  & \mbox{(Inf)} 
\end{array}
\end{equation*}
That is, the contextual bunch $\at{R}(\cdot)$ manages the construction of the bunch of atomic resources in a way that matches the structure of the required judgement.


\section{The Logic of Bunched Implications} \label{sec:bi}

In this section, we define \BI{} and give the technical background required for the development of this paper. First, we define its syntax and consequence relation in Section~\ref{sec:bi:syntax}. Second, we give the meta-theory required to understand the use of \emph{bunches} in this paper in Section~\ref{sec:bunches_modulo_ce}. We defer motivating BI as a logic to Pym and O'Hearn~\cite{o1999logic} to concentrate on the technical question of its P-tS in this paper. \vspace{2mm}

\noindent \textbf{Notations \& Conventions.} Henceforth, we fix a denumerable set of atomic propositions $\At$, and the following conventions: 
$\at{p}, \at{q}, \dots$ denote atoms; $\at{P}, \at{Q}, \dots$ denote \emph{bunches} of atoms; $\phi, \psi, \theta, \dots$ denote formulae; $\varGamma, \varDelta, \dots$ denote \emph{bunches} of formulae. We also use $\base{A},\baseB,\baseC, \ldots$ to denote bases (defined below).

\subsection{Syntax and Consequence}
\label{sec:bi:syntax}
In this section, we present the syntax of \BI{} and define the consequence relation in terms of a natural deduction system in sequent calculus form. We follow standard presentations (cf. O'Hearn and Pym~\cite{o1999logic}) except that it also introduces \emph{contextual bunches}, which are used throughout this paper but are not part of the standard background of \BI{}. 

\begin{definition}[Formulae]
		 The set of formulae $\Formulas$ (over $\Atoms$) is defined by the following grammar:
		\[
		\phi ::= \at{p} \in \Atoms \mid \top \mid \bot \mid \mtop \mid \phi \land \phi \mid \phi \lor \phi \mid \phi \to \phi \mid \phi \mand \phi \mid \phi \wand \phi
		\]
	\end{definition}
 
\begin{definition}[Bunches] \label{def:bunch}
	Let $\set{X}$ be a set of syntactic structures. The set of bunches over $\set{X}$ is denoted $\Bunches(\set{X})$ and is defined by the following grammar:
		\[
		\varGamma ::= x \in \set{X} \mid \aunit \mid \munit \mid \varGamma \fatsemi \varGamma \mid \varGamma \fatcomma \varGamma
		\]
    The $\fatsemi$ is the additive context-former, and the $\aunit$ is the additive unit; the $\fatcomma$ is the multiplicative context-former, and the $\munit$ is the multiplicative unit.
\end{definition}

In the bunch $\Gamma_1 \acomma \Gamma_2$ (resp. $\Gamma_1 \mcomma \Gamma_2$), we call $\aacomma$ (resp. $\mmcomma$) the \emph{principal} context-former.
If a bunch $\varDelta$ is a sub-tree of a bunch $\varGamma$, then $\varDelta$ is a \emph{sub-bunch} of $\varGamma$. Importantly, sub-bunches are sensitive to occurrence --- for example, in the bunch $\varDelta \mcomma \varDelta$, each occurence of $\varDelta$ is a different sub-bunch. We may write $\varGamma(\varDelta)$ to express that $\varDelta$ is a sub-bunch of $\varGamma$. This notation is traditional for \BI{}~\cite{o1999logic}, but requires sensitive handling. To avoid confusion, we never write $\varGamma$ and later $\varGamma(\varDelta)$ to denote the same bunch, but rather maintain one presentation.  The substitution of $\varDelta'$ for $\varDelta$ in $\varGamma$ is denoted $\varGamma(\varDelta)[\varDelta \mapsto \varDelta']$. To emphasize: this is not a universal substitution, but a substitution of the particular occurrence of $\varDelta$ meant in writing $\varGamma(\varDelta)$. The bunch $\varGamma(\varDelta)[\varDelta \mapsto \varDelta']$ may be denoted $\varGamma(\varDelta')$ when no confusion arises.

\begin{example} \label{ex:bunch-sub}
    Consider the bunches $\Gamma(q):= \munit \mcomma (p \acomma q)$. We have by direct substitution $\Gamma(\aunit) = \munit \mcomma (p \acomma \aunit)$. 
\end{example}

Substitution in bunches is an essential part of expressing \BI{}. It plays the part of concatenation of lists and union of sets and multisets in the presentations of other logics --- see, for example, van Dalen~\cite{vanDalen}. Indeed, so integral are substitutions that it is instructive to have a notion of \emph{contextual bunch} that allows us to handle substitutions smoothly.  

\begin{definition}[Contextual Bunch]
    A \emph{contextual bunch} (over $\set{X}$) is a function $b:\Bunches(\set{X}) \to \Bunches(\set{X})$ for which there is $\varGamma(\varDelta) \in \Bunches(\set{X})$ such that $b(\varSigma) = \varGamma(\varSigma)$ for any $\varSigma \in \Bunches(\set{X})$. The set of all contextual bunches (over $\set{X}$) is $\dot{\Bunches}(\set{X})$.
\end{definition}

Importantly, the identity on $\Bunches(\set{X})$ is a contextual bunch, it is denoted $(\cdot)$.
We think of a contextual bunch as  a \emph{bunch with a hole} --- cf. \emph{nests with a hole} in work by Br\"unner~\cite{Brunnler2009}. That is, $\dot{\Bunches}(\set{X})$ can be identified as the subset of $\Bunches(\set{X}\cup\{\circ\})$, where $\circ \not \in \set{X}$, in which bunches contain a single occurrence of $\circ$. Specifically,  if $b(x) =\varGamma(x)$, then identify $b$ with $\varGamma(\circ) \in \Bunches(\set{X}\cup\{\circ\})$. We write $\varGamma(\cdot)$ for the contextual bunch identified with $\varGamma(\circ)$. 

 \begin{example}\label{ex:bunch-with-hole}
 The contextual bunch $b: x \mapsto \phi_1 \acomma ( (\phi_2 \acomma \munit \acomma (\phi_3 \mcomma \phi_4) ) \mcomma (\phi_5 \acomma (x )) )$ is identified with $\varGamma(\cdot)$, where $\varGamma(\circ):=\phi_1 \acomma ( (\phi_2 \acomma \munit \acomma (\phi_3 \mcomma \phi_4) ) \mcomma (\phi_5 \acomma \circ ) )$.
 \end{example}

In \BI{}, bunches serve as the data structures over which collections of formulae are organized. Each of the two context-formers (i.e., $\aacomma$ and $\mmcomma$) is intended behave as a multiset constructor, possessing the usual properties of commutativity and associativity. For example, a bunch $\Delta \mcomma \Delta'$ (resp. $\Delta \acomma \Delta'$) is meant to represent the same data as the bunch $\Delta' \mcomma \Delta$ (resp. $\Delta \mcomma \Delta'$). For clarity, we explicitly define the equivalence relation that captures this intended meaning, called \emph{coherent equivalence}. This is similar to how lists for the contexts of classical logic sequents may be understood `up to permutation' to render multisets. Importantly, in BI, the two context-formers do not distribute over each other.

\begin{definition}[Coherent Equivalence] \label{def:coherent-equivalence}
	Two bunches $\varGamma,\varGamma' \in \Bunches(\set{X})$ are \emph{coherently equivalent} when $\varGamma \equiv \varGamma'$, where $\equiv$ is the least relation satisfying: 
	
	\begin{itemize}[label=--]
		\item commutative monoid equations for $\fatsemi$ with unit $\aunit$--- that is, for arbitrary $\Gamma_1, \Gamma_2, \Gamma_3 \in \Bunches(\set{X})$,
            \[
            (\Gamma_1 \acomma \Gamma_2) \acomma \Gamma_3 \equiv \Gamma_1 \acomma (\Gamma_2 \acomma \Gamma_3) \qquad \Gamma_1 \acomma \Gamma_2 \equiv \Gamma_2 \acomma \Gamma_1 \qquad \Gamma_1 \acomma \aunit \equiv \Gamma_1
            \]
		\item commutative monoid equations for $\fatcomma$ with unit $\munit$--- that is, for arbitrary $\Gamma_1, \Gamma_2, \Gamma_3 \in \Bunches(\set{X})$,
            \[
            (\Gamma_1 \mcomma \Gamma_2) \mcomma \Gamma_3 \equiv \Gamma_1 \mcomma (\Gamma_2 \mcomma \Gamma_3) \qquad \Gamma_1 \mcomma \Gamma_2 \equiv \Gamma_2 \mcomma \Gamma_1 \qquad \Gamma_1 \mcomma \munit \equiv \Gamma_1
            \]
		\item coherence; that is, if $\varDelta\equiv\varDelta' $ then $\varGamma(\varDelta) \equiv \varGamma(\varDelta')$. 
	\end{itemize} 
	
\end{definition}

\begin{example}[Example~\ref{ex:bunch-sub} cont'd]
    Recall the bunch $\Gamma(q):= \munit \mcomma (p \acomma q)$ from Example~\ref{ex:bunch-sub}. Observe, $\Gamma(\aunit) \equiv p$ by the the unitality of $\aunit$ and $\munit$ with respect to $\aacomma$ and $\mmcomma$, respectively.
\end{example}

 Further discussion about bunches as data structures is given in Section~\ref{sec:bunches_modulo_ce} and Section~\ref{sec:bes}. Presently, we continue defining \BI{}.

	\begin{definition}[Sequent] \label{def:sequent}
		A \emph{sequent} is a pair $\varGamma \seq \phi $ in which $\varGamma\in \Bunches(\Formulas)$ is a bunch of formulae and $\phi$ is a formula.
	\end{definition}
 
That a sequent $\varGamma \seq \phi$ is a consequence of \BI{} is denoted $\varGamma \proves \phi$. In this paper, we characterize \BI{} by its natural deduction system $\system{NBI}$. We assume general familiarity with sequent calculus presentation of proof systems --- see, for example, Troelstra and Schwichtenberg~\cite{troelstra2000basic}.

\begin{definition}[Natural Deduction System $\system{NBI}$]
    System $\system{NBI}$ comprises the rules of Figure~\ref{fig:NBI}, in which $\phi, \psi, \chi \in \Formulas$, $\varGamma, \varDelta, \varDelta' \in \Bunches(\set{F})$, and $\hyperlink{NBI-exchange}{\exch}$ has the side-condition $\varGamma \equiv \varGamma'$ and  $i\in \{1,2\}$ in $\hyperlink{NBI-or-i}{\irn\lor}$.
\end{definition}

\begin{figure}[t]
\hrule
\vspace{2mm}
        \[
        \begin{array}{c}
        \infer[\myhypertarget{NBI-id}{\ax}]{\phi \seq \phi}{ } 
        \quad 
        \infer[\myhypertarget{NBI-exchange}{\exch}]{\varGamma \seq \phi}{\varGamma' \seq \phi} 
        \quad 
        \infer[\weak]{\varGamma( \varDelta \addcomma \varDelta') \seq \phi}{\varGamma(\varDelta) \seq \phi} 
        \quad 
        \infer[\cont]{\varGamma(\varDelta) \seq \phi}{\varGamma(\varDelta \addcomma \varDelta) \seq \phi} 
        \\[1.5ex]
        \infer[\irn \mtop]{\munit \seq \mtop}{} 
        \quad
        \infer[\ern \mtop]{\varGamma(\varDelta) \seq \chi}{\varGamma(\munit) \seq \chi & \varDelta \seq \mtop} 
        \quad 
        \infer[\irn \wand]{\varGamma \seq \phi \wand \psi}{\varGamma \multcomma \phi \seq \psi}
        \\[1.5ex]
        \infer[\ern \wand]{\varGamma \multcomma \varDelta \seq \psi}{\varGamma \seq \phi \wand \psi & \varDelta \seq \phi} 
        \quad 
        \infer[\irn \mand]{\varGamma \multcomma \varDelta \seq \phi \mand \psi}{\varGamma \seq \phi & \varDelta \seq \psi} 
        \quad 
        \infer[\ern \mand]{\varGamma(\varDelta) \seq \chi}{\varGamma(\phi \multcomma \psi) \seq \chi & \varDelta \seq \phi \mand \psi}
        \\[1.5ex]
        \infer[\irn\top]{\aunit \seq \top}{} 
        \quad
        \infer[\ern\top]{\varGamma(\varDelta) \seq \chi}{\varGamma(\aunit) \seq \chi & \varDelta \seq \top} 
        \quad 
        \infer[\irn\to]{\varGamma \seq \phi \to \psi}{\varGamma \acomma \phi \seq \psi} 
        \\[1.5ex]
        \infer[\ern\to]{\varGamma \acomma \varDelta \seq \psi}{\varGamma \seq \phi \to \psi & \varDelta \seq \phi} 
        \quad 
        \infer[\irn\land]{\varGamma \acomma \varDelta \seq \phi \land \psi}{\varGamma \seq \phi & \varDelta \seq \psi} 
        \quad 
        \infer[\ern\land]{\varGamma(\varDelta) \seq \chi}{\varGamma(\phi \acomma \psi) \seq \chi & \varDelta \seq \phi \land \psi}
        \\[1.5ex]
        \infer[\myhypertarget{NBI-or-i}{\irn\lor}]{\varGamma \seq \phi_1 \lor \phi_2}{\varGamma \seq \phi_i}
        \quad 
        \infer[\ern\lor]{\varGamma(\varDelta) \seq \chi}{\varGamma(\phi) \seq \chi & \varGamma(\psi) \seq \chi & \varDelta \seq \phi \lor \psi} 
        \quad 
        \infer[\ern\bot]{\varGamma \seq \phi}{\varGamma \seq \bot} 
        \end{array}
        \vspace{2mm}
        \]
\hrule
	\caption{Natural Deduction System $\system{NBI}$}
	\label{fig:NBI}
 \end{figure}

Let $\proves[\system{NBI}]$ denote provability in $\system{NBI}$. When we say that $\system{NBI}$ characterizes \BI{}, we mean the following: $ \varGamma \proves \phi \mbox{ iff } \varGamma \proves[\system{NBI}] \phi$.  Of course, \BI{} admits $\rn{cut}$: 
\begin{lemma}[Brotherston~\cite{Brotherston2012}] \label{lem:cut-admissibility}
 If $\varGamma(\chi) \proves \phi$ and $\varDelta \proves \chi$, then $\varGamma(\varDelta) \proves \phi$.
\end{lemma}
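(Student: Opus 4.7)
The plan is to prove the statement by a standard lexicographic induction on the pair $(|\chi|, h_1 + h_2)$, where $|\chi|$ is the syntactic complexity of the cut formula and $h_1, h_2$ are the heights of the two given derivations $\mathcal{D}_1$ of $\varGamma(\chi) \seq \phi$ and $\mathcal{D}_2$ of $\varDelta \seq \chi$. This is essentially the route taken by Brotherston~\cite{Brotherston2012}. Since $\system{NBI}$ is a natural deduction system in sequent form, the elimination rules already have the shape of cuts (their major premise is an arbitrary derivation), so much of the argument consists of reducing a cut to rules of strictly smaller complexity. The induction proceeds by case analysis on the last rule of $\mathcal{D}_1$, grouped into structural cases, commutative cases in which the displayed occurrence of $\chi$ is not principal, and principal (key) cases in which $\chi$ is the major premise of an elimination matching its head connective.

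For the structural cases (\hyperlink{NBI-exchange}{\exch}, $\weak$, $\cont$) and for rules in which $\chi$ lies passively in some subcontext, one applies the induction hypothesis to each premise at strictly smaller derivation height and reapplies the rule, using the contextual-bunch formalism of Section~\ref{sec:bi:syntax} to track the hole through each reduction. The key cases arise when the last rule of $\mathcal{D}_1$ eliminates $\chi$ itself. By a preliminary commutative reduction on $\mathcal{D}_2$ one may assume that its last rule is the introduction rule matching the head connective of $\chi$; then the cut on $\chi$ is replaced by cuts on its strict subformulas, which are admissible by the outer induction hypothesis on $|\chi|$. For example, if $\chi = \phi \mand \psi$ is eliminated by $\ern{\mand}$ with the minor premise $\varDelta \seq \phi \mand \psi$ coming from $\irn{\mand}$ applied to derivations of $\varDelta_1 \seq \phi$ and $\varDelta_2 \seq \psi$ (so $\varDelta \equiv \varDelta_1 \mcomma \varDelta_2$), the cut reduces to two successive cuts on $\phi$ and $\psi$ inside the premise $\varGamma'(\phi \mcomma \psi) \seq \phi'$ of $\ern{\mand}$; the cases for $\mtop$, $\top$, $\land$, $\lor$, and for the two implications are analogous.

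The principal difficulty is the interaction of the bunch structure with contextual substitution when permuting cuts through the context-binding elimination rules $\ern{\land}$, $\ern{\mand}$, $\ern{\top}$, $\ern{\mtop}$, $\ern{\lor}$. For each such rule one must distinguish subcases according to whether the distinguished occurrence of $\chi$ lies inside the major premise or inside the displayed subbunch of the minor premise, and then verify that the substitution $\varGamma(\chi) \mapsto \varGamma(\varDelta)$ and the coherent equivalence $\equiv$ are preserved by the reduction. Because $\aacomma$ and $\mmcomma$ do not distribute over one another, bunches cannot be collapsed to lists or multisets; instead, the contextual-bunch notation $\varGamma(\cdot) \in \dot{\Bunches}(\Formulas)$ is precisely the bookkeeping device that makes the commutative reductions work uniformly, ensuring that each reduction step yields a legitimate $\system{NBI}$ derivation of the intended conclusion.
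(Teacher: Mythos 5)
The paper itself offers no proof of this lemma---it is imported wholesale from Brotherston~\cite{Brotherston2012}---so the comparison can only be with the standard argument for a system like $\system{NBI}$. Judged on those terms, your proposal contains a genuine gap in exactly the place where its induction on $|\chi|$ is supposed to do its work. In the ``key cases'' you write that ``by a preliminary commutative reduction on $\mathcal{D}_2$ one may assume that its last rule is the introduction rule matching the head connective of $\chi$.'' For a natural deduction system this cannot be assumed: the derivation of $\varDelta \seq \chi$ may end with $\ax$, with $\ern\bot$, with $\ern\lor$, or with any elimination rule whose conclusion happens to be $\chi$, and there is no commutative reduction that converts such a derivation into one ending in the matching introduction---that would amount to a normalization theorem for $\system{NBI}$, a far stronger result that you have not proved and may not invoke. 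As written, the principal-cut reductions therefore do not go through, and the lexicographic measure on $(|\chi|, h_1 + h_2)$ never gets to earn its keep.

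The underlying misdiagnosis is that $\rn{cut}$ for a sequent-style \emph{natural deduction} system is not a cut-elimination problem at all but a substitution lemma. You yourself observe that the elimination rules already take an arbitrary major premise and already perform substitutions of the form $\varGamma(\phi \mcomma \psi) \mapsto \varGamma(\varDelta)$; the correct conclusion to draw is that no principal-case analysis is needed. One proves the (slightly strengthened, simultaneous-occurrence) statement by a single structural induction on the derivation of $\varGamma(\chi) \seq \phi$: at the axiom $\chi \seq \chi$ one grafts in the derivation of $\varDelta \seq \chi$; under $\weak$ one weakens by the substituted bunch instead; under $\cont$ one substitutes into both copies of the contracted sub-bunch and then contracts $\varDelta_0[\chi \mapsto \varDelta] \acomma \varDelta_0[\chi \mapsto \varDelta]$ (this is the one real subtlety, and the reason the single-occurrence statement must be strengthened before the induction runs); under $\exch$ one uses that $\equiv$ is a congruence for sub-bunch substitution; and in every logical rule the designated occurrence lies in identifiable premises to which the induction hypothesis applies directly. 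No induction on $|\chi|$ and no case analysis on the last rule of $\mathcal{D}_2$ is required. Your closing paragraph on tracking holes through the context-binding eliminations is the right bookkeeping concern, but it belongs to this simpler induction, not to the principal-cut machinery you built around it.
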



\subsection{Bunches Modulo Coherent Equivalence} \label{sec:bunches_modulo_ce}
As stated in Section~\ref{sec:bi}, bunches as data structures are really the syntax trees in Definition~\ref{def:bunch} modulo coherent equivalence (Definition~\ref{def:coherent-equivalence}). Gheorghiu and Marin~\cite{Alex:Focusing} gave a formal description of $\Bunches(\set{X})/\scriptsize{\equiv}$ as \emph{nested multisets of two kinds} that delivers this view. While the formalism is intuitive, it ultimately requires more metatheory and background work than is useful for the purpose of the current paper. Therefore, here we forego a formal treatment, and instead illustrate the main idea. We will move smoothly between bunches and bunches modulo coherent equivalence when no confusion arises. Regarding bunches in this way allows us to perform structural analysis (e.g., inductions) and not repeat trivial case-distinctions arising from associativity and commutative of the context-formers.

Essentially, bunches modulo coherent equivalence are trees satisfying the following:
    \begin{itemize}[label = --]
        \item the internal nodes are either $\aacomma$ or $\mmcomma$
        \item the leaves are either $x \in \set{X}$, $\aunit$, or $\munit$ 
        \item for a node $\aacomma$\,, all the non-leaf children are $\mmcomma$
        \item for a node $\mmcomma$\,, all its non-leaf children are $\aacomma$ 
        \item no child of a $\aacomma$-node is labelled with $\aunit$ 
        \item no child of a $\mmcomma$-node is labelled with $\munit$. 
    \end{itemize}
In other words, $\aacomma$- and $\mmcomma$-nodes appear alternately, and units are removed if they can be. This produces a layered effect of alternating additive and multiplicative combinations of data. 

Intuitively, we may identify bunches as being additive or multiplicative according to whether their principal context-former is an additive or multiplicative context-former or unit, respectively. This is not quite a partition as bunches consisting of just a formula are considered both to be additive and multiplicative. An additive (resp. multiplicative) bunch that is not a formula may then be expressed $\varSigma_1 \aacomma \cdots \aacomma \varSigma_k$ (resp. $\varSigma_1 \mmcomma \cdots \mmcomma \varSigma_k$), where $\varSigma_i$ for $i=1,\dots, k$ is a multiplicative (resp. additive) bunch. 


 Henceforth, we do not distinguish bunches and bunches modulo coherent equivalence unless necessary. In the remainder of this section, we provide some simple background that illustrate this quotients reading of bunches and which will be useful later.

\begin{definition}[Nest Depth] \label{def:nestDepth}
    Let $\varGamma \in \Bunches(\set{X})$ be a bunch, $\varDelta$ be a sub-bunch of $\varGamma$. The nest depth of $\varDelta$ in $\varGamma $ --- denoted $\nestDepth{\varGamma}{\varDelta}$ --- is defined as follows:
    \begin{itemize}[label=--]
        \item if $\varGamma = \varDelta$, then $\nestDepth{\varGamma}{\varDelta}:=0$
        \item if $\varGamma$ is of the form $\varSigma_1 \commavar \cdots \commavar \varSigma_k$, where $\commavar \in \{\aacomma,\mmcomma\}$, and $\varDelta$ is a sub-bunch of $\varSigma_i$, then $\nestDepth{\varGamma}{\varDelta}:= \nestDepth{\varSigma_i}{\varDelta}+1$.
    \end{itemize}
\end{definition}

In other words, $\nestDepth{\Gamma}{\Delta}=0$ if $\varDelta$ is $\varGamma$ itself, and otherwise one more than the number of context-former alternations between the principal context-former of $\varDelta$ and the principal context-former of $\varGamma$.

\begin{definition}[Hole Depth] \label{def:holeDepth}
    Let $b \in \dot{\Bunches}(\set{X})$ be a contextual bunch. Identify $b$ with the element $\varGamma(\circ)$ in $\Bunches(\set{X}\cup\{\circ\})$. The \emph{hole depth} of $b$ --- denoted $\holeDepth{b}{}$ --- is $\nestDepth{\varGamma(\circ)}{\circ}$.
\end{definition}
\begin{example}
Let $\varGamma(\cdot)$ be as in Example~\ref{ex:bunch-with-hole}; that is, $\varGamma(\circ) = \phi_1 \acomma ( (\phi_2 \acomma \munit \acomma (\phi_3 \mcomma \phi_4) ) \mcomma (\phi_5 \acomma \circ ) )$.
Then $\holeDepth{\varGamma(\cdot)}{}$ can be calculated as follows:
\begin{align*}
    \holeDepth{\varGamma(\cdot)}{} & = \nestDepth{\varGamma(\circ)}{\circ} = \nestDepth{(\phi_2 \acomma \munit \acomma (\phi_3 \mcomma \phi_4) ) \mcomma (\phi_5 \acomma \circ ) }{\circ} + 1 \\
    & = \nestDepth{\varphi \mcomma \circ}{\circ} + 2 = \nestDepth{\circ}{\circ} + 3 = 3
\end{align*}
This is exactly the number of the alternations of context-formers starting from $\circ$ to the principal context-former in $\Gamma(\circ)$.
\end{example}
%
\section{Base-extension Semantics for BI} \label{sec:bes}

This section defines the B-eS for \BI{}. We defer elaboration and motivation for the sequence of steps and the final form of the semantics to earlier work on IMLL~\cite{gheorghiu2023imll} and other discussions~\cite{GGP2024stateeffect}. This enables us to concentrate on the technical questions of proving soundness and completeness. 

First, in Section~\ref{sec:der-in-a-base}, we define derivability in a base, departing somewhat from existing treatments of atomic systems in P-tS. Second, in Section~\ref{sec:support}, we define the \emph{support} judgement that makes the semantics. Third, some preliminary results that show that support has the expected behaviour with respect to structurality and monotonicity in Section~\ref{sec:preliminary}. Fourth, a toy example of the application to rule-based access control is presented in Section~\ref{sec:eg-access-control}.

\subsection{Derivability in a Base} \label{sec:der-in-a-base}
The definition of a B-eS begins with defining systems of rules called \emph{bases} and the corresponding notion of \emph{derivability in a base}. This forms the base case of the semantics. Traditionally, bases are presented in the style of  natural deduction rules in the sense of Gentzen~\cite{Gentzen} --- see, for example, Peicha and Schroeder-Heister~\cite{Schroeder2016atomic,Piecha2017definitional}. This is  difficult in the case of BI as more careful context-management is required in order to accommodate bunching. 
For this reason, we depart from the standard presentation of bases and atomic rules in P-tS, and present them in a more general sequent calculus form. 

\begin{definition}[Sequent of Atoms] \label{def:atomic-sequent}
    A \emph{sequent of atoms} is a pair $\at{P} \seq \at{p}$ in which $\at{P} \in \Bunches(\Atoms)$ and $\at{p} \in \Atoms$. 
\end{definition}

\begin{definition}[Atomic Rule] \label{def:atomic-rule}
    An \emph{atomic rule} is a rule-figure with a (finite) set of sequents of atoms above, and a sequent of atoms below,
    \begin{equation*}
    \label{eq:define-atomic-rule}
        \infer{\at{P} \seq \at{p}}{\at{P}_1 \seq \at{p}_1 & \hdots & \at{P}_n \seq \at{p}_n}
    \end{equation*}
    In an atomic rule, the sequents above are called the \emph{premisses} and the sequent below is called the \emph{conclusion}. An atomic rule with an empty set of premisses is called an atomic axiom.
\end{definition}
Importantly, atomic rules are taken \emph{per se} and not closed under substitution; that is, for a given rule, we do not consider `instances' of it, rather it applies only to those atoms explicit in it. We may write $(\at{P}_1 \seq \at{p}_1, \dots, \at{P}_n \seq \at{p}_n) \Rightarrow (\at{P} \seq \at{p})$ to denote the atomic rule in Definition~\ref{def:atomic-rule} --- note, axioms are the cases when the left-hand side is empty.

A collection of atomic rules determines a \emph{base}. Intuitively, a base is a system of inference that is logic \emph{free} and thus suitable to ground the semantics of the logical constants.

\begin{definition}[Base] \label{def:base}
A base is a set of atomic rules.
\end{definition}

We write $\baseB, \baseC, \dots$ to denote bases, and $\emptyset$ to denote the empty base (i.e., the base with no rules). We say $\base{C}$ is an \emph{extension} of $\base{B}$ if $\baseC$ is a superset of $\baseB$; hence it is denoted as $\base{C} \supseteq \base{B}$. 

\begin{example}\label{ex:atomic-rule}
The following is an example of an atomic rule, in which $\at{p},\at{q},\at{r}, \at{s} \in \Atoms$:
\[
    \infer{\at{p} \fatsemi \at{q} \seq \at{r}}{\at{p} \seq \at{s} & \at{s} \fatsemi \at{q} \seq \at{r}}
\]
It may be expressed inline as follows: $((\at{p} \seq \at{s}), (\at{s} \fatsemi \at{q} \seq \at{r})) \Rightarrow (\at{p} \fatsemi \at{q} \seq \at{r})$. The set consisting of just this atomic rule is a base $\baseB$. 

Let $\baseC$ be the set of all atomic rules taking the same form as the above while  $\at{p}$,$\at{q}$,$\at{r}$, and $\at{s}$ range over $\At$ --- that is,
\[
 \baseC := \left\{ 
    \raisebox{-.5em}{
        \infer{\at{x} \fatsemi \at{y} \seq \at{z}}{
            \at{x} \seq \at{w} & 
            \at{w} \fatsemi \at{y} \seq \at{z}
        }
}  
\mid \at{x},\at{y},\at{z}, \at{w} \in \Atoms\right\}
\]
Observe, $\baseC \supseteq \baseB$ and $\baseB \neq \baseC$.
\end{example}

Intuitively, $\Rightarrow (\at{P} \seq \at{p})$ means that the  sequent of atoms  $(\at{P} \seq \at{p})$ may be concluded whenever, while $(\at{P}_1 \seq \at{p}_1, \dots, \at{P}_n \seq \at{p}_n) \Rightarrow (\at{P} \seq \at{p})$ means that one may derive $\at{P} \seq \at{p}$ if one has derived $(\at{P}_i \seq \at{p}_i)$ for $i=1,...,n$. However, we regard the context-formers as \emph{meta-logical} so that it is essential that derivability in a base respects \emph{coherent equivalence} (see Definition~\ref{def:coherent-equivalence}). Similarly, the additive context-former must admit weakening and contractions because that is what it means as a data constructor in BI. Hence, we arrive at the following definition of derivability in a base, which is in the sequent calculus format (cf. the natural deduction format used for \IPL{}~\cite{Sandqvist2015base} and \MILL{}~\cite{gheorghiu2023imll}):


\begin{definition}[Derivability in a Base] \label{def:derivability-in-a-base}
    Let $\base{B}$ be a base. Derivability in $\base{B}$ --- denoted $\proves[\baseB]$ --- is the smallest relation satisfying the following:
    \begin{itemize}[label=--]
    \item \myhypertarget{def:derivability-in-a-base:taut}{\textsc{taut}}. If $\at{p} \in \Atoms$, then $\at{p} \proves[\baseB] \at{p}$
    \item  \myhypertarget{def:derivability-in-a-base:initial}{\textsc{initial}}. If $\Rightarrow (\at{P} \seq \at{p})\in \base{B}$, then $\at{P} \proves[\baseB] \at{p}$
    \item \myhypertarget{def:derivability-in-a-base:rule}{\textsc{rule}}. If $(\at{P}_1 \seq \at{p}_1 , \hdots,  \at{P}_n \seq \at{p}_n) \Rightarrow (\at{P} \seq \at{p}) \in \base{B}$ and
 $\at{P}_1 \proves[\baseB] \at{p}_1, \dots, \at{P}_n \proves[\baseB] \at{p}_n$, then $\at{P} \proves[\baseB] \at{p}$
    \item  \myhypertarget{def:derivability-in-a-base:weak}{\textsc{weak}}. If $\at{P}(\at{Q}) \proves[\baseB] \at{p}$, then $\at{P}(\at{Q} \fatsemi \at{Q}') \proves[\baseB] \at{p}$ for any $\at{Q}' \in \Bunches(\Atoms)$
        \item \myhypertarget{def:derivability-in-a-base:cont}{\textsc{cont}}. If $\at{P}(\at{Q}\fatsemi \at{Q}) \proves[\baseB] \at{p}$, then $\at{P}(\at{Q}) \proves[\baseB] \at{p}$
        \item \myhypertarget{def:derivability-in-a-base:exch}{\textsc{exch}}. If $\at{P}(\at{Q}) \proves[\baseB] \at{p}$ and $\at{Q} \equiv \at{Q}'$, then $\at{P}(\at{Q}') \proves[\baseB] \at{p}$.
        \item \myhypertarget{def:derivability-in-a-base:cut}{\textsc{cut}}. If $\at{T} \proves[\baseB] \at{q}$ and $\at{S}(\at{q}) \proves[\baseB] \at{p}$, then $\at{S}(\at{T}) \proves[\baseB] \at{p}$.
    \end{itemize}
\end{definition}

\begin{example}[Example~\ref{ex:atomic-rule} cont'd] \label{ex:derivability-in-a-base}
    The expressive power of $\baseB$ and $\baseC$ is the same as $\emptyset$ because of \myhyperlink{def:derivability-in-a-base:cut} --- that is, $\at{P} \proves[\baseB] \at{p}$ iff $\at{P} \proves[\baseC] \at{p}$ iff $\at{P} \proves[\emptyset] \at{p}$. 
\end{example}

As Example~\ref{ex:derivability-in-a-base} illustrates, it is possible for a base to satisfy some of \myhyperlink{def:derivability-in-a-base:weak}, \myhyperlink{def:derivability-in-a-base:cont}, \myhyperlink{def:derivability-in-a-base:exch}, \myhyperlink{def:derivability-in-a-base:cut} using just \myhyperlink{def:derivability-in-a-base:initial} and \myhyperlink{def:derivability-in-a-base:rule}. Indeed, in Section~\ref{sec:completeness}, we construct a base $\base{N}$ that satisfies all of them. However, we require these properties to hold of arbitrary bases, and that may not be the case. For this reason, we include properties as closure conditions on derivability in a base. Another possibility is to restrict the notion of base to be exactly those sets of atomic rules that satisfies these conditions, but this option requires wanton repetition of infinitely many structural rules in each base. 


To conclude this section, we state an obvious but important fact about derivability in a base: 
\begin{proposition}\label{prop:der-monotone-on-bases}
     If $\at{T} \proves[\baseB] \at{p}$ and $\baseC \supseteq \baseB$, then $\at{T} \proves[\baseC] \at{p}$.
\end{proposition}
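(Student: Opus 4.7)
The plan is to exploit the fact that derivability in a base is defined as the \emph{smallest} relation satisfying the closure conditions of Definition~\ref{def:derivability-in-a-base}, which gives us a convenient induction principle. Fix $\baseC \baseGeq \baseB$, and consider the relation $R$ on $\Bunches(\Atoms) \times \Atoms$ defined by $\at{T} \, R \, \at{p}$ iff $\at{T} \proves[\baseC] \at{p}$. The strategy is to verify that $R$ satisfies every closure condition in the definition of $\proves[\baseB]$; by minimality of $\proves[\baseB]$, we then get $\proves[\baseB] \,\subseteq\, R$, which is precisely the claim.

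The verification is almost mechanical, splitting into two kinds of cases. For the conditions that depend on the ambient base --- namely \myhyperlink{def:derivability-in-a-base:initial} and \myhyperlink{def:derivability-in-a-base:rule} --- the hypothesis $\baseC \baseGeq \baseB$ is used: any atomic rule of $\baseB$ is already a rule of $\baseC$, so an application of \myhyperlink{def:derivability-in-a-base:initial}/\myhyperlink{def:derivability-in-a-base:rule} over $\baseB$ is witnessed by the corresponding clause for $\baseC$. For the remaining conditions --- \myhyperlink{def:derivability-in-a-base:taut}, \myhyperlink{def:derivability-in-a-base:weak}, \myhyperlink{def:derivability-in-a-base:cont}, \myhyperlink{def:derivability-in-a-base:exch}, and \myhyperlink{def:derivability-in-a-base:cut} --- no rule-membership is involved, and $R$ inherits them directly from the very definition of $\proves[\baseC]$.

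Equivalently, one can phrase the argument as an induction on (the generation of) the derivation of $\at{T} \proves[\baseB] \at{p}$, with a case for each clause of Definition~\ref{def:derivability-in-a-base}; each case either cites the analogous clause for $\proves[\baseC]$ directly (using $\baseC \baseGeq \baseB$ when a rule is invoked) or applies the induction hypothesis before doing so. There is no real obstacle here: the statement is essentially a syntactic monotonicity fact, and the only subtlety is remembering that ``smallest relation satisfying $\dots$'' \emph{is} precisely what licenses induction over the closure conditions. I would therefore present the proof compactly in a single paragraph, listing each clause and the one-line reason it transfers to $\baseC$.
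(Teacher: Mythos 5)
Your proposal is correct and is essentially the paper's own argument: the paper proves this by induction on the derivation of $\at{T} \proves[\baseB] \at{p}$, using that every rule of $\baseB$ is a rule of $\baseC$, and your "smallest relation satisfying the closure conditions" formulation is just the fixed-point phrasing of that same induction, as you yourself note. The clause-by-clause verification you sketch is exactly what the paper leaves implicit.
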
 
\begin{proof}
    Follows by induction on the derivation of $\at{T} \proves[\baseB] \at{p}$ since all the rules of $\baseB$ are contained in $\baseC$.
\end{proof}

\subsection{Support in a Base} \label{sec:support}

Derivability in a base is the starting point for the B-eS of \BI{}. That is, it gives the semantics of the non-logical parts of BI --- namely, the atoms. The semantics of the logical constants is expressed in terms of a support judgement that unfolds according the clauses, analogous to the satisfaction relation used in model-theoretic semantics. This section is about defining the support judgement.

Intuitively, support (in a base) for \BI{} combines the notion of support (in a base) for \IPL{}~\cite{Sandqvist2015base} and for \IMLL{}~\cite{gheorghiu2023imll}. However, the use of bunches as data-structures renders it more complex than in either of the preceding works. As in the B-eS for \IMLL{}, to handle substructurality, support carries a collection of atoms thought of as \emph{resources}. In \BI{}, that collection is a \emph{bunch}. Hence, it is necessary to accommodate the particular idiosyncrasies of this data structure. To this end, we use \emph{bunch-extensions}, which appear elsewhere in the semantics and proof theory of \BI{}, but without uniform treatment --- see, for example, Gheorghiu et al.~\cite{Alex:Samsonschrift}, where it is in-advisably called \emph{weak coherence}.

\begin{definition}[Bunch-extension]\label{def:bunch-stronger-than}
    The \emph{bunch-extension} relation $\bunchStrongerThan$ is the least relation satisfying: 
    \begin{itemize}[label=--]
        \item if $\varGamma \equiv \varGamma'[\varDelta \mapsto (\varDelta \fatsemi \varDelta')]$, then $\varGamma \bunchStrongerThan \varGamma'$
        \item if $\varGamma \bunchStrongerThan \varGamma'$ and $\varGamma' \bunchStrongerThan\varGamma''$, then $\varGamma \bunchStrongerThan\varGamma''$.
    \end{itemize}
\end{definition}

\begin{example} \label{ex:bunchstrongerthan}
       Observe $((\at{p} \mmcomma \at{q}) \aacomma \at{u} ) \mcomma \at{r} \bunchStrongerThan (\at{p} \mmcomma \at{q}) \mcomma \at{r}$, since $((\at{p} \mmcomma \at{q}) \aacomma \at{u} ) \mcomma \at{r} \equiv ((\at{p} \mmcomma \at{q}) \mcomma \at{r})[(\at{p} \mmcomma \at{q}) \mapsto ((\at{p} \mmcomma \at{q}) \aacomma \at{u})]$. Moreover,  $(\at{p} \mmcomma \at{q})  \mcomma \at{r} \bunchStrongerThan \at{p} \mmcomma (\at{q}  \mcomma \at{r})$, since $(\at{p} \mmcomma \at{q})  \mcomma \at{r} \equiv \at{p} \mmcomma (\at{q}  \mcomma \at{r})$. Hence, $((\at{p} \mmcomma \at{q}) \aacomma \at{u} ) \mcomma \at{r} \bunchStrongerThan \at{p} \mmcomma (\at{q}  \mcomma \at{r})$.
\end{example} 

\begin{proposition}[Compositionality of Bunch-extension] 
\label{prop:bunch-weaker-compositional}
    If $\at{T} \bunchWeakerThan \at{S}$, then $\at{P(T)} \bunchWeakerThan \at{P(S)}$.
\end{proposition}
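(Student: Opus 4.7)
The plan is to induct on the derivation of $\at{T} \bunchWeakerThan \at{S}$ (equivalently, $\at{S} \bunchStrongerThan \at{T}$) according to the two-clause inductive definition of $\bunchStrongerThan$ in Definition~\ref{def:bunch-stronger-than}. Since the relation is generated by (i) a single ``atomic augmentation'' step composed with coherent equivalence and (ii) transitive closure, the induction has exactly two cases, and the contextual bunch $\at{P}(\cdot)$ can be ``pushed through'' each.

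For the base case, suppose $\at{S} \equiv \at{T}'[\at{D} \mapsto (\at{D} \fatsemi \at{D}')]$ where $\at{T}' \equiv \at{T}$ and $\at{D}$ is a specific sub-bunch occurrence in $\at{T}'$. I would argue that the hole in $\at{P}(\cdot)$ is, by definition of a contextual bunch, a sub-bunch occurrence disjoint from any sub-bunch occurrence lying inside $\at{T}$. Hence the substitution $[\at{D} \mapsto (\at{D} \fatsemi \at{D}')]$ acting in the $\at{T}$-region commutes with plugging the hole of $\at{P}(\cdot)$, giving
\[
\at{P}(\at{S}) \;\equiv\; \at{P}\bigl(\at{T}'[\at{D} \mapsto (\at{D} \fatsemi \at{D}')]\bigr) \;\equiv\; \at{P}(\at{T}')[\at{D} \mapsto (\at{D} \fatsemi \at{D}')].
\]
Using coherence (the last clause of Definition~\ref{def:coherent-equivalence}) to replace $\at{P}(\at{T}')$ by $\at{P}(\at{T})$ up to $\equiv$, we then read off $\at{P}(\at{S}) \bunchStrongerThan \at{P}(\at{T})$ directly from the first clause of Definition~\ref{def:bunch-stronger-than}, which is the desired $\at{P}(\at{T}) \bunchWeakerThan \at{P}(\at{S})$.

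For the transitive case, if $\at{S} \bunchStrongerThan \at{T}$ arises from $\at{S} \bunchStrongerThan \at{R}$ and $\at{R} \bunchStrongerThan \at{T}$, then the induction hypothesis yields $\at{P}(\at{R}) \bunchWeakerThan \at{P}(\at{S})$ and $\at{P}(\at{T}) \bunchWeakerThan \at{P}(\at{R})$, and the second clause of Definition~\ref{def:bunch-stronger-than} (transitivity) concludes $\at{P}(\at{T}) \bunchWeakerThan \at{P}(\at{S})$.

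The main obstacle, such as it is, lies entirely in the base case: one has to be confident that the two substitution operations truly commute. This requires taking seriously the identification of contextual bunches with elements of $\Bunches(\set{X} \cup \{\circ\})$ containing a single occurrence of $\circ$, so that the hole position and the $\at{D}$-occurrence position are distinct addresses in the underlying tree; the rest is a genuine disjointness-of-occurrences argument rather than anything syntactic. It is also important here to treat bunches modulo coherent equivalence as discussed in Section~\ref{sec:bunches_modulo_ce}, so that trivialities from associativity, commutativity, and unitality do not clutter the case analysis.
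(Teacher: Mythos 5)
Your proof is correct, but it runs the induction in the opposite direction from the paper. The paper first observes that $\at{S} \bunchStrongerThan \at{T}$ iff $\at{S} \equiv \at{T}\sigma_1\ldots\sigma_n$ for a sequence of substitutions $\sigma_i$ of the form $[\varDelta \mapsto \varDelta \fatsemi \varDelta']$, and then inducts on $\holeDepth{\at{P}(\cdot)}$, writing $\at{P}(\cdot) = \at{P}'(\cdot)\commavar\at{Q}$ and commuting the entire substitution sequence past one context-former at a time. You instead perform rule induction on the derivation of $\at{T} \bunchWeakerThan \at{S}$: the transitivity clause is dispatched by the induction hypothesis, and all the content is concentrated in the single-substitution base case, where you push one substitution through the whole of $\at{P}(\cdot)$ at once via the identity $\at{P}(\at{T}[\at{D} \mapsto \at{D}\fatsemi\at{D}']) = \at{P}(\at{T})[\at{D} \mapsto \at{D}\fatsemi\at{D}']$. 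Your route avoids the normal-form characterization entirely; the paper's route makes the ``substitution commutes with context-plugging'' fact explicit by structural induction on the context rather than appealing to an informal occurrences argument. One small quibble with your base case: the word ``disjoint'' is not quite right --- after plugging, the occurrence of $\at{D}$ lies \emph{below} the hole position, as a descendant in the tree, rather than disjoint from it. What actually justifies the displayed identity is that the substitution acts entirely within the subtree rooted at the hole, so the occurrence of $\at{D}$ in $\at{T}$ determines a unique occurrence in $\at{P}(\at{T})$ at the composed address; with that correction (and coherence to propagate $\equiv$ through $\at{P}(\cdot)$), the first clause of Definition~\ref{def:bunch-stronger-than} applies and the argument goes through.
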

\begin{proof}
    It is easy to see that $\varGamma \bunchStrongerThan \varGamma'$ iff $\varGamma \equiv \varGamma'\sigma_1...\sigma_n$, where $\sigma_i$ is a substitution of the form $[\varDelta \mapsto \varDelta \acomma \varDelta']$. We proceed by induction on $\holeDepth{\at{P}(\cdot)}$.
    \begin{itemize}[label=--]
        \item \textsc{base case.} $\holeDepth{\at{P}(\cdot)} =0$. By Definition~\ref{def:holeDepth}, $\at{P}(\cdot)=(\cdot)$. Hence, the antecedent of the claim is identical to the consequent.
        \item  \textsc{inductive step.} $\holeDepth{\at{P}(\cdot)} =d+1$ for some $d>0$. By Definition~\ref{def:holeDepth}, $\at{P}(\cdot)= \at{P}'(\cdot)\commavar \at{Q}$, where $\holeDepth{\at{P}'(\cdot)} = d$, $\at{Q} \in \Bunches(\Atoms)$, and $\commavar \in \{\aacomma, \mmcomma\}$. By the induction hypothesis (IH), $\at{P}'(\at{T}) \bunchWeakerThan \at{P}'(\at{S})$.  Let $\tau_1,...,\tau_k$ witness the extension. Since $\at{P}(\at{S}) \equiv  \at{P}'(\at{S})\commavar \at{Q} \equiv \at{P}'(\at{T})\tau_1...\tau_k\commavar \at{Q} \equiv (\at{P}'(\at{T})\commavar \at{Q})\tau_1...\tau_k\equiv \at{P}(\at{T})\tau_1...\tau_k$, we have $\at{P(\at{S})} \bunchStrongerThan \at{P(\at{T})}$. 
    \end{itemize}
    This completes the induction.
\end{proof}

Having defined derivability in a base, and base- and bunch-extension, it is possible to give the semantics.

\begin{definition}[Support in a Base]
\label{def:support} 
    Supports in the base $\base{B}$ relative to either a contextual bunch or bunch of atoms $\at{P}$ is the relation $\supp{\at{P}}{\base{B}}$ defined by the clauses of Figure~\ref{fig:suppBI} in which $\at{p} \in \Atoms$, $\phi, \psi \in \Formulas$, $\at{S} \in \Bunches(\Atoms)$, $\at{R}(\cdot) \in \BunchesWithHole(\At)$ and $\varGamma, \varDelta \in \Bunches(\Formulas)$.
\end{definition}

It is easy to see that Figure~\ref{fig:suppBI} is an inductive definition on a structure of formulas that prioritizes the conjunctions and disjunction (i.e., $\mand$, $\land$, and $\lor$) over the implications ($\mto$ and $\to$); this follows similar treatments in the B-eS of IPL~\cite{Sandqvist2015hypothesis} and IMLL~\cite{gheorghiu2023imll}. The bunch-extension relation is used in \eqref{cl:BI-BeS-supp:acomma} and \eqref{cl:BI-BeS-supp:mcomma}. As explained in Section~\ref{sec:introduction}, the purpose of the bunches of atoms indexing the support relation is to express the susbtructurality of the logical constants and delivers a resource interpretation of the logic. This is made clear in application of BI via the B-eS to access control in Section~\ref{sec:eg-access-control}.

\begin{figure}[t]
    \hrule
    \begin{alignat}{3}
        & \supp{\at{S}}{\base{B}} \at{p} \quad 
        && \text{iff} \quad 
        && \at{S} \proves[\base{B}]\at{p} 
        \tag{At} \label{cl:BI-BeS-supp:At} \\
        %
        & \supp{\at{S}}{\base{B}} \phi \land \psi \quad
        && \text{iff} \quad 
        && \forall \base{X} \baseGeq \base{B}, \forall \at{U}(\cdot) \in \BunchesWithHole(\At), \forall \at{p} \in \At,  \text{ if } \phi \addcontext \psi\supp{\at{U}(\cdot)}{\base{X}} \at{p}, \text{ then} \supp{\at{U(S)}}{\base{X}} \at{p} 
        \tag{$\land$} \label{cl:BI-BeS-supp:conjunction}   \\
        %
        & \supp{\at{S}}{\base{B}} \phi \mand \psi \quad
        && \text{iff} \quad
        && \forall \base{X} \baseGeq \base{B}, \forall \at{U}(\cdot) \in \BunchesWithHole(\At), \forall \at{p} \in \At,  \text{ if } \phi \multcontext \psi\supp{\at{U}(\cdot)}{\base{X}} \at{p}, \text{ then} \supp{\at{U(S)}}{\base{X}} \at{p} 
        \tag{$\mand$} \label{cl:BI-BeS-supp:mand} \\
        %
        & \supp{\at{S}}{\base{B}} \phi \lor \psi \quad 
        && \text{iff} \quad 
        && \forall \base{X} \baseGeq \base{B}, \forall \at{U}(\cdot) \in \BunchesWithHole(\At), \forall \at{p} \in \At, \notag
        \\ 
        & && 
        && \text{if } \phi \supp{\at{U}(\cdot)}{\base{X}} \at{p} \text{ and } \psi \supp{\at{U(\cdot)}}{\base{X}} \at{p}, \text{ then }  \supp{\at{U(S)}}{\base{X}} \at{p} 
        \tag{$\lor$} \label{cl:BI-BeS-supp:disjunction} \\
        %
        & \supp{\at{S}}{\base{B}} \phi \to \psi \quad
        && \text{iff} 
        && \phi \supp{\at{S} \aacomma (\cdot)}{\base{B}} \psi \tag{$\to$} \label{cl:BI-BeS-supp:implication} \\
        %
        & \supp{\at{S}}{\base{B}} \phi \wand \psi \quad
        && \text{iff} \quad 
        && \phi \supp{\at{S} \mmcomma(\cdot)}{\base{B}} \psi 
        \tag{$\wand$} \label{cl:BI-BeS-supp:wand} \\
        %
        & \supp{\at{S}}{\base{B}} \bot \quad
        && \mbox{iff} \quad 
        && \forall \at{U}(\cdot) \in \BunchesWithHole(\At), \forall \at{p} \in \At, \supp{\at{U(S)}}{\baseB} \at{p} 
        \tag{$\bot$} \label{cl:BI-BeS-supp:bot} \\
        %
        & \supp{\at{S}}{\base{B}} \top \quad
        && \text{iff} \quad
        && \forall \base{X} \baseGeq \base{B}, \forall \at{U(\cdot)} \in \BunchesWithHole(\At), \forall \at{p} \in \At, \text{ if } \supp{\at{U(\aunit)}}{\baseX} \at{p}, \text{ then} \supp{\at{U(S)}}{\base{X}} \at{p} 
        \tag{$\top$} \label{cl:BI-BeS-supp:top} \\
        %
        & \supp{\at{S}}{\base{B}} \mtop \quad
        && \text{iff} \quad 
        && \forall \base{X} \baseGeq \base{B}, \forall \at{U(\cdot)} \in \BunchesWithHole(\At), \forall \at{p} \in \At, \text{ if } \supp{\at{U(\munit)}}{\baseX} \at{p}, \text{ then} \supp{\at{U(S)}}{\base{X}} \at{p} 
        \tag{$\mtop$} \label{cl:BI-BeS-supp:mtop} \\
        %
        \varGamma & \supp{\at{R(\cdot)}}{\base{B}} \psi \quad
        && \text{iff} \quad 
        && \forall \base{X} \baseGeq \base{B}, \forall \at{U} \in \Bunches(\At), \text{ if } \supp{\at{U}}{\baseX} \varGamma, \text{ then } \supp{\at{R(U)}}{\baseX} \psi 
        \tag{Inf} \label{cl:BI-BeS-supp:inf}  \\ 
    & \supp{\at{S}}{\baseB} \munit \hspace{2.5em}
    && \text{iff} \quad 
    && \at{S} \bunchStrongerThan \munit 
    \tag{$\munit$} \label{cl:BI-BeS-supp:munit} \\
    & \supp{\at{S}}{\baseB} \aunit \quad
    && 
    && \text{always} 
    \tag{$\aunit$} \label{cl:BI-BeS-supp:aunit} \\ 
    %
    & \supp{\at{S}}{\base{B}} \varGamma \fatcomma \varDelta \quad
    && \text{iff} \quad 
    && \exists \at{Q_1}, \at{Q_2} \in \Bunches(\At) \text{ such that } \at{S} \bunchStrongerThan \at{Q_1} \mcomma \at{Q_2},  \supp{\at{Q_1}}{\baseB} \varGamma, \supp{\at{Q_2}}{\baseB} \varDelta  \tag{$\mmcomma$} \label{cl:BI-BeS-supp:mcomma} \\ 
    %
    & \supp{\at{S}}{\baseB} \varGamma \acomma \varDelta \quad 
    && \text{iff} \quad
    && \exists \at{Q_1}, \at{Q_2} \in \Bunches(\At) \text{ such that } \at{S} \bunchStrongerThan \at{Q_1}, \at{S} \bunchStrongerThan \at{Q_2}, \supp{\at{Q_1}}{\baseB} \varGamma, \supp{\at{Q_2}}{\baseB} \varDelta 
    \tag{$\aacomma$} \label{cl:BI-BeS-supp:acomma}
    \end{alignat}
    \hrule
    \caption{Support for \BI{}}
    \label{fig:suppBI} 
\end{figure}

\begin{definition}[Validity] \label{def:validity}
    A sequent $\varGamma \seq \phi$ is \emph{valid} iff  $\varGamma \entails \phi$, where
    \[
    \varGamma \entails \phi \qquad \mbox{iff} \qquad \mbox{for any base $\base{B}$,} \quad \varGamma \supp{(\cdot)}{\baseB} \phi  
    \]
\end{definition}

\begin{example}
    It is instructive to see how $\supp{((\at{p} \mmcomma \at{q}) \aacomma \at{u}) \mmcomma \at{r} }{\emptybase} \at{p} \mcomma (\at{q} \mand \at{r}) $ obtains. From Example~\ref{ex:bunchstrongerthan}, $((\at{p} \mcomma \at{q}) \acomma \at{u} ) \mcomma \at{r} \bunchStrongerThan \at{p} \mcomma (\at{q}  \mcomma \at{r})$. Hence, by \eqref{cl:BI-BeS-supp:mcomma}, it suffices to show the following: (1) $\supp{\at{p}}{\emptybase} \at{p}$, and (2) $\supp{ \at{q} \mmcomma \at{r}}{\emptyset}  \at{q} \mand \at{r}$. That (1) obtains is immediate by \eqref{cl:BI-BeS-supp:At} and~\myhyperlink{def:derivability-in-a-base:taut} in Definition~\ref{def:derivability-in-a-base}. By \eqref{cl:BI-BeS-supp:mand}, claim $(2)$ is equivalent to the following:
    \begin{equation}
    \label{eq:ex-bunch-extension}
    \forall \base{X} \baseGeq \emptybase, \forall \at{U}(\cdot) \in \BunchesWithHole(\At), \forall \at{s} \in \At,  \text{ if } \at{q} \mcomma \at{r} \supp{\at{U}(\cdot)}{\base{X}} \at{s}, \text{ then} \supp{\at{U}(\at{q} \mmcomma \at{r})}{\base{X}} \at{s}  \tag{$\dagger$}
    \end{equation}
    It is clear that we have $\supp{\at{q}}{\baseX} \at{q}$ and $\supp{\at{r}}{\baseX} \at{r}$ for any $\baseX$. Therefore, by \eqref{cl:BI-BeS-supp:mcomma}, $\supp{\at{q} \mmcomma \at{r}}{\baseX} \at{q} \mcomma \at{r}$. Hence, the result follows by \eqref{cl:BI-BeS-supp:inf} on the antecedent of \eqref{eq:ex-bunch-extension} by choosing $U(\cdot) = (\cdot)$.  We see in this example why $\bunchWeakerThan$ must apply to sub-bunches and not merely at the top-level. \end{example}

The main result of this paper is to show that consequence for BI is the same as validity in the B-eS --- that is, $\varGamma \proves \phi \mbox{ iff } \varGamma \supp{}{} \phi$. This is captured in Theorem~\ref{thm:BI-BeS-soundness} (Soundness) and Theorem~\ref{thm:BI-BeS-completeness} (Completeness), which are stated and proved in Section~\ref{sec:soundness} and Section~\ref{sec:completeness}, respectively.

\subsection{Preliminary Results}\label{sec:preliminary}
Before proceeding to soundness and completeness, it is instructive to observe that some properties about support.

\begin{proposition}[Monotonicity] \label{prop:monotonicity}
If $\supp{\at{T}}{\baseB} \phi$ and $\baseC \supseteq \baseB$ and $\at{S} \bunchStrongerThan \at{T}$, then $\supp{\at{S}}{\baseC} \phi$. 
\end{proposition}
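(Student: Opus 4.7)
The plan is to prove this by induction on the structure of the formula $\phi$, handling the two strengthenings --- the base extension $\baseC \baseGeq \baseB$ and the bunch extension $\at{S} \bunchStrongerThan \at{T}$ --- in tandem, since the implicational cases require applying the induction hypothesis at a subformula under both strengthenings simultaneously. (Note that it suffices to induct over formulae, not bunches of formulae, because the clauses for non-atomic formulae reduce to statements about support of atoms.)

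The base case ($\phi = \at{p} \in \At$) unfolds by \eqref{cl:BI-BeS-supp:At} to derivability in a base: given $\at{T} \proves[\baseB] \at{p}$, Proposition~\ref{prop:der-monotone-on-bases} first yields $\at{T} \proves[\baseC] \at{p}$, and a short sub-induction on the definition of $\bunchStrongerThan$ (Definition~\ref{def:bunch-stronger-than}) then lifts this to $\at{S} \proves[\baseC] \at{p}$. The generating sub-case $\at{S} \equiv \at{T}[\at{D} \mapsto \at{D} \aacomma \at{D}']$ follows from one application of \myhyperlink{def:derivability-in-a-base:weak} together with \myhyperlink{def:derivability-in-a-base:exch}, and the transitive sub-case is routine.

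For the connectives $\land$, $\mand$, $\lor$, $\top$, $\mtop$, and $\bot$, the defining clause of $\supp{\at{T}}{\baseB} \phi$ is in each case a universal statement over extensions $\baseX \baseGeq \baseB$, contextual bunches $\at{U}(\cdot)$, and atoms $\at{p}$, whose conclusion has the form $\supp{\at{U(T)}}{\baseX} \at{p}$. To establish the corresponding claim for $\at{S}$ and $\baseC$, I would fix arbitrary $\baseY \baseGeq \baseC$, $\at{V}(\cdot)$, and $\at{q} \in \At$ satisfying the premises of the clause; since $\baseY \baseGeq \baseB$, the hypothesis applied at these parameters yields $\supp{\at{V(T)}}{\baseY} \at{q}$. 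Proposition~\ref{prop:bunch-weaker-compositional} then supplies $\at{V(S)} \bunchStrongerThan \at{V(T)}$, and an appeal to the already-established atomic base case delivers $\supp{\at{V(S)}}{\baseY} \at{q}$, as required. This schematic argument works uniformly across all six clauses, with only the precise form of the premises changing.

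For the implications $\to$ and $\wand$, the induction hypothesis must be invoked at a genuine subformula. For $\phi = \phi_1 \to \phi_2$, fix $\baseY \baseGeq \baseC$ and $\at{V}$ with $\supp{\at{V}}{\baseY} \phi_1$; the hypothesis, via \eqref{cl:BI-BeS-supp:implication} and \eqref{cl:BI-BeS-supp:inf}, yields $\supp{\at{T} \aacomma \at{V}}{\baseY} \phi_2$, and Proposition~\ref{prop:bunch-weaker-compositional} applied in the contextual bunch $(\cdot) \aacomma \at{V}$ gives $\at{S} \aacomma \at{V} \bunchStrongerThan \at{T} \aacomma \at{V}$, so the induction hypothesis on $\phi_2$ delivers $\supp{\at{S} \aacomma \at{V}}{\baseY} \phi_2$; the case of $\wand$ is symmetric, using $\mmcomma$ in place of $\aacomma$. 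The main challenge is thus purely bookkeeping: keeping the three parameters --- the base, the outer bunch of resources, and the contextual bunch appearing inside each semantic clause --- separate so that Proposition~\ref{prop:bunch-weaker-compositional} is applied with exactly the right contextual bunch at each step. No ingredient beyond the compositionality of bunch-extension and the closure properties of derivability in a base is required.
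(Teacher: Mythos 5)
Your proof is correct, and all the essential ingredients match the paper's: induction on the structure of $\phi$, Proposition~\ref{prop:der-monotone-on-bases} and the closure conditions \myhyperlink{def:derivability-in-a-base:weak}/\myhyperlink{def:derivability-in-a-base:exch} for the atomic case, and Proposition~\ref{prop:bunch-weaker-compositional} to push the bunch-extension through the contextual bunch in the clauses for the connectives. The one genuine difference is the decomposition: the paper factors Proposition~\ref{prop:monotonicity} into two independent lemmas --- Lemma~\ref{lem:supp-formula-monotone-on-bases} (monotonicity in the base, with $\at{T}$ fixed) and Lemma~\ref{lem:supp-formula-monotone-on-resource} (monotonicity in the resource bunch, with $\baseB$ fixed) --- each proved by its own induction on $\phi$, and then composes them; you run a single induction carrying both strengthenings at once. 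Your stated motivation for the tandem treatment, namely that the implicational cases \emph{require} invoking the induction hypothesis under both strengthenings simultaneously, is not actually borne out: in the paper's $\to$-case the base change is absorbed by the universal quantifier over base-extensions built into \eqref{cl:BI-BeS-supp:inf}, so only the resource strengthening is needed at the subformula, and the two lemmas separate cleanly. The separated version buys reusability (the paper later cites the two lemmas individually, e.g.\ in Lemmas~\ref{lem:supp-bunch-monotone-on-bases} and~\ref{lem:supp-bunch-monotone-on-resource}); your combined version is marginally more economical. One small imprecision: you group $\bot$ with the clauses quantified over $\baseX \baseGeq \baseB$, but \eqref{cl:BI-BeS-supp:bot} has no base-extension quantifier, so there the base strengthening must be supplied explicitly by the atomic case rather than by instantiating a quantifier --- your atomic case does establish both strengthenings, so the argument still goes through, but the ``uniform schematic'' does not literally apply to that clause.
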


This follows as the combination of Proposition~\ref{prop:der-monotone-on-bases} (above), Lemma~\ref{lem:supp-formula-monotone-on-bases}, and Lemma~\ref{lem:supp-formula-monotone-on-resource}: 

\begin{lemma} \label{lem:supp-formula-monotone-on-bases}
     If $\supp{\at{T}}{\baseB} \phi$ and $\baseC \supseteq \baseB$, then $\supp{\at{T}}{\baseC} \phi$. 
\end{lemma}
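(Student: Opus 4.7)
The plan is to proceed by structural induction, formulated jointly over formulae and bunches of formulae, since the clauses of Figure~\ref{fig:suppBI} interact through \eqref{cl:BI-BeS-supp:mcomma} and \eqref{cl:BI-BeS-supp:acomma}. The driving observation is that $\supseteq$ on bases is transitive, so every clause containing a quantifier of the form ``$\forall \base{X} \baseGeq \base{B}$'' relativizes smoothly when we pass to $\baseC \baseGeq \baseB$: any $\base{X} \baseGeq \baseC$ also satisfies $\base{X} \baseGeq \baseB$, so the original hypothesis immediately delivers the required conclusion at $\base{X}$.

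In detail: the atomic base case \eqref{cl:BI-BeS-supp:At} is immediate from Proposition~\ref{prop:der-monotone-on-bases}; the five clauses \eqref{cl:BI-BeS-supp:conjunction}, \eqref{cl:BI-BeS-supp:mand}, \eqref{cl:BI-BeS-supp:disjunction}, \eqref{cl:BI-BeS-supp:top}, and \eqref{cl:BI-BeS-supp:mtop} dissolve by the transitivity remark above; the implication clauses \eqref{cl:BI-BeS-supp:implication} and \eqref{cl:BI-BeS-supp:wand} reduce by unfolding to \eqref{cl:BI-BeS-supp:inf}, which once again opens with ``$\forall \base{X} \baseGeq \base{B}$'' and is handled identically; the clause \eqref{cl:BI-BeS-supp:bot} reduces to a universally quantified instance of the atomic case. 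The unit clauses \eqref{cl:BI-BeS-supp:munit} and \eqref{cl:BI-BeS-supp:aunit} are base-independent. For the bunch clauses \eqref{cl:BI-BeS-supp:mcomma} and \eqref{cl:BI-BeS-supp:acomma}, the same witnesses $\at{Q_1}, \at{Q_2}$ that hold in $\baseB$ transport to $\baseC$ by applying the induction hypothesis to the components $\varGamma$ and $\varDelta$, while the bunch-extension conditions $\at{S} \bunchStrongerThan \at{Q_1} \mmcomma \at{Q_2}$ (respectively $\at{S} \bunchStrongerThan \at{Q_i}$) carry over unchanged since they do not mention the base.

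The main point requiring care is arranging the induction measure. The support of a compound formula generally invokes the support of its sub-formulae only under bunched contexts, as in the premisses $\phi \acomma \psi \supp{\at{U}(\cdot)}{\base{X}} \at{p}$ and $\phi \mcomma \psi \supp{\at{U}(\cdot)}{\base{X}} \at{p}$ of \eqref{cl:BI-BeS-supp:conjunction} and \eqref{cl:BI-BeS-supp:mand}, which themselves unfold through \eqref{cl:BI-BeS-supp:mcomma}, \eqref{cl:BI-BeS-supp:acomma}, and \eqref{cl:BI-BeS-supp:inf}. A joint induction must therefore use a measure that strictly decreases both when passing from a compound formula to its sub-formulae and when decomposing a bunch of formulae into sub-bunches, consistent with the prioritization of the conjunctions and the disjunction over the implications noted after Definition~\ref{def:support}. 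Once this measure is in place, no individual case is conceptually deep: each is a mechanical unfolding followed by an application of the induction hypothesis together with transitivity of base-extension.
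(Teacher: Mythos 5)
Your proposal is correct and matches the paper's proof in its essential content: the atomic case is Proposition~\ref{prop:der-monotone-on-bases}, every clause opening with a universal quantifier over base-extensions relativizes immediately by transitivity of $\supseteq$ (including the implications after unfolding through \eqref{cl:BI-BeS-supp:inf}), and $\bot$ reduces to the atomic case. The only organizational difference is that the paper proves the formula case (this lemma) and the bunch case (Lemma~\ref{lem:supp-bunch-monotone-on-bases}) as two separate inductions rather than one joint one, which also makes your worry about a combined induction measure moot: the compound-formula clauses dissolve by transitivity without ever invoking the induction hypothesis, so no measure relating formulae to bunches of their sub-formulae is needed.
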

\begin{proof}
    We proceed by induction on the structure of $\phi$.
    \begin{itemize}[label=--]
        \item \textsc{base case}. This follows from Proposition~\ref{prop:der-monotone-on-bases}.
    \item \textsc{inductive step.} If there is no base-extension in the relevant clause for $\phi$, then the result follows from applying the  induction hypothesis (IH) to the judgements in the definiens. For example, consider the case where $\phi$ is $\bot$: by \eqref{cl:BI-BeS-supp:bot}, from $\supp{\at{T}}{\baseB} \phi$ infer that $\supp{\at{T}}{\baseB} \at{p}$ for every $\at{p \in \Atoms}$; hence, by the IH, $\supp{\at{T}}{\baseC} \at{p}$ for every $\at{p \in \Atoms}$; whence,  by \eqref{cl:BI-BeS-supp:bot}, $\supp{\at{T}}{\baseC} \bot$, as required. 

    If base-extension is used  in the relevant clause for $\phi$, the result holds immediately since $\baseX \baseGeq \baseC$ implies $\baseX \baseGeq \baseB$. For example, consider the case where $\phi$ is $\psi_1 \mand \psi_2$. By \eqref{cl:BI-BeS-supp:mand}, 
    \[
        \forall \base{X} \baseGeq \base{B}, \forall \at{U}(\cdot) \in \BunchesWithHole(\Formulas), \forall \at{p} \in \Atoms,  \text{ if } \psi_1 \multcontext \psi_2 \supp{\at{U}(\cdot)}{\base{X}} \at{p}, \text{ then } \supp{\at{U(T)}}{\base{X}} \at{p} 
    \]
Hence, since $\baseC \baseGeq \baseB$,
    \[
        \forall \base{X} \baseGeq \base{C}, \forall \at{U}(\cdot) \in \BunchesWithHole(\Formulas), \forall \at{p} \in \Atoms,  \text{ if } \psi_1 \multcontext \psi_2 \supp{\at{U}(\cdot)}{\base{X}} \at{p}, \text{ then } \supp{\at{U(T)}}{\base{X}} \at{p} 
    \]
    By \eqref{cl:BI-BeS-supp:mand}, $\supp{\at{T}}{\baseC} \psi_1 \mand \psi_2$, as required.

   For clarity, we also show the case where $\varphi$ is $\psi_1 \mto \psi_2$. Assume $\suppM{\baseB}{\at{T}}\varphi \mto \psi$ and some $\baseC \baseGeq \baseB$. We want to show $\suppM{\baseC}{\at{T}} \varphi \mto \psi$. Therefore, fix some arbitrary $\baseD \baseGeq \baseC$ and $\at{U}$ such that $\suppM{\baseD}{\at{U}} \varphi$. The goal is to show $\suppM{\baseD}{\at{T} \mmcomma \at{U}} \psi$. This follows immediately from $\baseD \baseGeq \baseC \baseGeq \baseB$ and $\varphi \suppM{\baseB}{\at{T}} \psi$.
    \end{itemize}
This completes the induction.
\end{proof}

\begin{lemma}\label{lem:supp-formula-monotone-on-resource}
    If $\supp{\at{T}}{\baseB} \phi$ and $\at{S} \bunchStrongerThan \at{T}$, then $\supp{\at{S}}{\baseB} \phi$. 
\end{lemma}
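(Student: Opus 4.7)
The plan is to proceed by induction on the structure of $\phi$.

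For the base case $\phi = \at{p} \in \Atoms$, the hypothesis $\supp{\at{T}}{\baseB} \at{p}$ unfolds via \eqref{cl:BI-BeS-supp:At} to $\at{T} \proves[\baseB] \at{p}$, and we seek $\at{S} \proves[\baseB] \at{p}$. By the inductive definition of $\bunchStrongerThan$ in Definition~\ref{def:bunch-stronger-than}, $\at{S}$ arises from $\at{T}$ by a finite transitive chain of single-step substitutions of the form $[\varDelta \mapsto \varDelta \fatsemi \varDelta']$, each possibly composed with coherent equivalence. Each such step is precisely what the \myhyperlink{def:derivability-in-a-base:weak} rule of Definition~\ref{def:derivability-in-a-base} handles, and \myhyperlink{def:derivability-in-a-base:exch} absorbs the $\equiv$-passes. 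A straightforward sub-induction on the length of the chain therefore yields $\at{S} \proves[\baseB] \at{p}$, whence $\supp{\at{S}}{\baseB} \at{p}$ by \eqref{cl:BI-BeS-supp:At}.

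For the inductive step, the clauses split into two patterns. In the first, covering \eqref{cl:BI-BeS-supp:conjunction}, \eqref{cl:BI-BeS-supp:mand}, \eqref{cl:BI-BeS-supp:disjunction}, \eqref{cl:BI-BeS-supp:bot}, \eqref{cl:BI-BeS-supp:top}, and \eqref{cl:BI-BeS-supp:mtop}, the defining clause quantifies universally over $\baseX \baseGeq \baseB$, $\at{U}(\cdot) \in \BunchesWithHole(\At)$, and $\at{p} \in \At$, and (under suitable preconditions) concludes with a judgement of the shape $\supp{\at{U(T)}}{\baseX} \at{p}$. To establish the analogous statement for $\at{S}$, fix such $\baseX$, $\at{U}(\cdot)$, and $\at{p}$ together with the preconditions; the hypothesis delivers $\supp{\at{U(T)}}{\baseX} \at{p}$. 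Proposition~\ref{prop:bunch-weaker-compositional} then lifts $\at{S} \bunchStrongerThan \at{T}$ to $\at{U(S)} \bunchStrongerThan \at{U(T)}$, and since the conclusion is atomic, the already-established base case (applied in the extended base $\baseX$) yields $\supp{\at{U(S)}}{\baseX} \at{p}$, as required.

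The second pattern covers \eqref{cl:BI-BeS-supp:implication} and \eqref{cl:BI-BeS-supp:wand}. For $\phi = \psi_1 \to \psi_2$, expanding the hypothesis via \eqref{cl:BI-BeS-supp:inf} gives: for all $\baseX \baseGeq \baseB$ and all $\at{U} \in \Bunches(\At)$, if $\supp{\at{U}}{\baseX} \psi_1$ then $\supp{\at{T} \aacomma \at{U}}{\baseX} \psi_2$. To prove the same with $\at{S}$ in place of $\at{T}$, fix such $\baseX$ and $\at{U}$, assume $\supp{\at{U}}{\baseX} \psi_1$, and extract $\supp{\at{T} \aacomma \at{U}}{\baseX} \psi_2$. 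Applying Proposition~\ref{prop:bunch-weaker-compositional} with the contextual bunch $(\cdot) \aacomma \at{U}$ gives $\at{S} \aacomma \at{U} \bunchStrongerThan \at{T} \aacomma \at{U}$, and the inductive hypothesis on the strictly smaller formula $\psi_2$ yields $\supp{\at{S} \aacomma \at{U}}{\baseX} \psi_2$; repackaging via \eqref{cl:BI-BeS-supp:inf} and \eqref{cl:BI-BeS-supp:implication} gives $\supp{\at{S}}{\baseB} \psi_1 \to \psi_2$. The case $\phi = \psi_1 \wand \psi_2$ is identical with $\mmcomma$ in place of $\aacomma$.

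The main obstacle is the base case, where one must unwind the transitive, $\equiv$-closed definition of $\bunchStrongerThan$ into a concrete sequence of applications of \myhyperlink{def:derivability-in-a-base:weak} and \myhyperlink{def:derivability-in-a-base:exch}; once this is in place, the two inductive patterns dispatch every remaining connective mechanically, relying on Proposition~\ref{prop:bunch-weaker-compositional} to propagate the bunch-extension through the contextual position at which it needs to act.
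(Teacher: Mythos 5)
Your proposal is correct and follows essentially the same route as the paper's proof: structural induction on $\phi$, with the atomic case discharged by the \myhyperlink{def:derivability-in-a-base:weak} (and \myhyperlink{def:derivability-in-a-base:exch}) closure conditions on derivability in a base, the conjunction/disjunction/unit/$\bot$ clauses handled by pushing $\at{S} \bunchStrongerThan \at{T}$ through the contextual bunch via Proposition~\ref{prop:bunch-weaker-compositional} and then invoking the atomic case on the resulting atomic judgement, and the implications handled by composing the extension with the parameter bunch and applying the induction hypothesis to the consequent. Your version is if anything slightly more careful than the paper's, which labels the appeal to the atomic conclusion in the first pattern as ``the induction hypothesis'' and elides the chain-unwinding in the base case that you spell out.
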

\begin{proof}
    We proceed by induction on the structure of $\phi$. 
    \begin{itemize}[label=--]
        \item $\phi = \at{p} 
 \in \Atoms$. By \eqref{cl:BI-BeS-supp:At}, $\supp{\at{T}}{\baseB} \at{p}$ iff $\at{T} \proves[\baseB] \at{p}$. By,~\myhyperlink{def:derivability-in-a-base:weak} in Definition~\ref{def:derivability-in-a-base}, since $\at{S} \bunchStrongerThan \at{T}$, it follows that $\at{S} \proves[\baseB] p$. 
        \item $\phi = \phi_1 \land \phi_2$. By \eqref{cl:BI-BeS-supp:conjunction} it suffices to show the following:
        \[
        \mbox{$\forall \baseX \baseGeq \baseB$, $\forall \at{U}(\cdot) \in \BunchesWithHole(\Formulas)$, $\forall \at{p} \in \Atoms$, if $\phi_1 \acomma \phi_2 \supp{\at{U}(\cdot)}{\baseX} \at{p}$, then $\supp{\at{U(S)}}{\baseX} \at{p}$}
        \]
         Therefore, let $\baseC \baseGeq \baseB$, $\at{P}(\cdot)$, and $\at{q}$ be arbitrary  such that $\phi_1 \acomma \phi_2 \supp{\at{P}(\cdot)}{\baseC} \at{q}$.  We require to show that $\supp{\at{P(S)}}{\baseC} \at{q}$. 
        
        By \eqref{cl:BI-BeS-supp:conjunction}, from the assumption $\supp{\at{T}}{\baseB} \phi_1 \land \phi_2$, infer $\supp{\at{P(T)}}{\baseC} \at{q}$. By Proposition~\ref{prop:bunch-weaker-compositional}, from $\at{S} \bunchStrongerThan \at{T}$, infer $\at{P(S) }\bunchStrongerThan \at{P(T)}$. By the induction hypothesis (IH), from $\supp{\at{P(T)}}{\baseC} \at{q}$ and $\at{P(S)} \bunchStrongerThan \at{P(T)}$, infer $\supp{\at{P(S)}}{\baseC} \at{q}$. 
        
        \item $\phi = \top$. By \eqref{cl:BI-BeS-supp:top} it suffices to show the following:
        \[
       \mbox{$\forall \baseX \baseGeq\baseB$, $\forall \at{U}(\cdot) \in \BunchesWithHole(\Formulas)$, $\forall \at{p} \in \Atoms$, if $\supp{\at{U}(\aunit)}{\baseX} \at{p}$, then $\supp{\at{U(T)}}{\baseX} \at{p}$}
        \]
        Therefore, let $\baseC \baseGeq \baseB$, $\at{P}(\cdot)$, and $\at{q}$ be arbitrary such that $\supp{\at{P}(\aunit)}{\baseC} \at{q}$. We require to show $\supp{\at{P(S)}}{\baseC} \at{q}$ holds. 

        By \eqref{cl:BI-BeS-supp:top}, from the assumption, $\supp{\at{T}}{\baseB} \top$, infer 
         $\supp{\at{P(T)}}{\baseC} \at{q}$. By Proposition~\ref{prop:bunch-weaker-compositional}, from $\at{S} \bunchStrongerThan \at{T}$, infer $\at{P(S) }\bunchStrongerThan \at{P(T)}$. By the IH, from $\supp{\at{P(T)}}{\baseC} \at{q}$ and $\at{P(S)} \bunchStrongerThan \at{P(T)}$, infer $\supp{\at{P(S)}}{\baseC} \at{q}$.

        \item $\phi = \phi_1 \to \phi_2$. By \eqref{cl:BI-BeS-supp:implication} it suffices to show $\phi_1 \supp{\at{S} \aacomma (\cdot)}{\baseB} \phi_2$. By \eqref{cl:BI-BeS-supp:inf}, this is equivalent to the following:
        \[
        \mbox{$\forall \baseX \baseGeq \baseB$, $\forall \at{U} \in \Bunches(\Formulas)$, if $\supp{\at{U}}{\baseX} \phi_1$, then $\supp{\at{S}\aacomma \at{U}}{\baseX} \phi_2$.}
        \]
        Therefore, let $\baseC \baseGeq \baseB$, $\at{P}$ be arbitrary such that $\supp{\at{P}}{\baseC} \phi_1$. We require to show $\supp{\at{S}\aacomma \at{P}}{\baseC} \phi_2$. 

        By \eqref{cl:BI-BeS-supp:implication} and $\eqref{cl:BI-BeS-supp:inf}$, from the assumption $\supp{\at{T}}{\baseB} \phi_1 \to \phi_2$, infer 
         $\supp{\at{T}\aacomma \at{P}}{\baseC} \phi_2$. By Proposition~\ref{prop:bunch-weaker-compositional}, from $\at{S} \bunchStrongerThan \at{T}$, infer $\at{S}\aacomma \at{P} \bunchStrongerThan \at{T} \aacomma \at{P}$. By the IH, from $\supp{\at{T} \aacomma \at{P}}{\baseC} \phi_2$ and $\at{S}\aacomma \at{P} \bunchStrongerThan \at{T} \aacomma \at{P}$, infer $\supp{\at{S}\aacomma \at{P}}{\baseC} \phi_2$.
         
        \item $\phi = \phi_1 \mand \phi_2$. \emph{Mutatis mutandis} 
 on the $\land$-case. 
        \item $\phi = \mtop$. \emph{Mutatis mutandis} 
 on the $\top$-case. 
        \item $\phi = \phi_1 \mto \phi_2$.\emph{Mutatis mutandis} 
 on the $\to$-case. 
        \item $\phi = \phi_1 \lor \phi_2$. \emph{Mutatis mutandis} 
 on the $\land$-case. 
        \item $\phi = \bot$. By \eqref{cl:BI-BeS-supp:bot} on the assumption $\supp{\at{T}}{\baseB} \bot$, infer  $\supp{\at{T}}{\baseB} \at{p}$ holds for all $\at{p} \in \At$. By either the IH or the atomic case, $\supp{\at{S}}{\baseB} \at{p}$ for all $\at{p} \in \At$. Hence, by \eqref{cl:BI-BeS-supp:bot}, $\supp{\at{S}}{\baseB} \bot$. 
    \end{itemize}
    This completes the induction. 
\end{proof}

It is easy to see that monotonicity (Proposition~\ref{prop:monotonicity}) extends to bunches:

\begin{lemma} \label{lem:supp-bunch-monotone-on-bases}
     If $\supp{\at{T}}{\baseB} \varGamma$ and $\baseC \supseteq \baseB$, then $\supp{\at{T}}{\baseC} \varGamma$. 
\end{lemma}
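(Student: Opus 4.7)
The plan is to proceed by structural induction on the bunch $\varGamma$, leveraging Lemma~\ref{lem:supp-formula-monotone-on-bases} for the formula case and the fact that the base-extension hypothesis $\baseC \supseteq \baseB$ is compatible with the existential witnesses used in the clauses for $\aacomma$ and $\mmcomma$.

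For the base cases, there are three sub-cases. If $\varGamma$ is a single formula $\phi \in \Formulas$, the result is exactly Lemma~\ref{lem:supp-formula-monotone-on-bases}. If $\varGamma = \aunit$, then by clause \eqref{cl:BI-BeS-supp:aunit} the support $\supp{\at{T}}{\baseC} \aunit$ holds unconditionally. If $\varGamma = \munit$, then by clause \eqref{cl:BI-BeS-supp:munit} we have $\supp{\at{T}}{\baseB} \munit$ iff $\at{T} \bunchStrongerThan \munit$; this condition is independent of the base, so it immediately gives $\supp{\at{T}}{\baseC} \munit$.

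For the inductive step, suppose $\varGamma = \varGamma_1 \mmcomma \varGamma_2$. By clause \eqref{cl:BI-BeS-supp:mcomma}, the hypothesis $\supp{\at{T}}{\baseB} \varGamma_1 \mmcomma \varGamma_2$ yields $\at{Q}_1, \at{Q}_2 \in \Bunches(\Atoms)$ such that $\at{T} \bunchStrongerThan \at{Q}_1 \mmcomma \at{Q}_2$, $\supp{\at{Q}_1}{\baseB} \varGamma_1$, and $\supp{\at{Q}_2}{\baseB} \varGamma_2$. By the induction hypothesis applied to $\varGamma_1$ and $\varGamma_2$, we obtain $\supp{\at{Q}_1}{\baseC} \varGamma_1$ and $\supp{\at{Q}_2}{\baseC} \varGamma_2$. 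Re-applying clause \eqref{cl:BI-BeS-supp:mcomma} with the same witnesses $\at{Q}_1, \at{Q}_2$ yields $\supp{\at{T}}{\baseC} \varGamma_1 \mmcomma \varGamma_2$. The case $\varGamma = \varGamma_1 \aacomma \varGamma_2$ is handled \emph{mutatis mutandis} using clause \eqref{cl:BI-BeS-supp:acomma} in place of \eqref{cl:BI-BeS-supp:mcomma}.

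I expect no substantive obstacle here: the argument is entirely routine because neither $\at{T} \bunchStrongerThan \at{Q}_1 \mmcomma \at{Q}_2$ nor the analogous conditions for $\aacomma$ mention the base, so the witnesses transfer verbatim from $\baseB$ to $\baseC$. The only place base-sensitivity matters is in the formula base case, and that is discharged directly by Lemma~\ref{lem:supp-formula-monotone-on-bases}.
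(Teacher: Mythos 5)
Your proof is correct and follows essentially the same route as the paper's: structural induction on $\varGamma$, with the formula case discharged by Lemma~\ref{lem:supp-formula-monotone-on-bases}, the unit cases handled by the base-independence of clauses \eqref{cl:BI-BeS-supp:aunit} and \eqref{cl:BI-BeS-supp:munit}, and the context-former cases by transferring the existential witnesses verbatim after applying the induction hypothesis. No gap to report.
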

\begin{proof}
    We proceed by induction on the structure of $\varGamma$.
    \begin{itemize}[label=--]
        \item $\varGamma = \phi \in \Formulas$. This is Lemma~\ref{lem:supp-formula-monotone-on-bases}.
        \item $\varGamma = \aunit$. This follows immediately by \eqref{cl:BI-BeS-supp:aunit}.
        \item $\varGamma = \munit$. By \eqref{cl:BI-BeS-supp:munit}, $\at{T}  \bunchStrongerThan \munit$. Hence, by \eqref{cl:BI-BeS-supp:munit}, $\supp{\at{T}}{\baseC} \munit$.
        \item $\varGamma = \varGamma_1 \acomma \varGamma_2$. By \eqref{cl:BI-BeS-supp:acomma}, there are $\at{Q}_1$ and $\at{Q}_2$ such that $\at{T} \bunchStrongerThan \at{Q}_1$, $\at{T} \bunchStrongerThan \at{Q}_1$, $\supp{\at{Q}_1}{\baseB} \varGamma_1$, and $\supp{\at{Q}_2}{\baseB} \varGamma_2$. By the induction hypothesis (IH), $\supp{\at{Q}_1}{\baseC} \varGamma_1$ and $\supp{\at{Q}_2}{\baseC} \varGamma_2$. Whence, by \eqref{cl:BI-BeS-supp:acomma}, $\supp{\at{T}}{\baseC} \varGamma_1 \acomma \varGamma_2$.
        \item $\varGamma = \varGamma_1 \mcomma \varGamma_2$. By \eqref{cl:BI-BeS-supp:mcomma}, there are $\at{Q}_1$ and $\at{Q}_2$ such that $\at{T} \bunchStrongerThan (\at{Q}_1 \mcomma \at{Q}_2)$, $\supp{\at{Q}_1}{\baseB} \varGamma_1$, and $\supp{\at{Q}_2}{\baseB} \varGamma_2$. By the IH, $\supp{\at{Q}_1}{\baseC} \varGamma_1$ and $\supp{\at{Q}_2}{\baseC} \varGamma_2$. Whence, by \eqref{cl:BI-BeS-supp:mcomma}, $\supp{\at{T}}{\baseC} \varGamma_1 \mcomma \varGamma_2$.
    \end{itemize}
    This completes the induction.
\end{proof}

\begin{lemma}
\label{lem:supp-bunch-monotone-on-resource}
    If $\supp{\at{T}}{\baseB} \varGamma$ and $\at{S} \bunchStrongerThan \at{T}$, then $\supp{\at{S}}{\baseB} \varGamma$. 
\end{lemma}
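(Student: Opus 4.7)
The plan is to mirror the structure of Lemma~\ref{lem:supp-bunch-monotone-on-bases}, proceeding by induction on the structure of the bunch $\varGamma$, but now exploiting the transitivity of $\bunchStrongerThan$ (built into Definition~\ref{def:bunch-stronger-than}) rather than transitivity of base inclusion.

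In the base case, $\varGamma$ is a single formula $\phi$, and the claim is exactly Lemma~\ref{lem:supp-formula-monotone-on-resource}. For $\varGamma = \aunit$, there is nothing to prove since \eqref{cl:BI-BeS-supp:aunit} always holds. For $\varGamma = \munit$, clause \eqref{cl:BI-BeS-supp:munit} gives $\at{T} \bunchStrongerThan \munit$, so by transitivity of $\bunchStrongerThan$ applied to $\at{S} \bunchStrongerThan \at{T}$ we obtain $\at{S} \bunchStrongerThan \munit$, and re-applying \eqref{cl:BI-BeS-supp:munit} yields $\supp{\at{S}}{\baseB} \munit$.

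For the inductive step on $\varGamma = \varGamma_1 \mcomma \varGamma_2$, clause \eqref{cl:BI-BeS-supp:mcomma} supplies $\at{Q}_1, \at{Q}_2$ with $\at{T} \bunchStrongerThan \at{Q}_1 \mcomma \at{Q}_2$ and $\supp{\at{Q}_i}{\baseB} \varGamma_i$ for $i=1,2$. Transitivity of $\bunchStrongerThan$ gives $\at{S} \bunchStrongerThan \at{Q}_1 \mcomma \at{Q}_2$, and the same witnesses $\at{Q}_1, \at{Q}_2$ certify $\supp{\at{S}}{\baseB} \varGamma_1 \mcomma \varGamma_2$ via \eqref{cl:BI-BeS-supp:mcomma}. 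Note that the induction hypothesis is not needed here because the witnesses are reused directly; the new resource $\at{S}$ is weaker than the same $\at{Q}_1 \mcomma \at{Q}_2$. The case $\varGamma = \varGamma_1 \acomma \varGamma_2$ is analogous using \eqref{cl:BI-BeS-supp:acomma}, with transitivity applied separately to $\at{S} \bunchStrongerThan \at{T} \bunchStrongerThan \at{Q}_1$ and $\at{S} \bunchStrongerThan \at{T} \bunchStrongerThan \at{Q}_2$.

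The only mild subtlety — and arguably the only part worth checking carefully — is that $\bunchStrongerThan$ is genuinely transitive in the sense needed; this is guaranteed by the second clause of Definition~\ref{def:bunch-stronger-than}. Given that, no appeal to Proposition~\ref{prop:bunch-weaker-compositional} is required (unlike in Lemma~\ref{lem:supp-formula-monotone-on-resource}), because in the bunch clauses the resource $\at{T}$ appears at the top level of the dominating $\bunchStrongerThan$ rather than inside a contextual bunch. Thus the proof is strictly easier than the formula-level monotonicity lemma it builds upon.
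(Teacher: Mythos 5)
Your proposal is correct and follows essentially the same route as the paper's proof: a case analysis on the structure of $\varGamma$, reducing the formula case to Lemma~\ref{lem:supp-formula-monotone-on-resource} and handling the context-former cases by transitivity of $\bunchStrongerThan$ applied to the witnesses $\at{Q}_1, \at{Q}_2$, without needing the induction hypothesis or Proposition~\ref{prop:bunch-weaker-compositional}. Your observation that the argument is really just a case analysis rather than a genuine induction matches the paper, which likewise reuses the witnesses directly.
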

\begin{proof}
    We proceed by case-analysis on the structure of $\varGamma$.
    \begin{itemize}[label=--]
        \item $\varGamma = \phi \in \Formulas$. This is Lemma~\ref{lem:supp-formula-monotone-on-resource}.
        \item $\varGamma = \aunit$. This follows immediately by \eqref{cl:BI-BeS-supp:aunit}.
        \item $\varGamma = \munit$. By \eqref{cl:BI-BeS-supp:munit}, $\at{T}  \bunchStrongerThan \munit$. Hence,  $\at{S} \bunchStrongerThan \munit$. Whence, \eqref{cl:BI-BeS-supp:munit}, $\supp{\at{S}}{\baseB}\munit$.
        \item $\varGamma = \varGamma_1 \acomma \varGamma_2$. By \eqref{cl:BI-BeS-supp:acomma}, there are $\at{Q}_1$ and $\at{Q}_2$ such that $\at{T} \bunchStrongerThan \at{Q}_1$, $\at{T} \bunchStrongerThan \at{Q}_1$, $\supp{\at{Q}_1}{\baseB} \varGamma_1$, and $\supp{\at{Q}_2}{\baseB} \varGamma_2$. From $\at{S} \bunchStrongerThan \at{T}$, infer $\at{S} \bunchStrongerThan \at{Q}_1$ and $\at{S} \bunchStrongerThan \at{Q}_2$.  Hence, by \eqref{cl:BI-BeS-supp:acomma}, $\supp{\at{S}}{\baseB} \varGamma_1 \acomma \varGamma_2$.
        \item $\varGamma = \varGamma_1 \mcomma \varGamma_2$. By \eqref{cl:BI-BeS-supp:mcomma}, there are $\at{Q}_1$ and $\at{Q}_2$ such that $\at{T} \bunchStrongerThan (\at{Q}_1 \mcomma \at{Q}_2)$, $\supp{\at{Q}_1}{\baseB} \varGamma_1$, and $\supp{\at{Q}_2}{\baseB} \varGamma_2$. From $\at{S} \bunchStrongerThan \at{T}$, infer $\at{S} \bunchStrongerThan (\at{Q}_1 \mcomma \at{Q}_2)$. Hence, by \eqref{cl:BI-BeS-supp:mcomma}, $\supp{\at{S}}{\baseB} \varGamma_1 \mcomma \varGamma_2$.
    \end{itemize}
    This completes the induction.
\end{proof}

As a corollary of monotonicity, validity corresponds to support in the empty base with respect to no resources.

\begin{lemma}
    $\varGamma \entails \phi$ iff $\varGamma \supp{(\cdot)}{\emptybase} \phi$
\end{lemma}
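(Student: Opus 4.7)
The plan is to unfold the \eqref{cl:BI-BeS-supp:inf} clause from Figure~\ref{fig:suppBI} in both directions of the biconditional and exploit the fact that the empty base $\emptybase$ is contained in every base. The forward direction ($\Rightarrow$) will follow immediately from Definition~\ref{def:validity}: if $\varGamma \entails \phi$, then by definition $\varGamma \supp{(\cdot)}{\baseB} \phi$ for every base $\baseB$, so in particular the instance $\baseB = \emptybase$ gives the desired conclusion.

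For the backward direction ($\Leftarrow$), I will assume $\varGamma \supp{(\cdot)}{\emptybase} \phi$ and fix an arbitrary base $\baseB$, aiming to show $\varGamma \supp{(\cdot)}{\baseB} \phi$. Unfolding via \eqref{cl:BI-BeS-supp:inf} with the identity contextual bunch $\at{R}(\cdot) = (\cdot)$, so that $\at{R}(\at{U}) = \at{U}$, this goal reduces to: for every $\baseX \baseGeq \baseB$ and every $\at{U} \in \Bunches(\At)$ with $\supp{\at{U}}{\baseX} \varGamma$, show $\supp{\at{U}}{\baseX} \phi$. Since any such $\baseX$ trivially satisfies $\baseX \baseGeq \emptybase$, applying the hypothesis (also unfolded via \eqref{cl:BI-BeS-supp:inf}) to this $\baseX$ and $\at{U}$ yields $\supp{\at{U}}{\baseX} \phi$ directly, closing the argument.

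There is no real obstacle here: the claim reduces to the $\baseLeq$-minimality of $\emptybase$ together with the observation that, for the identity contextual bunch $(\cdot)$, the \eqref{cl:BI-BeS-supp:inf} clause collapses to the standard preservation condition familiar from the B-eS for \IPL{} and \IMLL{}. Monotonicity (Proposition~\ref{prop:monotonicity} and Lemma~\ref{lem:supp-bunch-monotone-on-bases}) is not strictly needed, since the quantification over $\baseX \baseGeq \baseB$ built into \eqref{cl:BI-BeS-supp:inf} already absorbs all passages to larger bases.
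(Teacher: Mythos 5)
Your proof is correct and takes essentially the same route as the paper: the forward direction is immediate from Definition~\ref{def:validity}, and the backward direction rests on the fact that every base extends $\emptybase$, so the universally quantified condition in \eqref{cl:BI-BeS-supp:inf} at $\emptybase$ subsumes the one at any $\baseB$. Your remark that monotonicity is not strictly needed is accurate --- the paper cites Proposition~\ref{prop:monotonicity} for the `if' direction, but the only ingredient actually used is the $\baseLeq$-minimality of $\emptybase$, which your direct unfolding makes explicit.
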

\begin{proof}
The `if' direction is follows from Proposition~\ref{prop:monotonicity}. The `only if' direction is immediate by Definition~\ref{def:validity}.
\end{proof}

We now proceed to showing that support respects the intended structurality of bunches.

\begin{proposition}\label{prop:support-identity}
    For any base $\baseB$ and $\at{P} \in \Bunches(\At)$, $\supp{\at{P}}{\baseB} \at{P}$ obtains.
\end{proposition}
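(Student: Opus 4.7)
The plan is to proceed by structural induction on the bunch $\at{P} \in \Bunches(\At)$. There are five cases, corresponding to the grammar in Definition~\ref{def:bunch}: a single atom, the unit $\aunit$, the unit $\munit$, an additive combination, and a multiplicative combination. Throughout, I will need the fact that $\bunchStrongerThan$ is reflexive (up to coherent equivalence); this is not isolated as a lemma, but follows directly from Definition~\ref{def:bunch-stronger-than} by taking $\varDelta' = \aunit$, since then $\varGamma \equiv \varGamma[\varDelta \mapsto \varDelta \aacomma \aunit]$.

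For the atomic base case $\at{P} = \at{p}$, clause~\eqref{cl:BI-BeS-supp:At} reduces the claim to $\at{p} \proves[\baseB] \at{p}$, which is exactly~\myhyperlink{def:derivability-in-a-base:taut} in Definition~\ref{def:derivability-in-a-base}. The case $\at{P} = \aunit$ is immediate from clause~\eqref{cl:BI-BeS-supp:aunit}, and the case $\at{P} = \munit$ follows from clause~\eqref{cl:BI-BeS-supp:munit} using reflexivity of $\bunchStrongerThan$.

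For the inductive cases, suppose $\at{P} = \at{P}_1 \aacomma \at{P}_2$. By the induction hypothesis, $\supp{\at{P}_1}{\baseB} \at{P}_1$ and $\supp{\at{P}_2}{\baseB} \at{P}_2$. I will apply clause~\eqref{cl:BI-BeS-supp:acomma} with the witnesses $\at{Q}_1 := \at{P}_1$ and $\at{Q}_2 := \at{P}_2$. The two required bunch-extensions, $\at{P}_1 \aacomma \at{P}_2 \bunchStrongerThan \at{P}_1$ and $\at{P}_1 \aacomma \at{P}_2 \bunchStrongerThan \at{P}_2$, follow directly from the first clause of Definition~\ref{def:bunch-stronger-than} (taking $\varDelta' = \at{P}_2$ and $\varDelta' = \at{P}_1$, respectively). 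The case $\at{P} = \at{P}_1 \mmcomma \at{P}_2$ is analogous, using clause~\eqref{cl:BI-BeS-supp:mcomma} with the same witnesses; here the required condition is $\at{P}_1 \mmcomma \at{P}_2 \bunchStrongerThan \at{P}_1 \mmcomma \at{P}_2$, i.e.\ reflexivity.

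There is no real obstacle here, since the support clauses for the context-formers were designed to admit precisely this kind of identity witnessing. The only minor subtlety is the need to recognise that $\bunchStrongerThan$ is reflexive despite this not being asserted in Definition~\ref{def:bunch-stronger-than}, and to pick the identity witnesses $\at{Q}_i = \at{P}_i$ in the multiplicative and additive clauses rather than more exotic decompositions.
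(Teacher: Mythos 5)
Your proof is correct and follows essentially the same route as the paper's: structural induction on $\at{P}$, with \eqref{cl:BI-BeS-supp:At} plus \myhyperlink{def:derivability-in-a-base:taut} for atoms, the unit clauses for $\aunit$ and $\munit$, and the identity witnesses $\at{Q}_i = \at{P}_i$ in \eqref{cl:BI-BeS-supp:acomma} and \eqref{cl:BI-BeS-supp:mcomma} for the composite cases. Your explicit justification of the reflexivity of $\bunchStrongerThan$ (via $\varDelta' = \aunit$ and unitality) is a small detail the paper leaves implicit, but it changes nothing substantive.
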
 
\begin{proof}
     We proceed by induction on the structure of $\at{P}$.
\begin{itemize}[label=--]
    \item $\at{P}$ is in $\Atoms$. That is, $\at{P}$ is some atom $\at{p}$. By \eqref{cl:BI-BeS-supp:At}, this is equivalent to $\at{p} \proves[\baseB] \at{p}$, which obtains by~\myhyperlink{def:derivability-in-a-base:taut} in Definition~\ref{def:derivability-in-a-base}.
    \item $\at{P}$ is $\aunit$. This is immediate by \eqref{cl:BI-BeS-supp:aunit}. 
    \item $\at{P}$ is $\munit$. This is immediate by \eqref{cl:BI-BeS-supp:munit} as $\munit \bunchStrongerThan \munit$. 
    \item $\at{P}$ is $\at{P}_1 \acomma \at{P}_2$.  By the induction hypothesis (IH), both $\supp{\at{P}_1}{\baseB}{\at{P}_1}$ and $\supp{\at{P}_2}{\baseB}{\at{P}_2}$ obtain.  Since $\at{P}_1 \acomma \at{P}_2 \bunchStrongerThan \at{P}_1$ and $\at{P}_1 \acomma \at{P}_2 \bunchStrongerThan \at{P}_2$, the result follows by \eqref{cl:BI-BeS-supp:acomma}.
    \item $\at{P}$ is $\at{P}_1 \mcomma \at{P}_2$. By the IH, both $\supp{\at{P}_1}{\baseB}{\at{P}_1}$ and $\supp{\at{P}_2}{\baseB}{\at{P}_2}$ obtain. Since $\at{P}_1 \mcomma \at{P}_2 \bunchStrongerThan \at{P}_1 \mcomma \at{P}_2$, the result follows by \eqref{cl:BI-BeS-supp:mcomma}.
\end{itemize}
This completes the induction.
\end{proof}
The following lemmas show that the structurality of bunches in BI is adequately captured in the support judgement:

\begin{proposition}
\label{prop:supp-resource-weaken-contract-deep}
       For any base $\baseB$, $\at{S}, \at{S}' \in \Bunches(\At)$, $\at{P}(\cdot) \in \BunchesWithHole(\At)$,  and $\phi \in \Formulas$,
       \[
        \mbox{\myhypertarget{item:supp-resource-weaken-deep}{(i)} if $\supp{\at{P}(\at{S})}{\baseB} \phi$, then $\supp{\at{P}(\at{S}\aacomma \at{S}')}{\baseB} \phi$} \qquad \mbox{\myhypertarget{item:supp-resource-contract-deep}{(ii)} 
        if $\supp{\at{P}(\at{S}\aacomma \at{S})}{\baseB} \phi$, then $\supp{\at{P}(\at{S})}{\baseB} \phi$.} 
    \]
\end{proposition}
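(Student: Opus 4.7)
Part~(i) follows as a corollary of already-established results. Since $\at{S} \aacomma \at{S}' \bunchStrongerThan \at{S}$ by the first clause of Definition~\ref{def:bunch-stronger-than} (taking $\varDelta = \at{S}$, $\varDelta' = \at{S}'$), Proposition~\ref{prop:bunch-weaker-compositional} lifts this to $\at{P}(\at{S} \aacomma \at{S}') \bunchStrongerThan \at{P}(\at{S})$, and Lemma~\ref{lem:supp-formula-monotone-on-resource} then transfers the support from $\at{P}(\at{S})$ to $\at{P}(\at{S} \aacomma \at{S}')$. So weakening at depth is just compositional bunch-extension.

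Part~(ii), by contrast, lies outside the scope of $\bunchStrongerThan$, which only \emph{enlarges} bunches, and so must be proved by induction on the structure of $\phi$, with the statement kept uniformly quantified over all $\at{P}(\cdot)$, $\at{S}$, and $\baseB$. The base case $\phi = \at{p} \in \Atoms$ follows immediately by \eqref{cl:BI-BeS-supp:At} together with the \myhyperlink{def:derivability-in-a-base:cont} closure condition in Definition~\ref{def:derivability-in-a-base}. For each of the connectives $\land$, $\mand$, $\lor$, $\top$, $\mtop$, and $\bot$, the defining clause universally quantifies over a further contextual bunch $\at{U}(\cdot)$ and atom $\at{q}$ (and possibly a base extension $\baseC$); unfolding the hypothesis reduces matters to a support of the shape $\supp{\at{U}(\at{P}(\at{S} \aacomma \at{S}))}{\baseC} \at{q}$, and since the composition $\at{U}(\at{P}(\cdot))$ is itself a contextual bunch, the atomic case (already established) yields $\supp{\at{U}(\at{P}(\at{S}))}{\baseC} \at{q}$. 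Re-folding the relevant clause then delivers the required conclusion.

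The main obstacle is the cases $\phi = \phi_1 \to \phi_2$ and $\phi = \phi_1 \wand \phi_2$, where \eqref{cl:BI-BeS-supp:implication} and \eqref{cl:BI-BeS-supp:wand} together with \eqref{cl:BI-BeS-supp:inf} do \emph{not} quantify over an atom but instead shift the required support onto $\phi_2$ against the enlarged contextual bunch $\at{P}(\cdot) \aacomma \at{U}$ (resp.\ $\at{P}(\cdot) \mmcomma \at{U}$), for an arbitrary $\at{U}$ supporting $\phi_1$. To close these steps I would invoke the inductive hypothesis at the strictly smaller $\phi_2$ \emph{instantiated with this enlarged contextual bunch} --- permissible precisely because the statement is uniformly quantified over all contextual bunches. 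This uniformity, which is exactly why the statement is formulated at the level of $\at{P}(\cdot)$ rather than at the top level, is what drives the induction through the nested implications; the $\wand$ case differs from the $\to$ case only by replacing $\aacomma$ with $\mmcomma$ throughout.
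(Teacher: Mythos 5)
Your proposal is correct and follows essentially the same route as the paper: part~(i) via compositionality of bunch-extension plus monotonicity, and part~(ii) by induction on $\phi$ with the statement quantified over all contextual bunches, where the atomic case rests on the \myhyperlink{def:derivability-in-a-base:cont} closure condition, the clauses quantifying over an atom reduce to that atomic case (no IH needed), and the two implications invoke the IH on $\phi_2$ at the enlarged contextual bunch. The paper's proof makes exactly these moves, explicitly working the $\lor$ and $\to$ cases and noting, as you do, that only the implication cases genuinely require the induction hypothesis.
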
 
%
\begin{proof}
   Since $\at{P}(\at{S}) \bunchWeakerThan \at{P}(\at{S} \acomma \at{S}')$, claim~\myhyperlink{item:supp-resource-weaken-deep} follows from Proposition~\ref{prop:monotonicity}. It remains to show claim~\myhyperlink{item:supp-resource-contract-deep}. We proceed by induction on $\phi$.
   \begin{itemize}[label=--]
       \item \textsc{base case.} This follows immediately by~\myhyperlink{def:derivability-in-a-base:cont} in Definition~\ref{def:derivability-in-a-base}.
       \item \textsc{inductive step.} We demonstrate the case for $\phi = \phi_1 \lor \phi_2$ and $\phi = \phi_1 \to \phi_2$, the others being similar. 
    \begin{itemize}[label=--]
    
        \item Assume $\supp{\at{P}(\at{S}\aacomma \at{S})}{\baseB} \phi_1 \lor \phi_2$. By \eqref{cl:BI-BeS-supp:disjunction}, the consequent of~\myhyperlink{item:supp-resource-contract-deep} is equivalent to the following:
        \[
        \forall \base{X} \baseGeq \base{B}, \forall \at{U}(\cdot) \in \BunchesWithHole(\At), \forall \at{p} \in \At, \text{if } \phi_1 \supp{\at{U}(\cdot)}{\base{X}} \at{p} \text{ and } \phi_2 \supp{\at{U}(\cdot)}{\base{X}} \at{p}, \text{ then} \supp{\at{U(P(S))}}{\base{X}} \at{p} 
        \]
    Let $\at{V}(\cdot) \in \BunchesWithHole(\At)$ and $\baseC \supseteq \baseB$ and $\at{q} \in \Atoms$ be arbitrary such that $\phi_1 \supp{\at{V}(\cdot)}{\base{C}} \at{q}$ and $\phi_2 \supp{\at{V}(\cdot)}{\base{C}} \at{q}$. We require to show $\supp{\at{V(P(S))}}{\base{C}} \at{q}$.  By \eqref{cl:BI-BeS-supp:disjunction} on the assumption, we have $\supp{\at{V(P(S \aacomma S))}}{\base{C}} \at{q}$. The result follows immediately by~\myhyperlink{def:derivability-in-a-base:cont} in Definition~\ref{def:derivability-in-a-base}.

    \item Assume $\supp{\at{P}(\at{S}\aacomma \at{S})}{\baseB} \phi_1 \to \phi_2$. By \eqref{cl:BI-BeS-supp:implication} and \eqref{cl:BI-BeS-supp:inf}, the consequent of~\myhyperlink{item:supp-resource-contract-deep} is equivalent to the following:
    \[
    \mbox{$\forall \baseX \baseGeq \baseB, \forall \at{U}(\cdot) \in \BunchesWithHole(\At), \forall \at{p} \in \At,$ if $\supp{\at{U}}{\baseX} \phi$, then $\supp{\at{P(S)}\aacomma\at{U}}{\baseX} \psi$}
    \]
    Let $\at{V} \in \Bunches(\At)$ and $\baseC\supseteq \baseB$ be arbitrary such that $\supp{\at{V}}{\baseC} \phi_1$. We require to show $\supp{\at{P(S)} \aacomma \at{V}}{\base{C}} \phi_2$. 
    By \eqref{cl:BI-BeS-supp:implication} on the assumption, we have  $\supp{\at{P(S \aacomma S)}\aacomma \at{V}}{\base{C}} \phi_2$. The result follows from the induction hypothesis (IH).
    \end{itemize}
This completes the case analysis. Observe that in the $\lor$-case we did not require the IH.
    \end{itemize}
This completes the induction. 
\end{proof}

\begin{proposition}
\label{lem:supp-bunch-weaken-contract-deep}
    For any base $\baseB$, $\at{S} \in \Bunches(\At)$, $\varDelta, \varDelta' \in \Bunches(\Formulas)$, $\varGamma(\cdot) \in \BunchesWithHole(\Formulas)$, 
    \[
    \mbox{\myhypertarget{item:supp-bunch-weaken-deep}{(i)} if $\supp{\at{S}}{\baseB} \varGamma(\varDelta \acomma \varDelta')$, then $\supp{\at{S}}{\baseB} \varGamma(\varDelta)$} \qquad
        \mbox{\myhypertarget{item:supp-bunch-contract-deep}{(ii)} if $\supp{\at{S}}{\baseB} \varGamma(\varDelta)$, then $\supp{\at{S}}{\baseB} \varGamma(\varDelta \acomma \varDelta)$.} 
    \]
\end{proposition}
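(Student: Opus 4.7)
The plan is to prove (i) and (ii) simultaneously by induction on the hole depth $\holeDepth{\varGamma(\cdot)}$ (Definition~\ref{def:holeDepth}). Since the induction is on $\varGamma(\cdot)$ and not on the inner bunch $\varDelta$, the two claims do not feed into one another; they share only the common decomposition of $\varGamma(\cdot)$ used in the inductive step.

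In the base case, $\varGamma(\cdot) = (\cdot)$ and both claims reduce to weakening and contraction of support at the outermost additive context-former. For (i), I unfold \eqref{cl:BI-BeS-supp:acomma} on $\supp{\at{S}}{\baseB} \varDelta \acomma \varDelta'$ to obtain $\at{Q}_1$ with $\at{S} \bunchStrongerThan \at{Q}_1$ and $\supp{\at{Q}_1}{\baseB} \varDelta$, after which Lemma~\ref{lem:supp-bunch-monotone-on-resource} delivers $\supp{\at{S}}{\baseB} \varDelta$. For (ii), I re-package $\supp{\at{S}}{\baseB} \varDelta$ via \eqref{cl:BI-BeS-supp:acomma} by taking $\at{Q}_1 = \at{Q}_2 = \at{S}$, which is admissible because $\at{S} \bunchStrongerThan \at{S}$ (witnessed via $\at{S} \equiv \at{S}[\at{S} \mapsto \at{S} \acomma \aunit]$ together with $\at{S} \acomma \aunit \equiv \at{S}$). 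For the inductive step, I decompose $\varGamma(\cdot) \equiv \varGamma'(\cdot) \commavar \varTheta$ with $\commavar \in \{\aacomma, \mmcomma\}$ and $\holeDepth{\varGamma'(\cdot)}$ strictly smaller, just as in the proof of Proposition~\ref{prop:bunch-weaker-compositional}. Unfolding the corresponding clause \eqref{cl:BI-BeS-supp:acomma} or \eqref{cl:BI-BeS-supp:mcomma} of the hypothesis yields bunch-extension witnesses on $\at{S}$ and a sub-judgement supporting $\varGamma'(\varDelta \acomma \varDelta')$ for (i) or $\varGamma'(\varDelta)$ for (ii); applying the induction hypothesis to that sub-judgement and re-packaging via the same clause closes the case, with the witnesses on $\at{S}$ and the support for $\varTheta$ carried through unchanged.

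The main thing to watch is bookkeeping: arranging the decomposition up to coherent equivalence so that the hole lies inside $\varGamma'(\cdot)$, and keeping the $\aacomma$ and $\mmcomma$ sub-cases of the inductive step separate (each invokes a different clause of Figure~\ref{fig:suppBI}, but both use the induction hypothesis in the same way). I do not anticipate a genuine semantic obstacle, since the argument is the formula-side analogue of Proposition~\ref{prop:supp-resource-weaken-contract-deep}, which established the corresponding facts on the resource side; the only genuinely new ingredient is the appeal to monotonicity on the resource side (Lemma~\ref{lem:supp-bunch-monotone-on-resource}) to absorb the existentially-introduced witnesses $\at{Q}_1$ back into $\at{S}$.
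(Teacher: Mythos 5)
Your proof is correct and follows essentially the same route as the paper's: induction on $\holeDepth{\varGamma(\cdot)}$, with the base case handled by unfolding \eqref{cl:BI-BeS-supp:acomma} and absorbing the existential witnesses back into $\at{S}$ via monotonicity, and the inductive step decomposing $\varGamma(\cdot) \equiv \varSigma(\cdot) \commavar \varTheta$ and re-packaging after applying the induction hypothesis. The only cosmetic differences are that you spell out the base case of (ii) (which the paper dismisses as ``similar'') and that you cite the bunch-level Lemma~\ref{lem:supp-bunch-monotone-on-resource} where the paper cites the formula-level Proposition~\ref{prop:monotonicity}; your citation is, if anything, the more precise one since $\varDelta$ is a bunch.
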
 
%
\begin{proof}
    We prove claim~\myhyperlink{item:supp-bunch-weaken-deep}, the other being similar. We proceed by induction on $\holeDepth{\varGamma(\cdot)}$ --- if $\supp{\at{S}}{\baseB} \varGamma(\varDelta \acomma \varDelta')$, then $\supp{\at{S}}{\baseB} \varGamma(\varDelta)$. 
    
    \begin{itemize}[label=--]
    \item \textsc{base case.}  $\holeDepth{\varGamma(\cdot)}= 0$. By Definition~\ref{def:holeDepth}, $\varGamma(\cdot) = (\cdot)$, so the statement becomes: if $\supp{\at{S}}{\baseB} \varDelta \acomma \varDelta'$, then $\supp{\at{S}}{\baseB} \varDelta$. By \eqref{cl:BI-BeS-supp:acomma}, from $\supp{\at{S}}{\baseB} \varDelta \acomma \varDelta'$, infer $\exists \at{Q}_1, \at{Q}_2 \in \Bunches(\At)$ such that $\at{S} \bunchStrongerThan \at{Q}_1$, $\at{S} \bunchStrongerThan \at{Q}_2$, $\supp{\at{Q}_1}{\baseB} \varDelta$, and $\supp{\at{Q}_2}{\baseB} \varDelta'$. Hence, by monotonicity (Proposition~\ref{prop:monotonicity}), infer $\supp{\at{S}}{\baseB} \varDelta$ --- that is, $\supp{\at{S}}{\baseB} \varGamma(\varDelta)$.

    \item \textsc{inductive step.} Suppose $\holeDepth{\varGamma(\cdot)} = d + 1$ for some $d \in \natN$. Therefore (by Definition~\ref{def:holeDepth}), $\varGamma(\varDelta) \equiv \varSigma(\varDelta) \commavar \varTheta$, where $\holeDepth{\varSigma(\cdot)} = d$, and $\commavar$ is either $\aacomma$ or $\mmcomma$. We consider the two cases of $\commavar$ separately:
    \begin{itemize}[label=--]
        \item $\commavar= \,\,\aacomma$. By \eqref{cl:BI-BeS-supp:acomma}, from $\supp{\at{S}}{\baseB} \varSigma(\varDelta \acomma \varDelta') \acomma \varTheta$, infer  $\exists \at{Q}_1, \at{Q}_1  \in \Bunches(\At)$ such that $\at{S} \bunchStrongerThan \at{Q}_1$, $\at{S} \bunchStrongerThan \at{Q}_2$, $\supp{\at{Q}_1}{\baseB} \varSigma(\varDelta \acomma \varDelta')$, and $\supp{\at{Q}_2}{\baseB} \varTheta$. By the induction hypothesis (IH), $\supp{\at{Q}_1}{\baseB} \varSigma(\varDelta)$. Hence, by \eqref{cl:BI-BeS-supp:acomma}, infer $\supp{\at{S}}{\baseB} \varSigma(\varDelta) \acomma \varTheta$ --- that is, $\supp{\at{S}}{\baseB} \varGamma(\varDelta)$. 
        \item $\commavar = \,\,\mmcomma$.  By \eqref{cl:BI-BeS-supp:mcomma}, from $\supp{\at{S}}{\baseB} \varSigma(\varDelta \acomma \varDelta') \mcomma \varTheta$, infer $\exists \at{Q}_1, \at{Q}_2 \in \Bunches(\At)$ such that $\at{S} \bunchStrongerThan (\at{Q}_1 \mcomma \at{Q}_2)$, $\supp{\at{Q}_1}{\baseB} \varSigma(\varDelta \acomma \varDelta')$, and $\supp{\at{Q}_2}{\baseB} \varTheta$. By the IH, from $\supp{\at{Q}_1}{\baseB} \varSigma(\varDelta \acomma \varDelta')$, infer  $\supp{\at{Q}_2}{\baseB} \varSigma(\varDelta)$. Hence, by \eqref{cl:BI-BeS-supp:mcomma}, infer $\supp{\at{S}}{\baseB} \varSigma(\varDelta) \mcomma \varTheta$ --- that is, $\supp{\at{S}}{\baseB} \varGamma(\varDelta)$. 
    \end{itemize}
    This completes the case analysis.
    \end{itemize}
    This completes the induction.
\end{proof}

\begin{proposition}[Compositionality] \label{lem:support-compositionality}
    If $\phi \supp{\at{S}(\cdot)}{\baseB} \psi$ and $\psi \supp{\at{T}(\cdot)}{\baseB} \chi$, then $\phi \supp{\at{T}(\at{S}(\cdot))}{\baseB} \chi$.
\end{proposition}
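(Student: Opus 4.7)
The plan is simply to unfold the support judgement via clause~\eqref{cl:BI-BeS-supp:inf} twice and chain the two implications. Unfolding (Inf), the first hypothesis $\phi \supp{\at{S}(\cdot)}{\baseB} \psi$ says that for every $\baseX \baseGeq \baseB$ and every $\at{U} \in \Bunches(\At)$, if $\supp{\at{U}}{\baseX} \phi$ then $\supp{\at{S}(\at{U})}{\baseX} \psi$; likewise the second hypothesis $\psi \supp{\at{T}(\cdot)}{\baseB} \chi$ says that for every $\baseX \baseGeq \baseB$ and every $\at{V} \in \Bunches(\At)$, if $\supp{\at{V}}{\baseX} \psi$ then $\supp{\at{T}(\at{V})}{\baseX} \chi$.

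To establish the conclusion, I would fix an arbitrary base $\baseX \baseGeq \baseB$ and bunch $\at{U} \in \Bunches(\At)$ with $\supp{\at{U}}{\baseX} \phi$. Applying the first hypothesis at $\baseX$ and $\at{U}$ yields $\supp{\at{S}(\at{U})}{\baseX} \psi$. Instantiating the second hypothesis with the base $\baseX$ and the bunch $\at{V} := \at{S}(\at{U})$ then yields $\supp{\at{T}(\at{S}(\at{U}))}{\baseX} \chi$. Since $\baseX$ and $\at{U}$ were arbitrary, one more application of clause~\eqref{cl:BI-BeS-supp:inf} gives $\phi \supp{\at{T}(\at{S}(\cdot))}{\baseB} \chi$, as required.

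There is no substantive obstacle here: the argument is essentially the transitivity of (Inf) along the composed contextual bunch. The only point worth checking is that $\at{T}(\at{S}(\cdot))$ is itself a contextual bunch, which is immediate from the identification of $\BunchesWithHole(\At)$ with bunches containing a single occurrence of the hole symbol $\circ$: if $\at{S}(\cdot)$ is identified with $\varGamma(\circ)$ and $\at{T}(\cdot)$ with $\varDelta(\circ)$, then the composite corresponds to $\varDelta(\varGamma(\circ))$, which still has exactly one hole. Notably, this argument does not require monotonicity or any structural closure property of support, only the form of clause~\eqref{cl:BI-BeS-supp:inf}.
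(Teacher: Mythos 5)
Your proof is correct and is essentially the same as the paper's: both unfold clause~\eqref{cl:BI-BeS-supp:inf} twice, chain the two implications at a common extension $\baseX \baseGeq \baseB$ with the intermediate bunch $\at{S}(\at{U})$, and conclude by one more application of \eqref{cl:BI-BeS-supp:inf}. Your additional remark that $\at{T}(\at{S}(\cdot))$ is again a contextual bunch is a valid (if tacit in the paper) observation, and you are right that no separate appeal to monotonicity is needed since \eqref{cl:BI-BeS-supp:inf} already quantifies over all base-extensions.
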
%
\begin{proof}   
    It suffices to fix some $\baseC \baseGeq \baseB$ and $\at{P} \in \Bunches(\At)$ such that $\supp{\at{P}}{\baseC} \varphi$, and prove $\supp{\at{T}(\at{S}(\at{P}))}{\baseC} \chi$. $\supp{\at{P}}{\baseC} \varphi$ and $\phi \supp{\at{S}(\cdot)}{\baseB} \psi$ imply $\supp{\at{S}(\at{P})}{\baseC} \psi$; together with $\psi \supp{\at{T}(\cdot)}{\baseB} \chi$, it follows that $\supp{\at{T}(\at{S} (\at{P}) )}{\baseC} \chi$. 
\end{proof}
This final result implies that this semantics is \emph{categorical} in the sense that the B-eS of IPL by Sandqvist~\cite{Sandqvist2015base} is categorical --- see Pym et al.~\cite{Pym2022catpts}. 
Briefly, objects are bunches of formulae (i.e., $\Bunches(\Formulas)$) and morphisms  $\varGamma \to \varDelta$ are pairs $\langle \base{B}, \at{S}(\cdot)\rangle$ such that $\varGamma \supp{\at{S}(\cdot)}{\baseB} \varDelta$.
In this set-up, Proposition~\ref{prop:monotonicity} and Lemma~\ref{lem:support-compositionality} enable composition to be defined as follows:
    \[
    \begin{tikzcd}[row sep = scriptsize]
        & \Delta \ar[dr, "\lrangle{\baseB', \at{T}(\cdot)}"] & \\
        \Gamma \ar[ur, "\lrangle{\baseB, \at{S}(\cdot)}"] \ar[rr, "\lrangle{\baseB \cup \baseB', \at{T}(\at{S}(\cdot))}"'] & & \Sigma
    \end{tikzcd}
    \]

This can be viewed as an enrichment of the traditional category-theoretic treatment of IPL in which Heyting algebras are regarded in terms of posets --- see, for example, Lambek and Scott~\cite{lambek1988introduction}. 

\subsection{Example: Rule-based Access Control Policies}
\label{sec:eg-access-control}
In this section, we introduce a toy example to demonstrate the application  of the B-eS given in this paper to \emph{access control}, which is central to security in computer systems --- see, for example,  Abadi~\cite{abadi2003logic} --- and in other systems contexts, such as the passage of people from ground-side to air-side in airports. 

Access control policies specify declaratively the conditions under which agents are intended to be able to access resources in a system. Access control policies are implemented by operational authentication and authorization technologies. Our example lies at the declarative level. 

Suppose you subscribe to the movie streaming service \emph{BunchedFlix}. There are three films on the platform: $\at{a}$, $\at{b}$, and $\at{c}$. Users can become members or purchase tokens that are consumed to access the films. More specifically:
\begin{itemize}[label = --]
    \item to watch $\at{a}$, a user must either be a member or pay three tokens
    \item to watch $\at{b}$, a user must be a member
    \item to watch $\at{c}$, a user must be a member and still pay two tokens.
\end{itemize}
These access-control conditions for \emph{BunchedFlix} can readily be modelled using the B-eS for BI. 

Let $\at{p}^k := \at{p} \mcomma \ldots \mcomma \at{p}$ with $k$-copies. The following base (consisting solely of atomic axioms; that is, atomic rules with empty sets of premisses) captures the access control conditions above:
\[
\base{A}:= \left\{\infer{\at{m} \seq \at{a}}{}\right\}\cup\left\{\infer{\at{p}^k \seq \at{a}}{} \mid k\geq 3\right\} \cup \left\{\infer{\at{m} \seq \at{b}}{}\right\} \cup \left\{\infer{\at{m} \fatsemi \at{p}^k \seq \at{c}}{} \mid k \geq 2\right\}
\]
Here $\at{m}$ denotes membership, and $\at{p}$ denotes a token. Consider an arbitrary user: let $\at{u}$ denote her membership status --- that is, $\at{u}:=\at{m}$ if she is a member of \emph{BunchedFlix}, and $\at{u}:= \aunit$ otherwise. Suppose that she has $k$ tokens on her account. To check that she has the right to watch the film $\at{x} \in \{\at{a},\at{b},\at{c}\}$ on \emph{BunchedFlix} is modelled by the support judgement $\at{u}\supp{\at{p}^k\aacomma(\cdot)}{\base{A}
} \at{x}$. 

For example, let Alice be one such user, and suppose she has $4$ tokens. It is easy to see that Alice has the right to watch all three films $\at{a}$, $\at{b}$, and $\at{c}$ if she is a member, and otherwise only film $\at{a}$.

Suppose Alice is indeed member of \emph{BunchedFlix}. Can she watch both film $a$ and film $\at{c}$? Meaning, in a single transaction, could she get both. In one sense of `both', yes; in another sense, no. As stated above, she may indeed get \emph{either} film, but she may \emph{not} get both at the same time. The first reading corresponds to the judgement $\at{u} \supp{\at{p}^4 \aacomma (\cdot)}{\base{A}} \at{a} \land \at{c}$, which obtains, and the second to $\at{u} \supp{\at{p}^4 \aacomma (\cdot)}{\base{A}} \at{a} \mand \at{c}$, which does not obtain because she lacks the tokens. Of course, being a member, Alice could watch $a$ first and \emph{then} watch $c$, but these are two separate transactions.  This illustrates the resource-sensitivity of the semantics and how it is handled by the additive and multiplicative parts of the logic.

Observe that not much of the possible structures in the B-eS of BI are required to make this example work, despite its making some delicate distinctions about access (i.e., what it means to access both films $\at{a}$ and $\at{c}$). For example, all the rules of $\base{A}$ are axioms, while the general set-up of the B-eS indeed allows rules with premisses. Indeed, it is evident that simpler logics can easily represent this example. Our purpose here is to give a simple example of how BI's B-eS works. 

It should be clear that this basic example suggests many generalizations and variations. In particular, in \cite{MFPS-IRS-2024}, the present authors have shown how BI's B-eS gives rise to an 
`inferentialist resource semantics' in which both BI's sharing/separation interpretation and linear logic's `number-of-uses' reading of propositions as resources can be understood uniformly. It is also shown in \cite{MFPS-IRS-2024} how such an inferentialist resource semantics might provide a basis for an inferentialist account of reasoning about distributed systems (cf. \cite{BarwiseSeligman1997}).

\section{Soundness} \label{sec:soundness}
%
%
\begin{theorem}[Soundness]
\label{thm:BI-BeS-soundness}
    If $\varGamma \proves \phi$, then $\varGamma \supp{}{} \phi$. 
\end{theorem}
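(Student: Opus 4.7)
The plan is to proceed by induction on the derivation of $\varGamma \proves[\system{NBI}] \phi$. By Definition~\ref{def:validity} and clause~\eqref{cl:BI-BeS-supp:inf}, the goal at each step is to show that for every base $\baseX$ and every $\at{U}\in\Bunches(\At)$ with $\supp{\at{U}}{\baseX}\varGamma$, one has $\supp{\at{U}}{\baseX}\phi$, given the analogous statements for the premiss sequents of the last rule applied.

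The identity and structural rules are immediate: $\ax$ is trivial; $\exch$ follows because bunch support respects coherent equivalence (both directions of $\equiv$ are absorbed into $\bunchStrongerThan$); and $\weak$, $\cont$ are exactly Proposition~\ref{lem:supp-bunch-weaken-contract-deep}(i),(ii). Each introduction rule is dispatched by unfolding the relevant clause of Figure~\ref{fig:suppBI}: $\irn\to$ and $\irn\wand$ are almost definitional, invoking clauses~\eqref{cl:BI-BeS-supp:implication}, \eqref{cl:BI-BeS-supp:wand} together with Proposition~\ref{prop:monotonicity} to align base-extensions; $\irn\land$, $\irn\mand$, and $\irn\lor$ follow from clauses~\eqref{cl:BI-BeS-supp:conjunction}, \eqref{cl:BI-BeS-supp:mand}, \eqref{cl:BI-BeS-supp:disjunction} using Proposition~\ref{prop:support-identity}; the unit introductions use clauses~\eqref{cl:BI-BeS-supp:top}, \eqref{cl:BI-BeS-supp:mtop} with monotonicity along $\bunchStrongerThan$. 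The eliminations $\ern\to$ and $\ern\wand$ are handled directly by decomposing the assumed support via clauses~\eqref{cl:BI-BeS-supp:acomma} and~\eqref{cl:BI-BeS-supp:mcomma} and then applying the induction hypotheses; monotonicity (Proposition~\ref{prop:monotonicity}) transports the conclusion back to the given $\at{U}$.

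The main difficulty lies in the remaining eliminations $\ern\mand, \ern\land, \ern\lor, \ern\mtop, \ern\top, \ern\bot$, whose clauses in Figure~\ref{fig:suppBI} guarantee behaviour only for atomic conclusions. To dispatch them, one must prove a family of \emph{generalised elimination lemmas} lifting each atomic-style clause to an arbitrary conclusion $\chi\in\Formulas$; for example, the one for $\mand$ asserts that $\supp{\at{V}}{\baseX}\phi\mand\psi$ together with $\phi\multcomma\psi\supp{\at{W}(\cdot)}{\baseX}\chi$ imply $\supp{\at{W}(\at{V})}{\baseX}\chi$. Each such lemma is proved by induction on $\chi$, the atomic case being immediate from the defining clause and composite cases being handled by unfolding the principal connective of $\chi$ and invoking the induction hypothesis together with Proposition~\ref{prop:monotonicity} and Proposition~\ref{prop:bunch-weaker-compositional}. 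Analogous statements are needed for $\land, \lor, \top, \mtop$, and a degenerate variant for $\bot$.

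Given these generalised elimination lemmas, each hole-bearing elimination is treated by a common template: apply the induction hypothesis to the minor premiss to obtain support for the eliminated formula at some witnessing sub-bunch; recast $\varGamma(\cdot)$ as the contextual bunch required by the generalised elimination lemma via Proposition~\ref{lem:support-compositionality}; and then invoke the lemma, transporting the outcome back to $\at{U}$ via Proposition~\ref{prop:monotonicity} and Proposition~\ref{prop:bunch-weaker-compositional}. The $\ern\lor$ case is the most delicate, since clause~\eqref{cl:BI-BeS-supp:disjunction} insists on a common contextual bunch across both minor branches, so the construction must be done uniformly before invoking generalised $\lor$-elimination. The unit eliminations $\ern\mtop$ and $\ern\top$ additionally exploit the substitution behaviour of $\munit$ and $\aunit$ inherent in the quotient treatment of bunches from Section~\ref{sec:bunches_modulo_ce}, while $\ern\bot$ is the degenerate instance extending atomic ex-falso (clause~\eqref{cl:BI-BeS-supp:bot}) to arbitrary formulas.
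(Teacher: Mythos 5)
Your overall architecture coincides with the paper's: a rule-by-rule induction on the $\system{NBI}$ derivation, with the introduction and structural rules dispatched exactly as you describe, and with the eliminations for $\mand$, $\land$, $\lor$, $\top$, $\mtop$, $\bot$ reduced to generalised elimination lemmas proved by induction on the conclusion formula $\chi$ (these are precisely Lemma~\ref{lem:soundness-key-base} and Lemma~\ref{lem:soundness-key-general}.\ref{lem:general-bot-elim-sound} in the paper). The one genuine gap is the step you describe as ``recast $\varGamma(\cdot)$ as the contextual bunch required by the generalised elimination lemma via Proposition~\ref{lem:support-compositionality}''. Compositionality is a formula-level principle (from $\phi \supp{\at{S}(\cdot)}{\baseB} \psi$ and $\psi \supp{\at{T}(\cdot)}{\baseB} \chi$ infer $\phi \supp{\at{T}(\at{S}(\cdot))}{\baseB} \chi$); it cannot convert the hypothesis $\varGamma(\phi \mcomma \psi) \supp{\at{R}(\cdot)}{\baseB} \chi$, where the eliminated formula sits at arbitrary depth inside a bunch, into the top-level judgement $\phi \mcomma \psi \supp{\at{R}'(\cdot)}{\baseB'} \chi$ that your generalised elimination lemma consumes. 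Nor can you instead substitute $\phi \mcomma \psi$ for $\varDelta$ inside $\varGamma$ and then use the major premiss, because $\supp{\at{V}}{\baseB} \phi \mand \psi$ does not entail $\supp{\at{V}}{\baseB} \phi \mcomma \psi$: the clause \eqref{cl:BI-BeS-supp:mand} is elimination-shaped, not introduction-shaped.

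What is actually needed is a second, outer induction on $\holeDepth{\varGamma(\cdot)}$ (the paper's Lemma~\ref{lem:soundness-key-general}): at each layer one writes $\varGamma(\cdot) \equiv \varSigma(\cdot) \commavar \varTheta$, unfolds \eqref{cl:BI-BeS-supp:acomma} or \eqref{cl:BI-BeS-supp:mcomma} to split the witnessing atoms into a part for $\varSigma(\cdot)$ and a part $\at{Q}$ for $\varTheta$, absorbs $\at{Q}$ into the atomic contextual bunch (passing from $\at{R}(\cdot)$ to $\at{R}((\cdot) \commavar \at{Q})$), and recurses down to the hole, where your top-level lemma finally applies. In the additive case the two witnesses are only separately $\bunchWeakerThan$ the ambient bunch $\at{S}$, so the recursion delivers $\supp{\at{R}(\at{S} \aacomma \at{S})}{\baseC} \chi$ and one must finish with the contraction property (Proposition~\ref{prop:supp-resource-weaken-contract-deep}) to recover $\supp{\at{R}(\at{S})}{\baseC} \chi$. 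Without this hole-depth induction, and the contraction step it forces, the deep eliminations do not reduce to your top-level lemmas, so as written the plan does not close.
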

The proof proceeds by the typical method of showing that validity respects the inductive definition of logical consequence as determined by provability; for example, the $\ern \mand$-rule in $\system{NBI}$ means we expect validity to satisfy the following:
\[
\mbox{If $\Gamma(\phi \mcomma \psi) \supp{}{} \chi$ and $\Delta \supp{}{} \phi \mand \psi$, then $\Gamma(\Delta) \supp{}{} \chi$}
\]
In some cases, this requires knowing that support \emph{locally} (i.e., with arbitrary base and resources) respect the corresponding rule; for example, for any base $\baseB$ and $R(\cdot) \in \BunchesWithHole{\At}$,
\[
\mbox{If $\Gamma(\phi \mcomma \psi) \supp{R(\cdot)}{\baseB} \chi$ and $\Delta \supp{}{\baseB} \phi \mand \psi$, then $\Gamma(\Delta) \supp{R(\cdot)}{\baseB} \chi$}
\]
These propositions are collected in Lemma~\ref{lem:soundness-key-general}, whose base case has been extracted as Lemma~\ref{lem:soundness-key-base} for readability.

\begin{proof}
Given the inductive definition of $\proves[\system{NBI}]$ (see Figure~\ref{fig:NBI}), it suffices to prove the following statements:


\begin{itemize}[align=right]
    \item[\myhypertarget{item:BI-BeS-sound-axiom}{\text{Ax}}] \quad $\phi \supp{}{} \phi$.
    \item[\myhypertarget{item:BI-BeS-sound-buncheq}{E}] \quad If $\varGamma' \supp{}{} \phi$ and $\varGamma \buncheq \varGamma'$, then $\varGamma \supp{}{} \phi$.
    \item[\myhypertarget{item:BI-BeS-sound-weakening}{W}] \quad If $\varGamma(\varDelta) \supp{}{} \phi$, then $\varGamma(\varDelta \acomma \varDelta') \supp{}{} \phi$. 
    \item[\myhypertarget{item:BI-BeS-sound-contraction}{C}] \quad If $\varGamma(\varDelta \acomma \varDelta) \supp{}{} \phi$, then $\varGamma(\varDelta) \supp{}{} \phi$
    \item[\myhypertarget{item:BI-BeS-sound-mtop-I}{$\mtop$I}] \quad $\munit \supp{}{} \mtop$. 
    \item[\myhypertarget{item:BI-BeS-sound-mtop-E}{\ensuremath{\mtop}E}] \quad If $\varGamma(\munit) \supp{}{} \chi$ and $\varDelta \supp{}{} \mtop$, then $\varGamma(\varDelta) \supp{}{} \chi$. 
    \item[\myhypertarget{item:BI-BeS-sound-wand-I}{$\wand$I}] \quad If $\varGamma \mcomma \phi \supp{}{} \psi$, then $\varGamma \supp{}{} \phi \wand \psi$. 
    \item[\myhypertarget{item:BI-BeS-sound-wand-E}{$\wand$E}] \quad If $\varGamma \supp{}{} \phi \wand \psi$ and $\varDelta \supp{}{} \phi$, then $\varGamma \mcomma \varDelta \supp{}{} \psi$. 
    \item[\myhypertarget{item:BI-BeS-sound-mand-I}{$\mand$I}] \quad If $\varGamma \supp{}{} \phi$ and $\varDelta \supp{}{} \psi$, then $\varGamma \mcomma \varDelta \supp{}{} \phi \mand \psi$.
    \item[\myhypertarget{item:BI-BeS-sound-mand-E}{$\mand$E}] \quad If $\varGamma(\phi \mcomma \psi) \supp{}{} \chi$ and $\varDelta \supp{}{} \phi \mand \psi$, then $\varGamma(\varDelta) \supp{}{} \chi$. 
    \item[\myhypertarget{item:BI-BeS-sound-top-I}{$\top$I}] \quad $\aunit \supp{}{} \top$. 
    \item[\myhypertarget{item:BI-BeS-sound-top-E}{$\top$E}] \quad If $\varGamma(\aunit) \supp{}{} \chi$ and $\varDelta \supp{}{} \top$, then $\varGamma(\varDelta) \supp{}{} \chi$. 
    \item[\myhypertarget{item:BI-BeS-sound-implication-I}{$\to$I}] \quad If $\varGamma \acomma \phi \supp{}{} \psi$ then $\varGamma \supp{}{} \phi \to \psi$. 
    \item[\myhypertarget{item:BI-BeS-sound-implication-E}{$\to$E}] \quad If $\varGamma \supp{}{} \phi \to \psi$ and $\varDelta \supp{}{} \phi$, then $\varGamma \acomma \varDelta \supp{}{} \psi$. 
    \item[\myhypertarget{item:BI-BeS-sound-conjunction-I}{$\land$I}] \quad If $\varGamma \supp{}{} \phi$ and $\varDelta \supp{}{} \psi$, then $\varGamma \acomma \varDelta \supp{}{} \phi \land \psi$.
    \item[\myhypertarget{item:BI-BeS-sound-conjunction-E}{$\land$E}] \quad If $\varGamma(\phi \acomma \psi) \supp{}{} \chi$ and $\varDelta \supp{}{} \phi \land \psi$, then $\varGamma(\varDelta) \supp{}{} \chi$. 
    \item[\myhypertarget{item:BI-BeS-sound-disjunction-I}{$\lor$I}] \quad If $\varGamma \supp{}{} \phi_i$, then $\varGamma \supp{}{} \phi_1 \lor \phi_2$, for $i = 1, 2$. 
    \item[\myhypertarget{item:BI-BeS-sound-disjunction-E}{$\lor$E}] \quad If $\varGamma(\phi) \supp{}{} \chi$, $\varGamma(\psi) \supp{}{} \chi$, and $\varDelta \supp{}{} \phi \lor \psi$, then $\varGamma(\varDelta) \supp{}{} \chi$. 
    \item[\myhypertarget{item:BI-BeS-sound-bot-E}{$\bot$E}] \quad If $\varGamma \supp{}{} \bot$, then $\varGamma \supp{}{} \phi$.
\end{itemize}

We consider each proposition separately. Note that some cases require subinduction, which we move to separate lemmas listed after the current proof. 
\begin{enumerate}
    \item[\myhyperlink{item:BI-BeS-sound-axiom}.] This holds \emph{a fortiori}: according to \eqref{cl:BI-BeS-supp:inf} and Definition~\ref{def:support}, $\phi \supp{}{} \phi$ says that for arbitrary base $\baseX$ and $\at{S} \in \Bunches(\At)$, if $\supp{\at{S}}{\baseB} \phi$, then $\supp{\at{S}}{\baseB} \phi$.
    \item[\myhyperlink{item:BI-BeS-sound-buncheq}.] It suffices to show that whenever $\varGamma \buncheq \varGamma'$, $\supp{\at{S}}{\baseB} \varGamma$ iff $\supp{\at{S}}{\baseB} \varGamma'$. Note that $\buncheq$ is the reflexive transitive closure obtained by applying associativity and unitality laws, so it suffices to show that the supporting relation is invariant under application of associativity and unitality. Associativity is trivial, so restrict attention to unitality. In particular, we verify the multiplicative case, namely $\supp{\at{S}}{\baseB} \varGamma \mcomma \munit$ iff $\supp{\at{S}}{\baseB} \varGamma$. 
    By definition, $\supp{\at{S}}{\baseB} \varGamma \mcomma \munit$ iff there exist $\at{Q}$ and $\at{R}$ satisfying $\at{S} \bunchStrongerThan \at{Q} \mcomma \at{R}$ such that $\supp{\at{Q}}{\baseB} \varGamma$ and $\supp{\at{R}}{\baseB} \munit$. This means $\at{R} \bunchStrongerThan \munit$, thus $\at{S} \bunchStrongerThan \at{Q} \mcomma \at{R} \bunchStrongerThan \at{Q} \mcomma \munit \buncheq \at{Q}$. Therefore $\supp{\at{S}}{\baseB} \varGamma \mcomma \munit$ iff there exists $\at{Q} \bunchWeakerThan \at{S}$ such that $\supp{\at{S}}{\baseB} \varGamma$, namely $\supp{\at{S}}{\baseB} \varGamma$. 
    \item[\myhyperlink{item:BI-BeS-sound-weakening}.] We assume that $\varGamma(\varDelta) \supp{}{} \phi$, and show that $\varGamma(\varDelta \acomma \varDelta') \supp{}{} \phi$ holds for an arbitrary $\varDelta' \in \Bunches(\Formulas)$. So let us fix some $\baseB$ and $\at{S}$ such that $\supp{\at{S}}{\baseB} \varGamma(\varDelta \acomma \varDelta')$, and show that $\supp{\at{S}}{\baseB} \phi$. 
    According to Lemma~\ref{lem:supp-bunch-weaken-contract-deep}, $\supp{\at{S}}{\baseB} \varGamma(\varDelta \acomma \varDelta')$ implies $\supp{\at{S}}{\baseB} \varGamma(\varDelta)$. This together with $\varGamma(\varDelta) \supp{}{} \phi$ conclude that $\supp{\at{S}}{\baseB} \phi$. 
    \item[\myhyperlink{item:BI-BeS-sound-contraction}.] We assume $\varGamma(\varDelta \acomma \varDelta) \supp{}{} \phi$, and show that $\varGamma(\varDelta) \supp{}{} \phi$. So we fix some $\baseB$ and $\at{S}$ such that $\supp{\at{S}}{\baseB} \varGamma(\varDelta)$, and show that $\supp{\at{S}}{\baseB} \phi$. By Lemma~\ref{lem:supp-bunch-weaken-contract-deep}, $\supp{\at{S}}{\baseB} \varGamma(\varDelta)$ implies $\supp{\at{S}}{\baseB} \varGamma(\varDelta \acomma \varDelta)$. This together with $\varGamma(\varDelta \acomma \varDelta) \supp{}{} \phi$ implies $\supp{\at{S}}{\baseB} \phi$.
    \item[\myhyperlink{item:BI-BeS-sound-mtop-I}.]  It suffices to fix some $\baseB$ and $\at{S}$ such that $\supp{\at{S}}{\baseB} \munit$, and then show $\supp{\at{S}}{\baseB} \mtop$. By Definition~\ref{def:support}, $\supp{\at{S}}{\baseB} \munit$ if and only if $\at{S} \bunchStrongerThan \munit$.
    To show $\supp{\at{S}}{\baseB} \mtop$, we fix some $\baseC \baseGeq \baseB$, $\at{P}(\cdot) \in \BunchesWithHole(\At)$, and $\at{q} \in \At$, such that $\supp{\at{P}(\munit)}{\baseC} \at{q}$, and show that $\supp{\at{P}(\at{S})}{\baseC} \at{q}$. Since $\at{S} \bunchStrongerThan \munit$, we have $\at{P}(\at{S}) \bunchStrongerThan \at{P}(\munit)$ (Proposition~\ref{prop:bunch-weaker-compositional}). By monotonicity (Proposition~\ref{prop:monotonicity}), $\supp{\at{P}(\munit)}{\baseC} \at{q}$ then implies $\supp{\at{P}(\at{S})}{\baseC} \at{q}$. 
    \item[\myhyperlink{item:BI-BeS-sound-mtop-E}.] This follows as a special case of Lemma~\ref{lem:soundness-key-general}.\ref{lem:soundness-key-mtop-general}. In particular, one takes there $\at{R}(\cdot)$ to be $(\cdot)$, and $\baseB$ to be the empty base $\emptybase$. 
    \item[\myhyperlink{item:BI-BeS-sound-wand-I}.] We assume $\varGamma \mcomma \phi \supp{}{} \psi$, and show $\varGamma \supp{}{} \phi \wand \psi$. Towards this, we further assume some $\baseB$ and $\at{S}$ satisfying $\supp{\at{S}}{\baseB} \varGamma$, and show $\supp{\at{S}}{\baseB} \phi \wand \psi$. By \eqref{cl:BI-BeS-supp:wand}, this amounts to showing $\phi \supp{\at{S} \mmcomma (\cdot)}{\baseB} \psi$. So we in addition fix some arbitrary base $\baseC \baseGeq \baseB$ and atomic bunch $\at{T}$ such that $\supp{\at{T}}{\baseC} \phi$, and then verify that $ \supp{\at{S} \mmcomma \at{T}}{\baseC} \psi$. Since $\supp{\at{S}}{\baseB} \varGamma$ and $\baseC \baseGeq \baseB$, we have $\supp{\at{S}}{\baseC} \varGamma$; so $\supp{\at{S} \mmcomma \at{T}}{\baseC} \varGamma \mcomma \phi$. Together with $\varGamma \mcomma \phi \supp{}{} \psi$, it follows that $\supp{\at{S} \mmcomma \at{T}}{\baseC} \psi$. 
    \item[\myhyperlink{item:BI-BeS-sound-wand-E}.] We assume $\varGamma \supp{}{} \phi \wand \psi$ and $\varDelta \supp{}{} \phi$, and prove $\varGamma \mcomma \varDelta \supp{}{} \psi$. For this, we fix some arbitrary $\baseB$ and atomic bunch $\at{S}$ such that $\supp{\at{S}}{\baseB} \varGamma \mcomma \varDelta$, and show $\supp{\at{S}}{\baseB} \psi$. By definition, $\supp{\at{S}}{\baseB} \varGamma \mcomma \varDelta$ means that there exist $\at{T}_1, \at{T}_2 \in \Bunches(\At)$ such that $\at{T}_1 \mcomma \at{T}_2 \bunchWeakerThan \at{S}$, $\supp{\at{T}_1}{\baseB} \varGamma$, and $\supp{\at{T}_2}{\baseB} \varDelta$. 
    Now $\varGamma \supp{}{} \phi \mto \psi$ and $\supp{\at{T}_1}{\baseB} \varGamma$ imply that $\supp{\at{T}_1}{\baseB} \phi \mto \psi$; that is, $\phi \supp{\at{T}_1 \mmcomma (\cdot)}{\baseB} \psi$. 
    $\varDelta \supp{}{} \phi$ and $\supp{\at{T}_2}{\baseB} \varDelta$ imply $\supp{\at{T}_2}{\baseB} \phi$. 
    Then $\phi \supp{\at{T}_1 \mmcomma (\cdot)}{\baseB} \psi$ together with $\supp{\at{T}_2}{\baseB} \phi$ imply $\supp{\at{T}_1 \mmcomma \at{T}_2}{\baseB} \psi$, thus $\supp{\at{S}}{\baseB} \psi$ by Proposition~\ref{prop:monotonicity}. 
    \item[\myhyperlink{item:BI-BeS-sound-mand-I}.] Suppose $\varGamma \supp{}{} \phi$ and $\varDelta \supp{}{} \psi$, we prove that $\varGamma \mcomma \varDelta \supp{}{} \phi \mand \psi$. So we fix arbitrary base $\baseB$ and $\at{S} \in \Bunches(\At)$ such that $\supp{\at{S}}{\baseB} \varGamma \mcomma \varDelta$, and show that $\supp{\at{S}}{\baseB} \phi \mand \psi$. 
    By \eqref{cl:BI-BeS-supp:mcomma}, there exist $\at{T}_1, \at{T}_2 \in \Bunches(\At)$ satisfying $(\at{T}_1 \mcomma \at{T}_2) \bunchWeakerThan \at{S}$, such that $\supp{\at{T}_1}{\baseB} \varGamma$ and $\supp{\at{T}_2}{\baseB} \varDelta$. 
    The goal $\supp{\at{S}}{\baseB} \phi \mand \psi$ means that for arbitrary $\baseX \baseGeq \baseB$, $\at{U}(\cdot) \in \BunchesWithHole(\At)$, and $\at{p} \in \At$, if $\phi \mcomma \psi \supp{\at{U}(\cdot)}{\baseC} \at{p}$, then $\supp{\at{U}(\at{S})}{\baseX} \at{p}$. So we fix some $\baseC \baseGeq \baseB$, $\at{P}(\cdot) \in \BunchesWithHole(\At)$, and $\at{q} \in \At$ such that $\phi \mcomma \psi \supp{\at{P}(\cdot)}{\baseC} \at{q}$, and show that $\supp{\at{P}(\at{S})}{\baseC} \at{q}$. 
    Since $\varGamma \supp{}{} \phi$ and $\supp{\at{T}_1}{\baseB} \Gamma$, it follows that $\supp{\at{T}_1}{\baseB} \phi$; since $\varDelta \supp{}{} \psi$ and $\supp{\at{T}_2}{\baseB} \varDelta$, it follows that $\supp{\at{T}_2}{\baseB} \psi$. So $\supp{\at{T}_1 \mmcomma \at{T}_2}{\baseB} \phi \mcomma \psi$; together with $\at{T}_1 \mcomma \at{T}_2 \bunchWeakerThan \at{S}$, it follows that $\supp{\at{S}}{\baseB} \phi \mcomma \psi$ by Proposition~\ref{prop:monotonicity}. Together with $\phi \mcomma \psi \supp{\at{P}(\cdot)}{\baseC} \at{q}$ and $\baseC \baseGeq \baseB$, we conclude that $\supp{P(\at{S})}{\baseC} \at{q}$. 
    \item[\myhyperlink{item:BI-BeS-sound-mand-E}.] This follows as a special case of Lemma~\ref{lem:soundness-key-general}.\ref{lem:soundness-key-mand-general}, where one takes $\baseB$ and $R(\cdot)$ to be the empty base $\emptybase$ and the empty context $(\cdot)$, respectively. 
    %
    \item[\myhyperlink{item:BI-BeS-sound-top-I}.] It suffices to fix some base $\baseB$ and $\at{S} \in \Bunches(\At)$ satisfying $\supp{\at{S}}{\baseB} \aunit$, and the goal is to show $\supp{\at{S}}{\baseB} \top$. To this end, we further fix some arbitrary $\baseC \baseGeq \baseB$ and $\at{P}(\cdot) \in \BunchesWithHole(\At)$ satisfying $\supp{\at{P}(\aunit)}{\baseC} \at{q}$, and prove $\supp{\at{P}(\at{S})}{\baseC} \at{q}$. Since $\supp{\at{S}}{\baseB} \aunit$ says $\at{S} \bunchStrongerThan \aunit$, by Proposition~\ref{prop:monotonicity},  $\supp{\at{P}(\aunit)}{\baseC} \at{q}$ implies $\supp{\at{P}(\at{S})}{\baseC} \at{q}$. 
    \item[\myhyperlink{item:BI-BeS-sound-top-E}.] This follows immediately from Lemma~\ref{lem:soundness-key-general}.\ref{lem:soundness-key-top-general}. 
    \item[\myhyperlink{item:BI-BeS-sound-implication-I}.] Fix some base $\baseB$ and $\at{S} \in \Bunches(\At)$ such that $\supp{\at{S}}{\baseB} \varGamma$, and the goal is to prove $\supp{\at{S}}{\baseB} \varphi \to \psi$. By \eqref{cl:BI-BeS-supp:implication}, this amounts to showing $\phi \supp{\at{S} \aacomma (\cdot)}{\baseB} \psi$. So we further fix some $\baseC \baseGeq \baseB$ and $\at{T} \in \Bunches(\At)$ satisfying $\supp{\at{T}}{\baseC} \phi$, and prove $\supp{\at{S} \aacomma \at{T}}{\baseC} \psi$. By \eqref{cl:BI-BeS-supp:acomma}, $\supp{\at{S}}{\baseB} \varGamma$ and $\supp{\at{T}}{\baseC} \phi$ imply $\supp{\at{S} \aacomma \at{T}}{\baseC} \varGamma \acomma \phi$. Together with $\varGamma \acomma \phi \supp{}{} \psi$, it follows that $\supp{\at{S} \aacomma \at{T}}{\baseC} \psi$. 
    \item[\myhyperlink{item:BI-BeS-sound-implication-E}.] Given the assumptions, we assume some arbitrary base $\baseB$ and $\at{S} \in \Bunches(\At)$ such that $\supp{\at{S}}{\baseB} \varGamma \acomma \varDelta$, and prove that $\phi \supp{\at{S}}{\baseB} \psi$. By \eqref{cl:BI-BeS-supp:acomma}, there exist $\at{P}, \at{Q} \in \Bunches(\At)$ satisfying $\at{P} \bunchWeakerThan \at{S}$ and $\at{Q} \bunchWeakerThan \at{S}$, such that $\supp{\at{P}}{\baseB} \varGamma$ and $\supp{\at{Q}}{\baseB} \varDelta$. 
    $\varGamma \supp{}{} \varphi \to \psi$ and $\supp{\at{P}}{\baseB} \varGamma$ imply $\supp{\at{P}}{\baseB} \varphi \to \psi$, namely $\supp{\at{P} \aacomma (\cdot)}{\baseB} \psi$. $\varDelta \supp{}{} \psi$ and $\supp{\at{Q}}{\baseB} \varDelta$ imply $\supp{\at{Q}}{\baseB} \psi$. Then, by \eqref{cl:BI-BeS-supp:inf}, it follows that $\supp{\at{P} \aacomma \at{Q}}{\baseB} \psi$. Since $\at{S} \aacomma \at{S} \bunchStrongerThan \at{P} \aacomma \at{Q}$, it follows that $\supp{\at{S} \aacomma \at{S}}{\baseB} \psi$, by Proposition~\ref{prop:monotonicity}. Therefore, $\supp{\at{S} \aacomma \at{S}}{\baseB} \psi$, by Proposition~\ref{prop:supp-resource-weaken-contract-deep}.
    \item[\myhyperlink{item:BI-BeS-sound-conjunction-I}.] Suppose $\varGamma \supp{}{} \phi$ and $\varDelta \supp{}{} \psi$, and we show that $\varGamma \acomma \varDelta \supp{}{} \phi \land \psi$. So, we fix some arbitrary $\baseB$ and $\at{S}$ satisfying that $\supp{\at{S}}{\baseB} \varGamma \acomma \varDelta$, and show $\supp{\at{S}}{\baseB} \phi \land \psi$. 
    Applied to \eqref{cl:BI-BeS-supp:acomma}, $\supp{\at{S}}{\baseB} \varGamma \acomma \varDelta$, there exist $\at{P}, \at{Q} \in \Bunches(\At)$ satisfying $\at{P} \bunchWeakerThan \at{S}$ and $\at{Q} \bunchWeakerThan \at{S}$, such that $\supp{\at{P}}{\baseB} \varGamma$ and $\supp{\at{Q}}{\baseB} \varDelta$. $\varGamma \supp{}{} \varphi$ and $\supp{\at{P}}{\baseB} \varGamma$ imply $\supp{\at{P}}{\baseB} \varphi$. $\varDelta \supp{}{} \psi$ and $\supp{\at{Q}}{\baseB} \varDelta$ imply $\supp{\at{Q}}{\baseB} \psi$. 
    In order to show $\supp{\at{S}}{\baseB} \phi \land \psi$, we fix some $\baseC \baseGeq \baseB$, $\at{R}(\cdot) \in \BunchesWithHole(\At)$, and $\at{q} \in \At$ such that $\phi \acomma \psi \supp{\at{R}(\cdot)}{\baseC} \at{q}$, and prove $\supp{\at{R}(\at{S})}{\baseC} \at{q}$. Since $\supp{\at{P}}{\baseB} \varphi$, $\supp{\at{Q}}{\baseB} \psi$, $\at{P} \bunchWeakerThan \at{S}$, and $\at{Q} \bunchWeakerThan \at{S}$, it follows by \eqref{cl:BI-BeS-supp:acomma} that $\supp{\at{S}}{\baseB} \varphi \acomma \psi$. This together with $\phi \acomma \psi \supp{\at{R}(\cdot)}{\baseC} \at{q}$ imply $\supp{\at{R}(\at{S})}{\baseC} \at{q}$. 
    %
    %
    \item[\myhyperlink{item:BI-BeS-sound-conjunction-E}.] This follows as a special case of Lemma~\ref{lem:soundness-key-general}.\ref{lem:soundness-key-and-general}. 
    \item[\myhyperlink{item:BI-BeS-sound-disjunction-I}.]
    Without loss of generality, assume that $\varGamma \supp{}{} \phi_1$ and show that $\varGamma \supp{}{} \phi_1 \lor \phi_2$. So let us assume some arbitrary base $\baseB$ and $\at{S} \in \Bunches(\At)$ such that $\supp{\at{S}}{\baseB} \varGamma$, and show that $\supp{\at{S}}{\baseB} \phi_1 \lor \phi_2$. Under \eqref{cl:BI-BeS-supp:disjunction}, this amounts to showing that: 
    \[
        \forall \baseX \baseGeq \baseB, \forall \at{U}(\cdot) \in \BunchesWithHole(\At), \forall \at{p} \in \At, \text{ if } \phi_i \supp{\at{U}(\cdot)}{\baseX} \at{p} \text{ for } i = 1, 2, \text{ then} \supp{\at{U}(\at{S})}{\baseX} \at{p} 
    \]
    So, we fix some arbitrary $\baseC \baseGeq \baseB$, $\at{T}(\cdot) \in \BunchesWithHole(\At)$, and $\at{q} \in \At$ such that $\phi_i \supp{\at{T}(\cdot)}{\baseC} \at{q}$ for $i = 1, 2$, and then prove that $\supp{\at{T}(\at{S})}{\baseC} \at{q}$. Note that $\supp{\at{S}}{\baseB} \varGamma$ and $\varGamma \supp{}{} \phi_1$ imply that $\supp{\at{S}}{\baseB} \phi_1$. This together with $\phi_1 \supp{\at{T}(\cdot)}{\baseC} \at{q}$ entail that $\supp{\at{T}(\at{S})}{\baseC} \at{q}$. 
    \item[\myhyperlink{item:BI-BeS-sound-disjunction-E}.] follows immediately from Lemma~\ref{lem:soundness-key-general}.\ref{lem:soundness-key-or-general}. 
    \item[\myhyperlink{item:BI-BeS-sound-bot-E}.] We assume $\varGamma \supp{}{} \bot$, and show that $\varGamma \supp{}{} \phi$. So we fix some base $\baseB$ and atomic bunch $\at{S}$ such that $\supp{\at{S}}{\baseB} \varGamma$, and show that $\supp{\at{S}}{\baseB} \phi$. $\supp{\at{S}}{\baseB} \varGamma$ and $\varGamma \supp{}{} \bot$ imply $\supp{\at{S}}{\baseB} \bot$. Then by Lemma~\ref{lem:soundness-key-general}.\ref{lem:general-bot-elim-sound}, it follows immediately that $\supp{\at{S}}{\baseB} \varphi$. 
\end{enumerate}
    This completes the induction. 
\end{proof}

\begin{lemma}\label{lem:soundness-key-general}
    The following hold:
    \begin{enumerate}
        \item  \label{lem:soundness-key-mtop-general} If $\varGamma(\munit) \supp{\at{R}(\cdot)}{\baseB} \chi$ and $\varDelta \supp{}{} \mtop$, then $\varGamma(\varDelta) \supp{\at{R}(\cdot)}{\baseB} \chi$. 
        \item  \label{lem:soundness-key-mand-general} If $\varGamma(\phi \mcomma \psi) \supp{\at{R}(\cdot)}{\baseB} \chi$ and  $\varDelta \supp{}{} \phi \mand \psi$, then $\varGamma(\varDelta) \supp{\at{R}(\cdot)}{\baseB} \chi$.
    \item \label{lem:soundness-key-top-general}  If $\varGamma(\aunit) \supp{\at{R}(\cdot)}{\baseB} \chi$ and $\varDelta \supp{}{} \top$, then $\varGamma(\varDelta) \supp{\at{R}(\cdot)}{\baseB} \chi$. 
    \item \label{lem:soundness-key-and-general}  If $\varGamma(\phi \acomma \psi) \supp{R(\cdot)}{\baseB} \chi$ and $\varDelta \supp{}{} \phi \land \psi$, then $\varGamma(\varDelta) \supp{R(\cdot)}{\baseB} \chi$. 
    \item  \label{lem:soundness-key-or-general}  If $\varDelta \supp{}{} \phi \lor \psi$, $\varGamma(\phi) \supp{\at{R}(\cdot)}{\baseB} \chi$, and $\varGamma(\psi) \supp{\at{R}(\cdot)}{\baseB} \chi$, then $\varGamma(\varDelta) \supp{\at{R}(\cdot)}{\baseB} \chi$.
    \item \label{lem:general-bot-elim-sound} For any formula $\phi$, $\bot \supp{\at{R}(\cdot)}{\baseB} \phi$. 
        \end{enumerate}
\end{lemma}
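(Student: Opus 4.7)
The plan is to prove all six statements by a uniform strategy: separate the conclusion formula (it is $\chi$ in (1)--(5) and $\phi$ in (6)) into the atomic and the compound cases, and reduce the compound case to the atomic one by induction on the formula. The atomic case of (1)--(5) is expected to be the genuinely hard part, and is exactly what Lemma~\ref{lem:soundness-key-base} will discharge.

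For (6), I would proceed by induction on $\phi$. The base case $\phi=\at{p}\in\Atoms$ follows immediately from clause \eqref{cl:BI-BeS-supp:bot}, instantiated with $\at{U}(\cdot):=\at{R}(\cdot)$: from $\supp{\at{S}}{\baseC}\bot$ one reads off $\supp{\at{R}(\at{S})}{\baseC}\at{p}$. For the conjunctive, disjunctive, unit, and falsity connectives, unfolding the right-hand-side clause reduces the goal to another atomic support claim that is again delivered directly by \eqref{cl:BI-BeS-supp:bot} together with monotonicity on bases (Lemma~\ref{lem:supp-formula-monotone-on-bases}); the induction hypothesis is not required. For the implicational cases $\phi=\phi_1\to\phi_2$ and $\phi=\phi_1\mto\phi_2$, unfolding via \eqref{cl:BI-BeS-supp:implication}/\eqref{cl:BI-BeS-supp:wand} and \eqref{cl:BI-BeS-supp:inf} reaches a goal of the form $\supp{\at{R}(\at{S})\commavar\at{T}}{\baseD}\phi_2$, which is discharged by the IH instantiated with the enlarged contextual bunch $\at{R}(\cdot)\commavar\at{T}$ and base $\baseD$, where $\commavar\in\{\aacomma,\mmcomma\}$.

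For (1)--(5), fix arbitrary $\baseC\baseGeq\baseB$ and $\at{S}$ with $\supp{\at{S}}{\baseC}\varGamma(\varDelta)$, and aim at $\supp{\at{R}(\at{S})}{\baseC}\chi$. I would induct on $\chi$. The atomic base case is Lemma~\ref{lem:soundness-key-base}. For compound $\chi$ I would unfold the clause for $\chi$'s principal connective on the right-hand side to reduce to an instance of the lemma with a strictly smaller sub-formula; the original premise is re-engineered for the IH by either extending $\at{R}(\cdot)$ additively or multiplicatively (in the implicational sub-cases, following the pattern of the $\to$-case of (6)) or by composing $\at{R}(\cdot)$ with a fresh contextual bunch $\at{U}(\cdot)$ (in the conjunctive, disjunctive, and unit sub-cases). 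Monotonicity (Proposition~\ref{prop:monotonicity}, Lemma~\ref{lem:supp-bunch-monotone-on-bases}) is used throughout to reconcile the differing base parameters.

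The main obstacle is the atomic base case of (1)--(5), i.e., Lemma~\ref{lem:soundness-key-base}. With $\chi$ atomic there is no right-hand-side formula clause to unfold, so I would proceed by a further induction on the contextual bunch $\varGamma(\cdot)$, measured by $\holeDepth{\varGamma(\cdot)}$. In the bottom sub-case $\varGamma(\cdot)=(\cdot)$ the hypothesis on $\varDelta$ (e.g.\ $\varDelta\supp{}{}\phi\mand\psi$ in case (2)) upgrades $\supp{\at{S}}{\baseC}\varDelta$ to support of the relevant connective, whose clause then directly transfers atomic support from $\varGamma(\phi\mcomma\psi)=\phi\mcomma\psi$ to $\varGamma(\varDelta)=\varDelta$. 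In the inductive sub-step one decomposes $\supp{\at{S}}{\baseC}\varGamma(\varDelta)$ via \eqref{cl:BI-BeS-supp:mcomma} or \eqref{cl:BI-BeS-supp:acomma} to reach the sub-contextual-bunch containing the hole, applies the IH there, and re-assembles using Proposition~\ref{prop:bunch-weaker-compositional} to propagate the bunch-extension through the surrounding structure. The additive sub-case is the most delicate, because the two components of $\varGamma_1\aacomma\varGamma_2$ share the same resource rather than partitioning it, so one must first identify which branch contains the hole before recombining.
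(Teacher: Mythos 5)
Your proposal is correct, but it inverts the paper's nesting of the two inductions. For items (1)--(5) the paper's \emph{outer} induction is on $\holeDepth{\varGamma(\cdot)}$, uniformly in $\chi$: its base case is the statement with $\varGamma(\cdot)=(\cdot)$ for \emph{arbitrary} $\chi$, and that --- not the atomic-$\chi$ case --- is what Lemma~\ref{lem:soundness-key-base} actually asserts; that lemma is in turn proved by induction on the structure of $\chi$. You instead make the induction on $\chi$ the outer one, reducing a compound $\chi$ either to a strictly smaller formula (the implications, by enlarging $\at{R}(\cdot)$ to $\at{R}(\cdot)\commavar\at{T}$) or to an arbitrary atomic conclusion with contextual bunch $\at{U}(\at{R}(\cdot))$ (the remaining connectives), and you run the $\holeDepth{\varGamma(\cdot)}$ induction only for atomic $\chi$, where the depth-$0$ sub-case is a one-line instantiation of the relevant support clause rather than a full induction. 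Both organizations succeed because the two reductions --- compound $\chi$ to atomic, and deep $\varGamma(\cdot)$ to $(\cdot)$ --- are independent and each preserves full generality in the other parameter (and in $\at{R}(\cdot)$ and $\baseB$, over which the statement is quantified); the paper's order buys a hole-depth induction whose inductive step never cases on $\chi$, while yours buys a trivial innermost base case at the cost of re-deriving the hypothesis with a modified contextual bunch at each layer of the formula induction. Two minor points. First, your references to Lemma~\ref{lem:soundness-key-base} as ``the atomic case'' do not match its statement; since you supply the hole-depth induction yourself, this is a labelling slip rather than a gap. Second, in the additive sub-case of that induction the recombination lands on $\supp{\at{R}(\at{S}\aacomma\at{S})}{\baseC}\at{p}$ and must be contracted to $\supp{\at{R}(\at{S})}{\baseC}\at{p}$; this requires claim (ii) of Proposition~\ref{prop:supp-resource-weaken-contract-deep} in addition to Proposition~\ref{prop:bunch-weaker-compositional} and monotonicity, which you gesture at but do not name. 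Your treatment of item (6) coincides with the paper's.
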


\begin{proof}[Proof of Lemma~\ref{lem:soundness-key-general}.\ref{lem:soundness-key-mtop-general}]
    We proceed by induction on the size of $\holeDepth{\varGamma(\cdot)}{}$. 
    \begin{itemize}[label=--]
        \item \textsc{base case}. $\holeDepth{\varGamma(\cdot)}{} = 0$. This means $\varGamma(\cdot)$ is of the form $(\cdot)$. The proposition to be proven becomes: if $\munit \supp{\at{R}(\cdot)}{\baseB} \chi$ and $\varDelta \supp{}{} \mtop$, then $\varDelta \supp{\at{R}(\cdot)}{\baseB} \chi$. This follows immediately from Lemma~\ref{lem:soundness-key-base}.\ref{lem:soundness-key-mtop-base}. 
        \item \textsc{inductive step}.  $\holeDepth{\varGamma(\cdot)}{} = d + 1$ for some $d \geq 0$. This $\varGamma(\cdot)$ is coherently equivalent to $\varSigma(\cdot) \commavar \varTheta$, where $\commavar \in \{ \,\acomma\, , \,\mcomma\, \}$, $\varSigma(\cdot) \in \BunchesWithHole(\Formulas)$, $\varTheta \in \Bunches(\Formulas)$ and $\holeDepth{\varSigma(\cdot)} = d$. We consider the two cases for $\commavar$ seperately: 
        \begin{itemize}[label = --]
            \item $\commavar$ is $\mmcomma$. Given $\varSigma(\munit) \mcomma \varTheta \supp{\at{R}(\cdot)}{\baseB} \chi$ and $\varDelta \supp{}{} \mtop$, the goal is to show that $\varSigma(\varDelta) \mcomma \varTheta \supp{\at{R}(\cdot)}{\baseB} \chi$. To this, we fix some $\baseC \baseGeq \baseB$ and $\at{P} \in \Bunches(\At)$ such that $\supp{\at{P}}{\baseC} \varSigma(\varDelta) \mcomma \varTheta$, and show that $\supp{\at{R}(\at{P})}{\baseC} \chi$ holds as well. $\supp{\at{P}}{\baseC} \varSigma(\varDelta) \mcomma \varTheta$ means that there exist $\at{P}_1, \at{P}_2 \in \Bunches(\At)$ such that $\at{P}_1 \mcomma \at{P}_2 \bunchWeakerThan \at{P}$, $\supp{\at{P}_1}{\baseC} \varSigma(\varDelta)$, and $\supp{\at{P}_2}{\baseC} \varTheta$. This together with $\varSigma(\munit) \mcomma \varTheta \supp{\at{R}(\cdot)}{\baseB} \chi$ imply that, for arbitrary $\baseX \baseGeq \baseC$ and $\at{U} \in \Bunches(\At)$, if $\supp{\at{U}}{\baseX} \varSigma(\munit)$, then $\supp{\at{R}(\at{U} \mmcomma P_2)}{\baseX} \chi$; that is, $\varSigma(\munit) \supp{\at{R}((\cdot) \mmcomma \at{P}_2)}{\baseC} \chi$. Since $\holeDepth{\varSigma(\cdot)}{} = d$, we can apply IH to $\varSigma(\munit) \supp{\at{R}((\cdot) \mmcomma \at{P}_2)}{\baseC} \chi$ and $\varDelta \supp{}{} \mtop$ to conclude that $\varSigma(\varDelta) \supp{\at{R}( (\cdot) \mmcomma \at{P}_2 )}{\baseC} \chi$. Since $\supp{\at{P}_1}{\baseC} \varSigma(\varDelta)$, we have $\supp{\at{R}( \at{P}_1 \mmcomma \at{P}_2 )}{\baseC} \chi$, thus $\supp{\at{R}(\at{P})}{\baseC} \chi$. 
            \item $\commavar$ is $\aacomma$. Given $\varSigma(\munit) \acomma \varTheta \supp{\at{R}(\cdot)}{\baseB} \chi$ and $\varDelta \supp{}{} \mtop$, the goal is to show that $\varSigma(\varDelta) \acomma \varTheta \supp{\at{R}(\cdot)}{\baseB} \chi$. So we fix some $\baseC \baseGeq \baseB$ and $\at{P} \in \Bunches(\At)$ such that $\supp{\at{P}}{\baseC} \varSigma(\varDelta) \acomma \varTheta$, and prove $\supp{\at{R}(\at{P})}{\baseC} \chi$. By definition, $\supp{\at{P}}{\baseC} \varSigma(\varDelta) \acomma \varTheta$ says that there exists $\at{P}_1, \at{P}_2 \in \Bunches(\At)$ such that $\at{P}_i \bunchWeakerThan \at{P}$ for $i = 1, 2$, $\supp{\at{P}_1}{\baseC} \varSigma(\varDelta)$, and $\supp{\at{P}_2}{\baseC} \varTheta$. Spelling out $\varGamma(\munit) \supp{\at{R}(\cdot)}{\baseB} \chi$, we have: 
            \begin{equation}
            \label{eq:general-mtop-elim-sound-1}
                \forall \baseX \baseGeq \baseB,  \forall \at{U} \in \Bunches(\At), \text{if } \exists \at{U}_1, \at{U}_2 \text{ s.t. } \at{U}_1 \bunchWeakerThan \at{U}, \at{U}_2 \bunchWeakerThan \at{U}, \supp{\at{U}_1}{\baseX} \varSigma(\munit), \supp{\at{U}_2}{\baseX} \varTheta, \text{then} \supp{\at{R}(\at{U})}{\baseX} \chi 
            \end{equation}
            Since $\supp{\at{P}_2}{\baseC} \varTheta$, \eqref{eq:general-mtop-elim-sound-1} entails: 
            \begin{equation*}
                \forall \baseX \baseGeq \baseC, \forall \at{U} \in \Bunches(\At), \text{if } \at{P}_2 \bunchWeakerThan \at{U} \text{ and } \exists \at{U}_1 \bunchWeakerThan \at{U} \text{ s.t.} \supp{\at{U}_1}{\baseX} \varSigma(\munit), \text{ then} \supp{\at{R}(\at{U})}{\baseX} \chi 
            \end{equation*}
            thus, 
            \begin{equation*}
                \forall \baseX \baseGeq \baseC, \forall \at{U}_1 \in \Bunches(\At), \text{if } \supp{\at{U}_1}{\baseX} \varSigma(\munit), \text{ then} \supp{\at{R}(\at{U}_1 \aacomma \at{P}_2)}{\baseX} \chi 
            \end{equation*}
            By definition, this precisely says $\varSigma(\munit) \supp{\at{R}( (\cdot) \aacomma \at{P}_2 )}{\baseX} \chi$. Since $\holeDepth{\varSigma(\cdot)} = d$, we apply IH to this and $\varDelta \supp{}{} \mtop$, and conclude $\varSigma(\varDelta) \supp{\at{R}( (\cdot) \aacomma \at{P}_2 )}{\baseC} \chi$. Together with $\supp{\at{P}_1}{\baseC} \varSigma(\varDelta)$, we have $\supp{\at{R}(\at{P}_1 \aacomma \at{P}_2)}{\baseC} \chi$. Since $\at{P}_i \bunchWeakerThan \at{P}$ for $i = 1, 2$, by Proposition~\ref{prop:monotonicity}, $\supp{\at{R}(\at{P}_1 \aacomma \at{P}_2)}{\baseC} \chi$ implies $\supp{\at{R}(\at{P} \aacomma \at{P})}{\baseC} \chi$, thus $\supp{\at{R}(\at{P})}{\baseC} \chi$. 
        \end{itemize}
    \end{itemize}
    This completes the inductive proof. 
\end{proof}

\begin{proof}[Proof of Lemma~\ref{lem:soundness-key-general}.\ref{lem:soundness-key-mand-general}]
    We proceed by induction on $\holeDepth{\varGamma(\cdot)}$.
    \begin{itemize}[label=--]
    
    \item \textsc{base case.} $\holeDepth{\varGamma(\cdot)}{} = 0$. That is, $\varGamma(\cdot) = (\cdot)$. Then the proposition becomes: if $\phi \mcomma \psi \supp{\at{R}(\cdot)}{\baseB} \chi$ and $\varDelta \supp{}{} \phi \mand \psi$, then $\varDelta \supp{\at{R}(\cdot)}{\baseB} \chi$. This follows immediately from Lemma~\ref{lem:soundness-key-base}.\ref{lem:soundness-key-mand-base}. 

    \item \textsc{inductive step.}  $\holeDepth{\varGamma(\cdot)}{} = d + 1$ for $d \geq 0$. There are two cases, depending on the principal context former of $\varGamma(\cdot)$ is $\aacomma$ or $\mmcomma$ (note that $\varGamma(\cdot)$ must have a principal context-former given its depth). 
    \begin{itemize}[label = --]
    \item $\varGamma(\cdot)$ is coherently equivalent to $\varSigma(\cdot) \acomma \varTheta$, where $\holeDepth{\varSigma(\cdot)} = d$, and $\varTheta \in \Bunches(\At)$. We assume the premises of the proposition as well as $\supp{\at{S}}{\baseC} \varGamma(\varDelta)$ for some $\baseC \baseGeq \baseB$ and $\at{S} \in \Bunches(\At)$. The goal is to prove $\supp{\at{R}(\at{S})}{\baseC} \chi$. By the definition of $\aacomma$, $\supp{\at{S}}{\baseC} \varSigma(\varDelta) \acomma \varTheta$ means that there exist $\at{P}, \at{Q} \in \Bunches(\At)$, such that $\at{P} \bunchWeakerThan \at{S}$, $\at{Q} \bunchWeakerThan \at{S}$, $\supp{\at{P}}{\baseC} \varSigma(\varDelta)$, and $\supp{\at{Q}}{\baseC} \varTheta$. 
    $\varSigma(\phi \mcomma \psi) \acomma \varTheta \supp{\at{R}(\cdot)}{\baseB} \chi$ says that, for arbitrary $\baseX \baseGeq \baseB$ and $\at{W} \in \Bunches(\At)$, $\supp{\at{W}}{\baseX} \varSigma(\phi \mcomma \psi) \acomma \varTheta$ implies $\supp{\at{R}(\at{W})}{\baseX} \chi$. It is further spelled out as: 
    \begin{equation}
    \label{eq:general-mand-elim-sound-1}
        \begin{split}
            & \text{for arbitrary }  \baseX \baseGeq \baseB \text{ and } \at{W} \in \Bunches(\At), \text{if there exist } \at{U} \bunchWeakerThan \at{W}, \at{V} \bunchWeakerThan \at{W} \\
            & \text{s.t.} \supp{\at{U}}{\baseX} \varSigma(\phi \mcomma \psi) \text{ and} \supp{\at{W}}{\baseX} \varTheta, \text{ then} \supp{\at{R}(\at{W})}{\baseX} \chi 
        \end{split}
    \end{equation}
    Since $\baseC \baseGeq \baseB$, and $\at{Q} \bunchWeakerThan \at{S}$ satisfies $\supp{\at{Q}}{\baseC} \varTheta$, \eqref{eq:general-mand-elim-sound-1} entails the following: 
    \begin{equation}
    \label{eq:general-mand-elim-sound-2}
        \begin{split}
            \forall \baseX \baseGeq \baseC, \forall \at{U} \in \Bunches(\At), \text{ if}   \supp{\at{U}}{\baseC} \varSigma(\phi \mcomma \psi), \text{ then} \supp{\at{R}(\at{Q} \aacomma \at{U})}{\baseX} \chi
        \end{split}
    \end{equation}
    This is exactly $\varSigma(\phi \mcomma \psi) \supp{\at{R}(\at{Q} \aacomma (\cdot))}{\baseC} \chi$, by definition. Since $\holeDepth{\varSigma(\cdot)}{} = d$, we can apply induction hypothesis to $\varSigma(\phi \mcomma \psi) \supp{\at{R}(\at{Q} \aacomma (\cdot))}{\baseC} \chi$ and $\varDelta \supp{}{} \phi \mand \psi$, it follows that $\varGamma(\varDelta) \supp{\at{R}(\at{Q} \aacomma (\cdot))}{\baseC} \chi$. This together with $\supp{\at{P}}{\baseC} \varGamma(\varDelta)$ implies $\supp{\at{R}(\at{P} \aacomma \at{Q})}{\baseC} \chi$, hence $\supp{\at{R}(\at{S} \aacomma \at{S})}{\baseC} \chi$ by $(\at{P} \aacomma \at{Q}) \bunchWeakerThan (\at{S} \aacomma \at{S})$ and Proposition~\ref{prop:monotonicity}; hence $\supp{\at{R}(\at{S})}{\baseC} \chi$ by Proposition~\ref{prop:supp-resource-weaken-contract-deep}
    \item $\varGamma(\cdot)$ is coherently equivalent to $\varSigma(\cdot) \mcomma \varTheta$, where $\varSigma(\cdot) \in \BunchesWithHole(\At)$, $\varTheta \in \Bunches(\At)$, and $\holeDepth{\varSigma(\cdot)} = d$. 
    We assume the premises of the proposition as well as $\supp{\at{S}}{\baseC} \varGamma(\varDelta)$ for some $\baseC \baseGeq \baseB$ and $\at{S} \in \Bunches(\At)$. The goal is to prove $\supp{\at{R}(\at{S})}{\baseC} \chi$. 
    According to \eqref{cl:BI-BeS-supp:mcomma}, $\supp{\at{S}}{\baseC} \varSigma(\varDelta) \mcomma \varTheta$ means that there exist $\at{P}, \at{Q}\in \Bunches(\At)$ satisfying $(\at{P} \mcomma \at{Q}) \bunchWeakerThan \at{S}$, such that $\supp{\at{P}}{\baseC} \varSigma(\varDelta)$ and $\supp{\at{Q}}{\baseC} \varTheta$. 
    Also, $\varGamma(\phi \mcomma \psi) \supp{\at{R}(\cdot)}{\baseB} \chi$ says, for arbitrary $\baseX \baseGeq \baseB$ and $\at{W} \in \Bunches(\At)$, $\supp{\at{W}}{\baseX} \varGamma(\phi \mcomma \psi)$ implies $\supp{\at{R}(\at{W})}{\baseX} \chi$. This, according to the form of $\varGamma(\cdot)$,  boils down to the following: 
    \begin{equation}
    \label{eq:general-mand-elim-sound-4}
        \begin{split}
            &\forall \baseX \baseGeq \baseB, \forall \at{W} \in \Bunches(\At), \text{ if } \exists \at{U}, \at{V} \in \Bunches(\At) \text{ s.t. } 
            \at{U} \mcomma \at{V} \bunchWeakerThan \at{W}, \\
            & \supp{\at{U}}{\baseX} \varSigma(\phi \mcomma \psi),
            \text{ and} \supp{\at{V}}{\baseX} \varTheta, \text{ then} \supp{\at{R}(\at{W})}{\baseX} \chi 
        \end{split}
    \end{equation}
    Take into consideration $\baseC \baseGeq \baseB$ and $\supp{\at{Q}}{\baseC} \varTheta$, \eqref{eq:general-mand-elim-sound-4} entails the follows: 
    \begin{equation}
        \forall \baseX \baseGeq \baseC, \forall \at{U} \in \Bunches(\At), \text{ if} \supp{\at{U}}{\baseX} \varSigma(\phi \mcomma \psi), \text{ then} \supp{\at{R}(\at{U} \mmcomma \at{Q})}{\baseX} \chi 
    \end{equation}
    This is exactly $\varSigma(\phi \mcomma \psi) \supp{\at{R}( (\cdot) \mmcomma \at{Q} )}{\baseC} \chi$. Now apply the induction hypothesis to $\varSigma(\phi \mcomma \psi) \supp{\at{R}( (\cdot) \mmcomma \at{Q} )}{\baseC} \chi$ and $\varDelta \supp{}{} \phi \mand \psi$, one get $\varSigma(\varDelta) \supp{\at{R}( (\cdot) \mmcomma \at{Q} )}{\baseC} \chi$. This together with $\supp{\at{P}}{\baseC} \varSigma(\varDelta)$ entails that $\supp{\at{R}(\at{P} \mmcomma \at{Q})}{\baseC} \chi$. Since $\at{S} \bunchStrongerThan \at{P} \mcomma \at{Q}$, it follows that $\supp{\at{R}(\at{S})}{\baseC} \chi$. 
    \end{itemize}
    \end{itemize}
    This completes the proof by induction. 
\end{proof}

\begin{proof}[Proof of Lemma~\ref{lem:soundness-key-general}.\ref{lem:soundness-key-top-general}]
    The proof follows a similar strategy as that for $\mtop$, namely Lemma~\ref{lem:soundness-key-mtop-general}. In particular, it follows from a proof by induction on $\holeDepth{\varGamma(\cdot)}$, whose base case is deduced from an analogy to Lemma~\ref{lem:soundness-key-base}.\ref{lem:soundness-key-mtop-base}. 
\end{proof}

\begin{proof}[Proof of Lemma~\ref{lem:soundness-key-general}.\ref{lem:soundness-key-and-general}]

    We proceed by induction on the size of $\holeDepth{\varGamma(\cdot)}$. 
    \begin{itemize}[label=--]
     \item \textsc{base case}. $\holeDepth{\varGamma(\cdot)}{} = 0$. That is, $\varGamma(\cdot) \buncheq (\cdot)$. Then the proposition amounts to: if $\phi \acomma \psi \supp{\at{R(\cdot)}}{\baseB} \chi$ and $\varDelta \supp{}{} \phi \land \psi$, then $\varDelta \supp{\at{R(\cdot)}}{\baseB} \chi$. This is an immediate consequence of Lemma~\ref{lem:soundness-key-base}.\ref{lem:soundness-key-and-base}. 

    \item \textsc{inductive step.} $\holeDepth{\varGamma(\cdot)}{} = d + 1$ for some $d \geq 0$. The analysis consist of two cases wrt the principal context-former of $\varGamma(\cdot)$ being $\aacomma$ or $\mmcomma$. In both cases, we assume the premises of the proposition as well as $\supp{\at{S}}{\baseC} \varGamma(\varDelta)$ for some arbitrary fixed $\baseC \baseGeq \baseB$ and $\at{S}$, and the goal is to show $\supp{\at{R}(\at{S})}{\baseC} \chi$. 
    \begin{itemize}[label = --]
        \item $\varGamma(\cdot)$ is coherently equivalent to $\varSigma(\cdot) \acomma \varTheta$, where $\varTheta \in \Bunches(\At)$, $\varSigma(\cdot) \in \BunchesWithHole(\At)$, and $\holeDepth{\varSigma(\cdot)} = d$. By definition, $\supp{\at{S}}{\baseC} \varSigma(\varDelta) \acomma \varTheta$ means that there exist $\at{P}, \at{Q} \in \Bunches(\At)$ satisfying $\at{P} \bunchWeakerThan \at{S}, \at{Q} \bunchWeakerThan \at{S}$, such that $\supp{\at{P}}{\baseC} \varSigma(\varDelta)$ and $\supp{\at{Q}}{\baseC} \varTheta$. 
        By the definition of $\varSigma(\phi \acomma \psi) \acomma \varTheta \supp{\at{R}(\cdot)}{\baseB} \chi$, we have that for arbitrary $\baseX \baseGeq \baseB$ and $\at{W}$, if $\supp{\at{W}}{\baseX} \varSigma(\phi \acomma \psi) \acomma \varTheta$, then $\supp{\at{R}(\at{W})}{\baseX} \chi$. This means: 
        \begin{equation}
        \label{eq:general-and-elim-sound-1}
            \begin{split}
                &\forall \baseX \baseGeq \baseC,  \forall \at{W} \in \Bunches(\At), \text{ if } \exists \at{U}, \at{V} \in \Bunches(\At) \text{ s.t. } \at{U} \bunchWeakerThan \at{W}, \at{V} \bunchWeakerThan \at{W}, \\
                & \supp{\at{U}}{\baseX} \varSigma(\phi \acomma \psi), \text{ and} \supp{\at{V}}{\baseX} \varTheta, \text{ then} \supp{\at{R}(\at{W})}{\baseC} \chi 
            \end{split}
        \end{equation}
        Since $\supp{\at{Q}}{\baseC} \varTheta$, it follows that: 
        \begin{equation}
            \label{eq:general-and-elim-sound-2}
        \forall \baseX \baseGeq \baseC, \forall \at{U} \in \Bunches(\At), \text{ if} \supp{\at{U}}{\baseX} \varSigma(\varphi \acomma \psi), \text{ then} \supp{\at{R}(\at{U} \aacomma \at{Q})}{\baseX} \chi 
        \end{equation}
        In other words, $\varSigma(\phi \acomma \psi) \supp{\at{R}( (\cdot) \aacomma \at{Q} )}{\baseC} \chi$. 
        Remember that we have $\holeDepth{\varSigma(\cdot)}{} = d$, so we can apply the induction hypothesis to $\varSigma(\phi \acomma \psi) \supp{\at{R}( (\cdot) \aacomma \at{Q} )}{\baseC} \chi$ and $\varDelta \supp{}{} \phi \land \psi$ to conclude that $\varSigma(\varDelta) \supp{\at{R}( (\cdot) \aacomma \at{Q} )}{\baseC} \chi$. This together with $\supp{\at{P}}{\baseC} \varSigma(\varDelta)$ imply that $\supp{\at{R}(\at{P} \aacomma \at{Q})}{\baseC} \chi$; hence $\supp{\at{R}(\at{S} \aacomma \at{S})}{\baseC} \chi$ by Proposition~\ref{prop:monotonicity} and $\at{P} \acomma \at{Q} \bunchWeakerThan \at{S} \acomma \at{S}$; hence $\supp{\at{R}(\at{S})}{\baseC} \chi$ by Lemma~\ref{lem:supp-bunch-weaken-contract-deep}. 
        \item $\varGamma(\cdot)$ is coherently equivalent to $\varSigma(\cdot) \mcomma \varTheta$, where $\varSigma(\cdot) \in \BunchesWithHole(\At)$, $\varTheta \in \Bunches(\At)$, and $\holeDepth{\varSigma(\cdot)} = d$. By \eqref{cl:BI-BeS-supp:mcomma}, $\supp{\at{S}}{\baseC} \varSigma(\varDelta) \mcomma \varTheta$ means that there exist $\at{P}, \at{Q} \in \Bunches(\At)$ satisfying $\at{P} \mcomma \at{Q} \bunchWeakerThan \at{S}$, such that $\supp{\at{P}}{\baseC} \varSigma(\varDelta)$ and $\supp{\at{Q}}{\baseC} \varTheta$. Then $\varSigma(\phi \acomma \psi) \mcomma \varTheta \supp{\at{R}(\cdot)}{\baseB} \chi$ means: 
        \begin{equation*}
        \begin{split}
        &\text{$\forall \baseX \baseGeq \baseB$, $\forall \at{W} \in \Bunches(\At)$, if $\exists \at{U}, \at{V} \in \Bunches(\At)$ s.t. $\at{U} \mcomma \at{V} \bunchWeakerThan \at{W}$} \\
        &\text{$\supp{\at{U}}{\baseX} \varSigma(\varphi \acomma \psi)$, and $\supp{\at{V}}{\baseX} \varTheta$, then $\supp{\at{R}(\at{W})}{\baseX} \chi$}
        \end{split}
        \end{equation*}
        Since $\supp{\at{Q}}{\baseC} \varTheta$, it follows that: 
        \begin{equation}
            \forall \baseX \baseGeq \baseC, \forall \at{U} \in \Bunches(\At), \text{ if} \supp{\at{U}}{\baseX}\varSigma(\varphi \acomma \psi), \text{ then} \supp{\at{R}(\at{U} \mmcomma \at{Q})}{\baseX} \chi
        \end{equation}
        That is, $\varSigma(\varphi \acomma \psi) \supp{\at{R}( (\cdot) \mmcomma \at{Q} )}{\baseC} \chi$. Then, apply the IH for $\varSigma(\cdot)$ to $\varSigma(\varphi \acomma \psi) \supp{\at{R}( (\cdot) \mmcomma \at{Q} )}{\baseC} \chi$ and $\varDelta \supp{}{} \varphi \land \psi$, it follows that $\varSigma(\varDelta) \supp{\at{R}( (\cdot) \mmcomma \at{Q} )}{\baseC} \chi$. Together with $\supp{\at{P}}{\baseC} \varSigma(\varDelta)$, it follows that $\supp{\at{R} (\at{P} \mmcomma \at{Q})}{\baseC} \chi$. So $\supp{\at{R}(\at{S})}{\baseC} \chi$, by $\at{P} \mcomma \at{Q} \bunchWeakerThan \at{S}$ and Proposition~\ref{prop:monotonicity}. 
    \end{itemize}
    \end{itemize}
    This completes the inductive proof. 
\end{proof}

\begin{proof}[Proof of Lemma~\ref{lem:soundness-key-general}.\ref{lem:soundness-key-or-general}]
    We proceed by induction on$\holeDepth{\varGamma(\cdot)}{}$. 
    \begin{itemize}[label = --]
        \item \textsc{base case.} $\holeDepth{\varGamma(\cdot)}{} = 0$. This means $\varGamma(\cdot)$ is of the form $(\cdot)$. The statement becomes: 
        \begin{equation*}\label{eq:general-and-elim-sound-basecase}
            \text{If } \phi \supp{\at{R}(\cdot)}{\baseB} \chi, \psi \supp{\at{R}(\cdot)}{\baseB} \chi, \text{ and } \varDelta \supp{}{} \phi \lor \psi, \text{ then } \varDelta \supp{\at{R}(\cdot)}{\baseB} \chi. 
        \end{equation*}
        This follows immediately from Lemma~\ref{lem:soundness-key-base}.\ref{lem:soundness-key-or-base}. 
        \item \textsc{inductive step.} $\holeDepth{\varGamma(\cdot)}{} = d + 1$ for some $d \geq 0$. Then $\varGamma(\cdot)$ is of the form $\varSigma(\cdot) \commavar\varTheta$, where $\commavar \in \{ \aacomma, \mmcomma \}$, $\varSigma(\cdot) \in \BunchesWithHole(\At)$, $\varTheta \in \Bunches(\At)$, and $\holeDepth{\varSigma(\cdot)} = d$. We make a case distinction on $\commavar$. 
        \begin{itemize}[label = --]
            \item $\commavar$ is $\aacomma$. Given $\varSigma(\phi) \acomma \varTheta \supp{\at{R}(\cdot)}{\baseB} \chi$, $\varSigma(\psi) \acomma \varTheta  \supp{\at{R}(\cdot)}{\baseB} \chi$, and $\varDelta \supp{}{} \phi \lor \psi$, we assume $\supp{\at{P}}{\baseC} \varGamma(\varDelta)$ for some $\baseC \baseGeq \baseB$ and $\at{P} \in \Bunches(\At)$. The goal is to show $\supp{\at{R}(\at{P})}{\baseC} \chi$. First, $\supp{\at{P}}{\baseC} \varSigma(\varDelta) \acomma \varTheta$ means that there exist $\at{Q}_1, \at{Q}_2 \in \Bunches(\At)$ satisfying $\at{Q}_1 \bunchWeakerThan \at{P}$, $\at{Q}_2 \bunchWeakerThan \at{P}$, such that $\supp{\at{Q}_1}{\baseC} \varSigma(\varDelta)$ and $\supp{\at{Q}_2}{\baseC} \varTheta$. 
            Next we show that $\varSigma(\phi) \supp{\at{R}( (\cdot) \aacomma \at{Q}_2 )}{\baseB} \chi$ and $\varSigma(\psi) \supp{\at{R}( (\cdot) \aacomma \at{Q}_2 )}{\baseB} \chi$ hold, which enables the application of IH. We only look at $\varSigma(\phi) \supp{\at{R}( (\cdot) \aacomma \at{Q}_2 )}{\baseB} \chi$, and that for $\psi$ is exactly the same by replacing $\phi$ with $\psi$. Spelling out the definition of $\varSigma(\phi) \acomma \varTheta \supp{\at{R}(\cdot)}{\baseB} \chi$, we have: 
            \begin{equation}\label{eq:general-and-elim-sound-step-1}
            \begin{split}
                & \forall \baseX \baseGeq \baseB,  \forall \at{W} \in \Bunches(\At), \text{ if } \exists \at{U}, \at{V} \in \Bunches(\At) \text{ s.t. } \at{U} \bunchWeakerThan \at{W}, \at{V} \bunchWeakerThan \at{W},  \\
                & \supp{\at{U}}{\baseX} \varSigma(\phi), \text{ and} \supp{\at{V}}{\baseX} \varTheta, \text{ then} \supp{\at{R}(\at{W})}{\baseX} \chi 
            \end{split}
            \end{equation}
            Given $\baseC \baseGeq \baseB$ and $\supp{\at{Q}_2}{\baseC} \varTheta$, \eqref{eq:general-and-elim-sound-step-1} implies: 
            \begin{equation}
                \forall \baseY \baseGeq \baseC, \forall \at{U} \in \Bunches(\At), \text{ if } \supp{\at{U}}{\baseY} \varSigma(\phi), \text{ then } \supp{\at{R}( \at{U} \aacomma \at{Q}_2 )}{\baseY} \chi
            \end{equation}
            This is precisely $\varSigma(\phi) \supp{\at{R}((\cdot) \aacomma \at{Q}_2) }{\baseC} \chi$. 
        
            Now with $\varDelta \supp{}{} \phi \lor \psi$, $\varSigma(\phi) \supp{\at{R}((\cdot) \aacomma \at{Q}_2)}{\baseC} \chi$, and $\varSigma(\psi) \supp{\at{R}((\cdot) \aacomma \at{Q}_2)}{\baseC} \chi$ in hand, we can apply IH to conclude $\varSigma(\varDelta) \supp{ \at{R}((\cdot) \aacomma \at{Q}_2) }{\baseC} \chi$. Together with $\supp{\at{Q}_1}{\baseC} \varSigma(\varDelta)$, it follows that $\supp{ \at{R}( \at{Q}_1 \aacomma \at{Q}_2 ) }{\baseC} \chi$. Since $\at{Q}_1 \acomma \at{Q}_2 \bunchWeakerThan \at{P} \acomma \at{P}$, by Proposition~\ref{prop:monotonicity} we have $\supp{\at{R}(\at{P} \aacomma \at{P})}{\baseC} \chi$, and it implies $\supp{\at{R}(\at{P})}{\baseC} \chi$, by Proposition~\ref{prop:supp-resource-weaken-contract-deep}. 
            \item $\commavar$ is $\mmcomma$. Given $\varSigma(\phi) \mcomma \varTheta \supp{\at{R}(\cdot)}{\baseB} \chi$, $\varSigma(\psi) \mcomma \varTheta \supp{\at{R}(\cdot)}{\baseB} \chi$, and $\varDelta \supp{}{} \phi \lor \psi$, we assume $\supp{\at{P}}{\baseC} \varGamma(\varDelta)$ for some $\baseC \baseGeq \baseB$ and $\at{P} \in \Bunches(\At)$. The goal is to show $\supp{\at{R}(\at{P})}{\baseC} \chi$. First, $\supp{\at{P}}{\baseC} \varSigma(\varDelta) \mcomma \varTheta$ means that there exist $\at{Q}_1, \at{Q}_2 \in \Bunches(\At)$ satisfying $\at{Q}_1 \mcomma \at{Q}_2 \bunchWeakerThan \at{P}$, such that $\supp{\at{Q}_1}{\baseC} \varSigma(\varDelta)$ and $\supp{\at{Q}_2}{\baseC} \varTheta$. 
            Next we show that $\varSigma(\phi) \supp{\at{R}( (\cdot) \mmcomma \at{Q}_2 )}{\baseB} \chi$ and $\varSigma(\psi) \supp{\at{R}( (\cdot) \mmcomma \at{Q}_2 )}{\baseB} \chi$ hold, which enables the application of IH. We only look at $\varSigma(\phi) \supp{\at{R}( (\cdot) \mmcomma \at{Q}_2 )}{\baseB} \chi$, and that for $\psi$ is exactly the same by replacing $\phi$ with $\psi$. Spelling out the definition of $\varSigma(\phi) \mcomma \varTheta \supp{\at{R}(\cdot)}{\baseB} \chi$, we have: 
            \begin{equation}\label{eq:general-and-elim-sound-step-2}
            \begin{split}
                & \forall \baseX \baseGeq \baseB,  \forall \at{W} \in \Bunches(\At), \text{ if } \exists \at{U}, \at{V} \in \Bunches(\At) \text{ s.t. } \at{U} \mcomma \at{V} \bunchWeakerThan \at{W},  \\
                & \supp{\at{U}}{\baseX} \varSigma(\phi), \text{ and} \supp{\at{V}}{\baseX} \varTheta, \text{ then} \supp{\at{R}(\at{W})}{\baseX} \chi 
            \end{split}
            \end{equation}
            Given $\baseC \baseGeq \baseB$ and $\supp{\at{Q}_2}{\baseC} \varTheta$, \eqref{eq:general-and-elim-sound-step-2} implies: 
            \begin{equation}
                \forall \baseY \baseGeq \baseC, \forall \at{U} \in \Bunches(\At), \text{ if } \supp{\at{U}}{\baseY} \varSigma(\phi), \text{ then } \supp{\at{R}( \at{U} \mmcomma \at{Q}_2 )}{\baseY} \chi
            \end{equation}
            This is precisely $\varSigma(\phi) \supp{\at{R}((\cdot) \mmcomma \at{Q}_2) }{\baseC} \chi$. 
        
            Now with $\varDelta \supp{}{} \phi \lor \psi$, $\varSigma(\phi) \supp{\at{R}((\cdot) \mmcomma \at{Q}_2)}{\baseC} \chi$, and $\varSigma(\psi) \supp{\at{R}((\cdot) \mmcomma \at{Q}_2)}{\baseC} \chi$ in hand, we can apply IH to conclude $\varSigma(\varDelta) \supp{ \at{R}((\cdot) \mmcomma \at{Q}_2) }{\baseC} \chi$. Together with $\supp{\at{Q}_1}{\baseC} \varSigma(\varDelta)$, it follows that $\supp{ \at{R}( \at{Q}_1 \mmcomma \at{Q}_2 ) }{\baseC} \chi$, so $\supp{\at{R}(\at{P})}{\baseC} \chi$ by $\at{P} \bunchStrongerThan \at{Q}_1 \mcomma \at{Q}_2$ and Proposition~\ref{prop:monotonicity}. 
            \end{itemize}
        \end{itemize}
        This completes the inductive proof. 
\end{proof}

\begin{proof}[Proof of Lemma~\ref{lem:soundness-key-general}.\ref{lem:general-bot-elim-sound}]
    We proceed by induction on the structure of $\phi$. 
    \begin{itemize}[label = --]
        \item When $\phi$ is atomic, the desired statement follows immediately from \eqref{cl:BI-BeS-supp:bot}. 
        \item $\phi = \sigma \mand \tau$. We assume some $\baseC \baseGeq \baseB$ and $\at{P}$ such that $\supp{\at{P}}{\baseC} \bot$. The goal is to show $\supp{\at{R}(\at{P})}{\baseC} \sigma \mand \tau$; in other words, for arbitrary $\baseX \baseGeq \baseC$, $\at{U}(\cdot) \in \BunchesWithHole(\At)$, and $\at{p}$, if $\sigma \mcomma \tau \supp{\at{U}(\cdot)}{\baseX} \at{p}$, then $\supp{\at{U}(\at{R}(\at{P}))}{\baseX} \at{p}$. So we fix some $\baseD \baseGeq \baseC$, $\at{T} \in \BunchesWithHole(\At)$, and $\at{q} \in \At$ such that $\sigma \mcomma \tau \supp{\at{T}(\cdot)}{\baseD} \at{q}$. The goal is then to show $\supp{\at{T}(\at{R}(\at{P}))}{\baseD} \at{q}$. By \eqref{cl:BI-BeS-supp:bot}, this follows immediately from $\supp{\at{P}}{\baseC} \bot$ and $\baseD \baseGeq \baseC$. 
        \item $\phi = \mtop$. Given some $\baseC \baseGeq \baseB$ and $\at{P} \in \Bunches(\At)$ such that $\supp{\at{P}}{\baseC} \bot$, we prove $\supp{\at{R}(\at{P})}{\baseC} \mtop$. So we further fix some $\baseD \baseGeq \baseC$, $\at{T}(\cdot) \in \BunchesWithHole(\At)$, and $\at{q} \in \At$ such that $\supp{\at{T}(\at{R}(\munit))}{\baseD} \at{q}$, and prove $\supp{\at{T}(\at{R}(\at{P}))}{\baseD} \at{q}$. Again, this follows immediately from $\supp{\at{P}}{\baseC} \bot$ by \eqref{cl:BI-BeS-supp:bot}. 
        \item $\phi = \sigma \mto \tau$. Given some $\baseC \baseGeq \baseB$ and $\at{P} \in \Bunches(\At)$ such that $\supp{\at{P}}{\baseC} \bot$, we prove $\supp{\at{R}(\at{P})}{\baseC} \sigma \mto \tau$; by \eqref{cl:BI-BeS-supp:implication}, we prove $\sigma \supp{\at{R}(\at{P}) \mmcomma (\cdot)}{\baseC}\tau$. So, assume further some $\baseD \baseGeq \baseC$ and $\at{Q} \in \Bunches(\At)$ satisfying $\supp{\at{Q}}{\baseD} \sigma$, the goal is to prove $\supp{\at{R}(\at{P}) \mmcomma \at{Q}}{\baseD} \tau$. We apply IH for $\tau$ which says $\bot \supp{\at{R}(\cdot) \mmcomma \at{Q}}{\baseC} \tau$, and with $\supp{\at{P}}{\baseC} \bot$ we have $\supp{\at{R}(\at{P}) \mmcomma \at{Q}}{\baseD} \tau$. 
        \item The case for $\land$, $\top$, and $\to$ are similar to that for $\mand$, $\mtop$, and $\mto$, respectively, so we do not elaborate here. 
        \item $\phi = \sigma \lor \psi$. Given some $\baseC \baseGeq \baseB$ and $\at{P} \in \Bunches(\At)$ such that $\supp{\at{P}}{\baseC} \bot$, we prove $\supp{\at{R}(\at{P})}{\baseC} \sigma \lor \tau$. To this end, we further fix some arbitrary $\baseD \baseGeq \baseC$, $\at{T}(\cdot) \in \BunchesWithHole(\At)$, and $\at{q} \in \At$ satisfying $\phi \supp{\at{T}(\cdot)}{\baseD} \at{q}$, $\psi \supp{\at{T}(\cdot)}{\baseD} \at{q}$, and the goal is to show $\supp{\at{T}(\at{R}(\at{P}))}{\baseD} \at{q}$. 
        This follows immediately from $\supp{\at{P}}{\baseC} \bot$ by \eqref{cl:BI-BeS-supp:bot}. 
        \item $\phi = \bot$. This immediately follows from \eqref{cl:BI-BeS-supp:bot}. 
    \end{itemize}
    This completes the inductive proof. 
\end{proof}

The following results state the base cases necessary for the inductions in Lemma~\ref{lem:soundness-key-general}. 
\begin{lemma}\label{lem:soundness-key-base}
The following hold:
\begin{enumerate}
    \item  \label{lem:soundness-key-mtop-base}    If $\munit \supp{\at{T}(\cdot)}{\baseB} \chi$ and $\varDelta \supp{\at{S}(\cdot)}{\baseB} \mtop$, then $\varDelta \supp{\at{T}(\at{S}(\cdot))}{\baseB} \chi$.
    \item \label{lem:soundness-key-mand-base}
    If $\varDelta \supp{\at{S}(\cdot)}{\baseB} \phi \mand \psi$ and $\phi \mcomma \psi \supp{\at{T}(\cdot)}{\baseB} \chi$, then $\varDelta \supp{\at{T}(\at{S}(\cdot))}{\baseB} \chi$.
    \item \label{lem:soundness-key-and-base}
    If $\varDelta \supp{\at{S}(\cdot)}{\baseB} \phi \land \psi$ and $\phi \acomma \psi \supp{\at{T}(\cdot)}{\baseB} \chi$, then $\varDelta \supp{\at{T}(\at{S} (\cdot))}{\baseB} \chi$. 
    \item \label{lem:soundness-key-or-base}
    If $\varDelta \supp{\at{S}(\cdot)}{\baseB} \phi \lor \psi$, $\phi \supp{\at{T}(\cdot)}{\baseB} \chi$, and $\psi \supp{\at{T}(\cdot)}{\baseB} \chi$, then $\varDelta \supp{\at{T}(\at{S}(\cdot))}{\baseB} \chi$.
    \end{enumerate}
\end{lemma}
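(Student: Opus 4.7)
The plan is to prove all four parts simultaneously by induction on the structure of $\chi$. In each part, first unfold \eqref{cl:BI-BeS-supp:inf} on the goal: fix an arbitrary $\baseC \baseGeq \baseB$ and $\at{P} \in \Bunches(\At)$ with $\supp{\at{P}}{\baseC} \varDelta$, reducing the task to $\supp{\at{T}(\at{S}(\at{P}))}{\baseC} \chi$. Applying the $\varDelta$-hypothesis at $\supp{\at{P}}{\baseC} \varDelta$ produces $\supp{\at{S}(\at{P})}{\baseC} \xi$, where $\xi$ is the main formula of the part: $\mtop$ in Part~\ref{lem:soundness-key-mtop-base}, $\phi \mand \psi$ in Part~\ref{lem:soundness-key-mand-base}, $\phi \land \psi$ in Part~\ref{lem:soundness-key-and-base}, and $\phi \lor \psi$ in Part~\ref{lem:soundness-key-or-base}. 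The elimination-style clause governing $\xi$ --- which substitutes $\at{S}(\at{P})$ into an arbitrary atomic context --- drives the argument throughout.

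In the atomic base case $\chi = \at{q}$, the plan is to apply the clause for $\xi$ directly with contextual bunch $\at{T}(\cdot)$ and atom $\at{q}$. The clause's premiss is exactly the remaining $\at{T}$-hypothesis specialised to $\at{q}$, transported to $\baseC$ via Lemma~\ref{lem:supp-formula-monotone-on-bases}. In Part~\ref{lem:soundness-key-mtop-base} this additionally unfolds \eqref{cl:BI-BeS-supp:inf} on $\munit \supp{\at{T}(\cdot)}{\baseB} \at{q}$ at the trivial witness $\at{U} = \munit$ to produce $\supp{\at{T}(\munit)}{\baseC} \at{q}$.

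For the inductive step, case-split on the top connective of $\chi$. When that connective is itself elimination-style --- one of $\land$, $\mand$, $\lor$, $\top$, $\mtop$, or $\bot$ --- the plan is to unfold the goal's clause, reducing to an atomic-support goal of the form $\supp{\at{V}(\at{T}(\at{S}(\at{P})))}{\baseX} \at{p}$ for generic $\baseX \baseGeq \baseC$, $\at{V}(\cdot) \in \BunchesWithHole(\At)$, and $\at{p} \in \At$ satisfying the clause's premiss. Unfolding the $\at{T}$-hypothesis analogously produces the corresponding judgement indexed by $\at{V}(\at{T}(\cdot))$ --- for example, $\phi \mcomma \psi \supp{\at{V}(\at{T}(\cdot))}{\baseX} \at{p}$ in Part~\ref{lem:soundness-key-mand-base}, or $\supp{\at{V}(\at{T}(\munit))}{\baseX} \at{p}$ in Part~\ref{lem:soundness-key-mtop-base} --- at which point the clause for $\xi$ applied with contextual bunch $\at{V}(\at{T}(\cdot))$ closes the case. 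The induction hypothesis is not needed for these connectives.

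The main obstacle is the inductive step for the implications $\chi = \chi_1 \to \chi_2$ and $\chi = \chi_1 \mto \chi_2$, whose clauses are introduction-style. After unfolding the $\to$- or $\wand$-clause and \eqref{cl:BI-BeS-supp:inf} on the goal, the task becomes $\supp{\at{T}(\at{S}(\at{P})) \commavar \at{Q}}{\baseX} \chi_2$ for generic $\at{Q}$ supporting $\chi_1$ over $\baseX$, where $\commavar$ matches the connective. The plan is to set $\at{T}'(\cdot) := \at{T}(\cdot) \commavar \at{Q}$ and invoke the IH at the smaller formula $\chi_2$ with $\at{T}'$ in place of $\at{T}$. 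Feeding the IH requires unfolding each $\at{T}$-hypothesis (there are two such in Part~\ref{lem:soundness-key-or-base}) using the same implication clause and specialising its implicit $\chi_1$-parameter to the given $\at{Q}$ via monotonicity; the $\varDelta$-hypothesis transports to $\baseX$ unchanged by monotonicity. The IH then delivers $\supp{\at{T}'(\at{S}(\at{P}))}{\baseX} \chi_2$, which, unfolding $\at{T}'$, is exactly the required support.
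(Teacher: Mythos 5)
Your proposal is correct and follows essentially the same route as the paper: fix $\baseC$ and $\at{P}$ via \eqref{cl:BI-BeS-supp:inf} to obtain $\supp{\at{S}(\at{P})}{\baseC}\xi$, induct on $\chi$, discharge the elimination-style connectives by reducing to an atomic conclusion and applying the clause for $\xi$ with the composed context, and handle the implications by absorbing the fresh resource $\at{Q}$ into the contextual bunch $\at{T}(\cdot)\commavar\at{Q}$ before invoking the IH at $\chi_2$. Packaging the four parts as one simultaneous induction, and observing that the IH is genuinely unused outside the implication cases, are only cosmetic departures from the paper's four parallel inductions.
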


\begin{proof}[Proof of Lemma~\ref{lem:soundness-key-base}.\ref{lem:soundness-key-mtop-base}]
    Assume \textsc{(A)} $\munit \supp{\at{T}(\cdot)}{\baseB} \chi$ and \textsc{(B)} $\varDelta \supp{\at{S}(\cdot)}{\baseB} \mtop$. We require to show \textsc{(C)} $\varDelta \supp{\at{T}(\at{S}(\cdot))}{\baseB} \chi$.

    \labelandtag{eq:soundness-key-mtop-assumption-1}{\textsc{A}}
    \labelandtag{eq:soundness-key-mtop-assumption-2}{\textsc{B}}
    \labelandtag{eq:soundness-key-mtop-conclusion-1}{\textsc{C}}
    
    We fix some $\baseC \baseGeq \baseB$ and $\at{P}$ such that $\supp{\at{P}}{\baseC} \varDelta$, and the goal is to show $\supp{\at{T}(\at{S}(\at{P}))}{\baseC} \chi$. From~\eqref{eq:soundness-key-mtop-assumption-2} and $\suppM{\baseC}{\at{P}} \Delta$, we have $\supp{\at{S}(\at{P})}{\baseC} \mtop$. 
    It suffices to prove the following for arbitrary $\at{Q} \in \Bunches(\At)$: 
    \begin{equation}
    \label{eq:soundness-key-mtop-1}
        \text{if } \munit \supp{\at{T}(\cdot)}{\baseB} \chi \text{ and} \supp{\at{Q}}{\baseC} \mtop, \text{ then} \supp{\at{T}(\at{Q})}{\baseC} \chi. 
    \end{equation}
    This is because by taking $\at{Q}$ to be $\at{S}(\at{P})$ in \eqref{eq:soundness-key-mtop-1}, one immediately has $\supp{\at{T}(\at{S}(\at{P}))}{\baseC} \chi$. 
    Furthermore, since $\supp{\at{U}}{\baseX} \munit$ iff $\at{U} \bunchStrongerThan \munit$, \eqref{eq:soundness-key-mtop-1} is equivalent to: 
    \begin{equation}
    \label{eq:soundness-key-mtop-2}
        \text{if } \supp{\at{T}(\at{U})}{\baseX} \chi \text{, for arbitrary } \baseX \baseGeq \baseB \text{ and } \at{U} \bunchStrongerThan \munit, \text{ and} \supp{\at{Q}}{\baseC} \mtop, \text{ then} \supp{\at{T}(\at{Q})}{\baseC} \chi. 
    \end{equation}
    Therefore, it suffices to show:
    \begin{equation}
    \label{eq:soundness-key-mtop-3}
     \text{ if} \supp{\at{T}(\munit)}{\baseX} \chi \text{, for arbitrary $\baseX \supseteq \baseB$, and} \supp{\at{Q}}{\baseC} \mtop, \text{ then} \supp{\at{T}(\at{Q})}{\baseC} \chi. 
    \end{equation}
    In other words, the goal is to show: if $\supp{\at{T}(\munit)}{\baseB} \chi$ and $\supp{\at{Q}}{\baseC} \mtop$, then $\supp{\at{T}(\at{Q})}{\baseC} \chi$.
    We prove \eqref{eq:soundness-key-mtop-3} by induction on the structure of $\chi$:
    %
    \begin{itemize}[label = --]
        \item $\chi$ is atomic, say it is some $\at{p} \in \At$. By definition, $\supp{\at{Q}}{\baseC} \mtop$ means for arbitrary $\baseX \baseGeq \baseC$, $\supp{\at{T}(\munit)}{\baseX} \at{p}$ implies $\supp{\at{T}(\at{Q})}{\baseX} \at{p}$. The desired result obtains by choosing $\baseX = \baseC$.

        \item $\chi = \sigma \land \tau$. In order to prove $\supp{\at{T}(\at{Q})}{\baseC} \sigma \land \tau$, we fix some arbitrary $\baseD \baseGeq \baseC$, $\at{R}(\cdot) \in \BunchesWithHole(\At)$ and $\at{q} \in \At$ such that $\sigma \acomma \tau \supp{\at{R}(\cdot)}{\baseD} \at{q}$, and the goal is to prove $\supp{\at{R}(\at{T}(\at{Q}))}{\baseD} \at{q}$. Now $\supp{\at{T}(\munit)}{\baseB} \sigma \land \tau$ together with $\sigma \acomma \tau \supp{\at{R}(\cdot)}{\baseD} \at{q}$ entail $\supp{\at{R}(\at{T}(\munit))}{\baseD} \at{q}$. This together with $\supp{\at{Q}}{\baseC} \mtop$ and $\baseD \baseGeq \baseC$ imply $\supp{\at{R}(\at{T}(\at{Q}))}{\baseD} \at{q}$. 
        \item $\chi = \top$. In order to prove $\supp{\at{T}(\at{Q})}{\baseC} \top$, we fix some arbitrary $\baseD \baseGeq \baseC$, $\at{R}(\cdot) \in \BunchesWithHole(\At)$, and $\at{q} \in \At$ satisfying $\supp{\at{R}(\aunit)}{\baseD} \at{q}$, and show that $\supp{\at{R}(\at{T}(\at{Q}))}{\baseD} \at{q}$.
        $\supp{\at{T}(\munit)}{\baseB} \top$ together with $\supp{\at{R}(\aunit)}{\baseD} \at{q}$ imply $\supp{\at{R}(\at{T}(\munit))}{\baseD} \at{q}$. This together with $\supp{\at{Q}}{\baseC} \mtop$ imply $\supp{\at{R}(\at{T}(\at{Q}))}{\baseD} \at{q}$ by \eqref{cl:BI-BeS-supp:mtop}. 
        \item $\chi = \sigma \to \tau$. We apply \eqref{cl:BI-BeS-supp:implication} to \eqref{eq:soundness-key-mtop-3} and get: if $\sigma \supp{\at{T}(\munit) \aacomma (\cdot)}{\baseX} \tau$, for arbitrary $\baseX \baseGeq \baseB$, and $\supp{\at{Q}}{\baseC} \mtop$, then $\sigma \supp{\at{T}(\at{Q}) \aacomma (\cdot)}{\baseC} \tau$.
        Given the assumptions, to prove $\sigma \supp{\at{T}(\at{Q}) \aacomma (\cdot)}{\baseC} \tau$, we fix some $\baseD \baseGeq \baseC$ and $\at{R} \in \Bunches(\At)$ such that $\supp{\at{R}}{\baseD} \sigma$, and show $\supp{\at{T}(\at{Q}) \aacomma \at{R}}{\baseD} \tau$.
        From $\sigma \supp{\at{T}(\munit) \aacomma (\cdot)}{\baseX} \tau$ and $\supp{\at{R}}{\baseD} \sigma$, we infer $\supp{\at{T}(\munit) \aacomma \at{R}}{\baseD} \tau$. From this and $\supp{\at{Q}}{\baseC} \mtop$, we use IH on $\tau$ to infer $\supp{\at{T}(\at{Q}) \aacomma \at{R}}{\baseD} \tau$.
        \item $\chi = \sigma \mand \tau$. Given the assumptions, in order to show $\supp{\at{T}(\at{Q})}{\baseC} \sigma \mand \tau$, we fix some arbitrary $\baseD \baseGeq \baseC$, $\at{R}(\cdot) \in \BunchesWithHole(\At)$, and $\at{q} \in \At$ such that $\sigma \mcomma \tau \supp{\at{R}(\cdot)}{\baseD} \at{q}$, and show $\supp{\at{R(T(Q))}}{\baseD} \at{q}$.
        From $\supp{\at{T}(\munit)}{\baseB} \sigma \mand \tau$ and $\sigma\mcomma \tau \supp{\at{R(\cdot)}}{\baseD} \at{q}$, infer $\supp{\at{R(T(\munit))}}{\baseD} \at{q}$. From this and $\supp{\at{Q}}{\baseC} \mtop$, we have $\supp{\at{R(T(Q))}}{\baseD} \at{q}$ by \eqref{cl:BI-BeS-supp:mtop}.
        \item $\chi = \mtop$. 
        Given the assumptions, to prove $\supp{\at{T(Q)}}{\baseC} \mtop$, we fix some $\baseD \baseGeq \baseC$, $\at{R(\cdot)} \in \BunchesWithHole(\At)$, and $\at{q} \in \At$ such that $\supp{\at{R}(\munit)}{\baseD} \at{q}$, and prove that $\supp{\at{R}(\at{T}(\at{Q}))}{\baseD} \at{q}$.
        From $\supp{\at{R}(\munit)}{\baseD} \at{q}$ and $\supp{\at{T}(\munit)}{\baseC} \mtop$, infer $\supp{\at{R}(\at{T}(\munit))}{\baseD} \at{q}$ by \eqref{cl:BI-BeS-supp:mtop}. From this and $\supp{\at{Q}}{\baseC} \mtop$, infer $\supp{\at{R}(\at{T}(\at{Q}))}{\baseD} \at{q}$, by \eqref{cl:BI-BeS-supp:mtop} again.
        \item $\chi = \sigma \mto \tau$.
        Apply \eqref{cl:BI-BeS-supp:wand} to \eqref{eq:soundness-key-mtop-3}, then the goal becomes: if $\sigma \supp{\at{T}(\munit) \mmcomma (\cdot)}{\baseB} \tau$ and $\supp{\at{Q}}{\baseC} \mtop$, then $\sigma \supp{\at{T}(\at{Q}) \mmcomma (\cdot)}{\baseC} \tau$. 
        Assume the assumptions, to show $\sigma \supp{\at{T}(\at{Q}) \mmcomma (\cdot)}{\baseC} \tau$, we fix some $\baseD \baseGeq \baseC$ and $\at{R} \in \Bunches(\At)$ such that $\supp{\at{R}}{\baseD} \sigma$, and show $\supp{\at{T}(\at{Q}) \mmcomma \at{R}}{\baseD} \tau$. 
        From $\supp{\at{R}}{\baseD} \sigma$ and $\sigma \supp{\at{T}(\munit) \mmcomma (\cdot)}{\baseB} \tau$, infer $\supp{\at{T}(\munit) \mmcomma \at{R}}{\baseD} \tau$. From this and $\supp{\at{Q}}{\baseC} \mtop$, we infer $\supp{\at{T}(\at{Q}) \mmcomma \at{R}}{\baseD} \tau$, by IH on $\tau$.
        \item $\chi = \sigma \lor \tau$.
        Given the assumptions, in order to show $\supp{\at{T}(\at{Q})}{\baseC} \sigma \lor \tau$, we fix some $\baseD \baseGeq \baseC$, $\at{R}(\cdot) \in \Bunches(\At)$, and $\at{q} \in \At$ satisfying $\sigma \supp{\at{R}(\cdot)}{\baseD} \at{q}$ and $\tau \supp{\at{R}(\cdot)}{\baseD} \at{q}$, and show that $\supp{\at{R}(\at{T}(\at{Q}))}{\baseD} \at{q}$. 
        From $\supp{\at{T}(\munit)}{\baseB} \sigma \lor \tau$, $\sigma \supp{\at{R}(\cdot)}{\baseD} \at{q}$ and $\tau \supp{\at{R}(\cdot)}{\baseD} \at{q}$, it follows that $\supp{\at{R}(\at{T}(\munit))}{\baseD} \at{q}$ by \eqref{cl:BI-BeS-supp:disjunction}. From this and $\supp{\at{Q}}{\baseC} \mtop$, we conclude that $\supp{\at{R}(\at{T}(\at{Q}))}{\baseE} \at{q}$, by \eqref{cl:BI-BeS-supp:mtop}. 
        \item $\chi = \bot$. 
        Given the assumptions, in order to show $\supp{\at{T(Q)}}{\baseC} \bot$, in order to show, we simply fix some arbitrary $\at{R}(\cdot) \in \BunchesWithHole(\At)$ and $\at{q} \in \At$, and then show $\supp{\at{R}(\at{T}(\at{Q}))}{\baseD} \at{q}$. 
        From $\supp{\at{T}(\munit)}{\baseB} \bot$, infer $\supp{\at{R}(\at{T}(\munit))}{\baseC} \at{q}$ by~\eqref{cl:BI-BeS-supp:bot}.
        From this and $\supp{\at{Q}}{\baseC} \mtop$, we conclude that $\supp{\at{R}(\at{T}(\at{Q}))}{\baseD} \at{q}$, by \eqref{cl:BI-BeS-supp:mtop}. 
    \end{itemize}
    This completes the inductive proof. 
\end{proof}

\begin{proof}[Proof of Lemma~\ref{lem:soundness-key-base}.\ref{lem:soundness-key-mand-base}]
    By induction on the structure of $\chi$. The proof follows the same line of reasoning of the additive counterpart, see Lemma~\ref{lem:soundness-key-base}.\ref{lem:soundness-key-and-base}. 
\end{proof}

\begin{proof}[Proof of Lemma~\ref{lem:soundness-key-base}.\ref{lem:soundness-key-and-base} ]

     Assume \textsc{(A)} $\varDelta \supp{\at{S}(\cdot)}{\baseB} \phi \land \psi$ and \textsc{(B)} $\phi \acomma \psi \supp{\at{T}(\cdot)}{\baseB} \chi$. We require to show \textsc{(C)} $\varDelta \supp{\at{T}(\at{S} (\cdot))}{\baseB} \chi$. 

    \labelandtag{item:soundness-key-and-1}{\textsc{(A)}}
    \labelandtag{item:soundness-key-and-2}{\textsc{(B)}}
    \labelandtag{item:soundness-key-and-conclusion}{\textsc{(C)}}

    We fix some $\baseC \baseGeq \baseB$ and $\at{P} \in \Bunches(\At)$ such that $\supp{\at{P}}{\baseC} \varDelta$, and prove $\supp{\at{T(S(P))}}{\baseC} \chi$ by induction on the structure of $\chi$. Note that $\supp{\at{P}}{\baseC} \varDelta$ and \eqref{item:soundness-key-and-1} entail $\supp{\at{S(P)}}{\baseC} \varphi \land \psi$. 
    \begin{itemize}[label = --]
        \item $\chi$ is atomic. This follows immediately from \eqref{cl:BI-BeS-supp:At}. 
        \item $\chi = \sigma \land \tau$. Spelling out the definition of the goal $\supp{\at{T(S(P))}}{\baseC} \sigma \land \tau$, it suffices to fix arbitrary $\baseD \baseGeq \baseC$, $\at{R}(\cdot) \in \BunchesWithHole(\At)$, and $\at{q} \in \At$ such that $\sigma \acomma \tau \supp{\at{R}(\cdot)}{\baseD} \at{q}$, and show $\supp{\at{R(T(S(P)))}}{\baseD} \at{q}$. Apply~\eqref{cl:BI-BeS-supp:inf} and~\eqref{cl:BI-BeS-supp:conjunction} to        ~\ref{item:soundness-key-and-2} together with $\sigma \acomma \tau \supp{\at{R}(\cdot)}{\baseD} \at{q}$, we have: 
        \begin{equation*}
            \forall \baseY \baseGeq \baseD, \forall \at{U} \in \Bunches(\At),  \text{ if} \supp{\at{U}}{\baseY} \varphi \acomma \psi, \text{ then} \supp{\at{R(T(U))}}{\baseY} \at{q} 
        \end{equation*}
        That is, $\varphi \acomma \psi \supp{\at{R(T(\cdot))}}{\baseD} \at{q}$. By the IH, this and $\supp{\at{S(P)}}{\baseC} \varphi \land \psi$ imply $\supp{\at{R(T(S(P)))}}{\baseD} \at{q}$. 
        \item $\chi = \top$. According to \eqref{cl:BI-BeS-supp:top}, to prove $\supp{\at{T(S(P))}}{\baseC} \top$, we fix some arbitrary $\baseD \baseGeq \baseC$, $\at{R}(\cdot) \in \BunchesWithHole(\At)$, and $\at{q} \in \At$ such that $\supp{\at{R}(\aunit)}{\baseD} \at{q}$, and the goal is to prove $\supp{\at{R(T(S(P)))}}{\baseD} \at{q}$.~\ref{item:soundness-key-and-2} together with $\supp{\at{R}(\aunit)}{\baseD} \at{q}$ implies: 
        \begin{equation*}
            \forall \baseY \baseGeq \baseD, \forall \at{U} \in \Bunches(\At), \text{ if} \supp{\at{U}}{\baseY} \varphi \acomma \psi, \text{ then} \supp{\at{R(T(U))}}{\baseY} 
            \at{q} 
        \end{equation*}
        That is, $\varphi \acomma \psi \supp{\at{R(T(\cdot))}}{\baseD} \at{q}$. This together with $\supp{\at{S(P)}}{\baseB} \varphi \land \psi$ imply $\supp{\at{R(T(S(P)))}}{\baseD} \at{q}$, by the IH. 
        \item $\chi = \sigma \to \tau$. The goal $\supp{\at{T(S(P))}}{\baseC} \sigma \to \tau$ is equivalent to $\sigma \supp{\at{T(S(P))} \aacomma (\cdot)}{\baseC} \tau$. So we fix some $\baseD \baseGeq \baseC$ and $\at{Q} \in \Bunches(\At)$ such that $\supp{\at{Q}}{\baseD} \sigma$, and show $\supp{\at{T(S(P))} \aacomma \at{Q}}{\baseD} \tau$.~\ref{item:soundness-key-and-2} $\varphi \acomma \psi \supp{\at{T}(\cdot)}{\baseB} \sigma \to \tau$ together with $\supp{\at{Q}}{\baseD} \sigma$ says $\varphi \acomma \psi \supp{\at{T}(\cdot) \aacomma \at{Q}}{\baseD} \tau$, and together with $\supp{\at{S(P)}}{\baseB} \varphi \land \psi$, it follows that $\supp{\at{T(S(P))} \aacomma \at{Q}}{\baseD} \tau$, by the IH. 
        \item $\chi = \sigma \mand \tau$. 
        To prove $\supp{\at{T}(\at{S}(\at{P}))}{\baseC} \sigma \mand \tau$, we further fix some $\baseD \baseGeq \baseC$, $\at{R}(\cdot) \in \BunchesWithHole(\At)$, and $\at{q} \in \At$ such that $\sigma \mcomma \tau \supp{\at{R}(\cdot)}{\baseD} \at{q}$. The goal is to prove $\supp{\at{R}( \at{T}(\at{S}(\at{P})) )}{\baseD} \at{q}$. Now $\sigma \mcomma \tau \supp{\at{R}(\cdot)}{\baseD} \at{q}$ and~\ref{item:soundness-key-and-2} imply $\varphi \acomma \psi \supp{\at{R}(\at{T}(\cdot))}{\baseD} \at{q}$. This together with $\supp{\at{S}(\at{P})}{\baseC} \varphi \land \psi$ entail $\supp{ \at{R} (\at{T} (\at{S} (\at{P}) ) ) }{\baseD} \at{q}$, according to \eqref{cl:BI-BeS-supp:conjunction}. 
        \item $\chi = \mtop$. Similar to the $\top$ case. 
        \item $\chi = \sigma \mto \tau$. Similiar to the $\to$ case. 
        \item $\chi = \sigma \lor \tau$. In order to show $\supp{\at{T(S(P))}}{\baseC} \sigma \lor \tau$, we fix some $\baseD \baseGeq \baseC$, $\at{R}(\cdot) \in \BunchesWithHole(\At)$, and $\at{q} \in \At$ such that $\sigma \supp{\at{R}(\cdot)}{\baseD} \at{q}$ and $\tau \supp{\at{R}(\cdot)}{\baseD} \at{q}$. The goal is $\supp{\at{R(T(S(P)))}}{\baseD} \at{q}$. Note that $\varphi \acomma \psi \supp{\at{T}(\cdot)}{\baseB} \sigma \lor \tau$ together with $\sigma \supp{\at{R}(\cdot)}{\baseD} \at{q}$ and $\tau \supp{\at{R}(\cdot)}{\baseD} \at{q}$ imply $\varphi \acomma \psi \supp{\at{R(T(\cdot))}}{\baseD} \at{q}$. Apply the IH to $\supp{\at{S(P)}}{\baseC} \varphi \land \psi$ and $\varphi \acomma \psi \supp{\at{R(T(\cdot))}}{\baseD} \at{q}$, it follows that $\supp{\at{R(T(S(P)))}}{\baseD} \at{q}$. 
        \item $\chi = \bot$.
        In order to show $\supp{\at{T(S(P))}}{\baseC} \bot$, we fix some arbitrary $\baseD \baseGeq \baseC$, $\at{U}(\cdot) \in \BunchesWithHole(\At)$ and $\at{q} \in \At$, and show $\supp{\at{U(T(S(P)))}}{\baseD} \at{q}$.
        Given $\varphi \acomma \psi \supp{\at{T(\cdot)}}{\baseB} \bot$ and $\baseD \baseGeq \baseB$, we have $\varphi \acomma \psi \supp{\at{U(T(\cdot))}}{\baseD} \at{q}$ by (\ref{cl:BI-BeS-supp:inf}) and (\ref{cl:BI-BeS-supp:bot}). Apply the IH to $\supp{\at{S(P)}}{\baseC} \varphi \land \psi$ and $\varphi \acomma \psi \supp{\at{U(T(\cdot))}}{\baseD} \at{q}$, we conclude $\supp{\at{U(T(S(P)))}}{\baseD} \at{q}$.
    \end{itemize}
    This completes the induction. 
\end{proof}

\begin{proof}[Proof of Lemma~\ref{lem:soundness-key-base}.\ref{lem:soundness-key-or-base}]
    Assume \textsc{(A)}~$\varDelta \supp{\at{S}(\cdot)}{\baseB} \phi \lor \psi$, \textsc{(B)}~$\phi \supp{\at{T}(\cdot)}{\baseB} \chi$, and \textsc{(C)}~$\psi \supp{\at{T}(\cdot)}{\baseB} \chi$. We require to show \textsc{(D)}~$\varDelta \supp{\at{T}(\at{S}(\cdot))}{\baseB} \chi$.    We proceed by induction on the structure of $\chi$. 
    
      \labelandtag{eq:soundness-key-or-assumption-1}{\textsc{(A)}}
    \labelandtag{eq:soundness-key-or-assumption-2}{\textsc{(B)}}
    \labelandtag{eq:soundness-key-or-assumption-3}{\textsc{(C)}}
    \labelandtag{eq:soundness-key-or-conclusion-1}{\textsc{(D)}}
    
    \begin{itemize}[label=--]
        \item $\chi$ is an atom. This follows immediately from the definition \eqref{cl:BI-BeS-supp:disjunction}. 
        \item $\chi = \sigma \land \tau$. We fix some $\baseC \baseGeq \baseB$ and $\at{P}$ such that $\supp{\at{P}}{\baseC} \varDelta$. It then suffices to show $\supp{\at{T}(\at{S}(\at{P}))}{\baseC} \sigma \land \tau$ (under~\ref{eq:soundness-key-or-assumption-1},~\ref{eq:soundness-key-or-assumption-2},~\ref{eq:soundness-key-or-assumption-3}). 
        Towards this, we further fix some $\baseD \baseGeq \baseC$, $\at{Q}(\cdot) \in \BunchesWithHole(\At)$, and $\at{q} \in \At$ such that $\sigma \acomma \tau \supp{\at{Q}(\cdot)}{\baseD} \at{q}$, and show that $\supp{\at{Q}(\at{T}(\at{S}(\at{P})))}{\baseD} \at{q}$. 
        Given~\ref{eq:soundness-key-or-assumption-1}, we have $\supp{\at{S}(\at{P})}{\baseC} \phi \lor \psi$. 
        Note that~\ref{eq:soundness-key-or-assumption-2} and $\sigma \acomma \tau \supp{\at{Q}(\cdot)}{\baseD} \at{q}$ imply $\phi \supp{\at{Q}(\at{T}(\cdot))}{\baseD} \at{q}$;~\ref{eq:soundness-key-or-assumption-3} and $\sigma \acomma \tau \supp{\at{Q}(\cdot)}{\baseD} \at{q}$ imply $\psi \supp{\at{Q}(\at{T}(\cdot))}{\baseD} \at{q}$. 
        So apply the IH to $\supp{\at{S}(\at{P})}{\baseD} \phi \lor \psi$, $\phi \supp{\at{Q}(\at{T}(\cdot))}{\baseD} \at{q}$, and $\psi \supp{\at{Q}(\at{T}(\cdot))}{\baseD} \at{q}$, we get $\supp{\at{Q}(\at{T}(\at{S}(\at{P})))}{\baseD} \at{q}$. 
        %
        \item $\chi = \top$. We fix some $\baseC \baseGeq \baseB$ and $\at{P}$ such that $\supp{\at{P}}{\baseC} \varDelta$, and it suffices to show $\supp{\at{T}(\at{S}(\at{P}))}{\baseC} \top$. Towards this, we fix some $\baseD \baseGeq \baseC$, $\at{Q}(\cdot) \in \BunchesWithHole(\At)$, and $\at{q} \in \At$ such that $\supp{\at{Q}(\aunit)}{\baseD} \at{q}$, and the goal is to show $\supp{\at{Q}(\at{T}(\at{S}(\at{P})))}{\baseD} \at{q}$. 
        $\supp{\at{P}}{\baseC} \varDelta$ and~\ref{eq:soundness-key-or-assumption-1} entail $\supp{\at{S}(\at{P})}{\baseC} \phi \lor \psi$. 
        Note that $\supp{\at{Q}(\aunit)}{\baseD} \at{q}$ and~\ref{eq:soundness-key-or-assumption-2} imply $\phi \supp{\at{Q}(\at{T}(\cdot))}{\baseD} \at{q}$: for arbitrary $\baseX \baseGeq \baseD$, $\at{U} \in \Bunches(\At)$ such that $\supp{\at{U}}{\baseX} \phi$, by~\ref{eq:soundness-key-or-assumption-2} we have $\supp{\at{T}(\at{U})}{\baseX} \top$, so $\supp{\at{Q}(\aunit)}{\baseD} \at{q}$ implies $\supp{\at{Q}(\at{T}(\at{U}))}{\baseD} \at{q}$. 
        Similarly, $\supp{\at{Q}(\aunit)}{\baseD} \at{q}$ and~\ref{eq:soundness-key-or-assumption-3} imply $\psi \supp{\at{Q}(\at{T}(\cdot))}{\baseD} \at{q}$. Apply the IH to $\supp{\at{S}(\at{P})}{\baseC} \phi \lor \psi$, $\supp{\at{Q}(\at{T}(\cdot))}{\baseD} \at{q}$, and $\psi \supp{\at{Q}(\at{T}(\cdot))}{\baseD} \at{q}$, we have $\phi \supp{\at{Q}(\at{T}( \at{S}(\at{P}) ))}{\baseD} \at{q}$. 
        \item  $\chi = \sigma \to \tau$. We fix some $\baseC \baseGeq \baseB$ and $\at{P}$ such that $\supp{\at{P}}{\baseC} \varDelta$. The goal is to show $\supp{\at{T}(\at{S}(\at{P}))}{\baseC} \sigma \to \tau$; that is, $\sigma \supp{\at{T}(\at{S}(\at{P})) \aacomma (\cdot)}{\baseC} \tau$; that is, for arbitrary $\baseX \baseGeq \baseC$ and $\at{U} \in \Bunches(\At)$, if $\supp{\at{U}}{\baseX} \sigma$, then $\supp{\at{T}(\at{S}(\at{P})) \aacomma \at{U}}{\baseX} \tau$. So fix some $\baseD \baseGeq \baseC$ and $\at{Q} \in \Bunches(\At)$ such that $\supp{\at{Q}}{\baseD} \sigma$, we show that $\supp{\at{T}(\at{S}(\at{P})) \aacomma \at{Q}}{\baseD} \tau$. By the IH, it suffices to show: 
        \begin{align}
            & \supp{\at{S}(\at{P})}{\baseD} \phi \lor \psi \label{lem:soundness-key-or-4} \\
            & \phi \supp{\at{T}(\cdot) \aacomma \at{Q}}{\baseD} \tau \label{lem:soundness-key-or-5} \\
            & \psi \supp{\at{T}(\cdot) \aacomma \at{Q}}{\baseD} \tau \label{lem:soundness-key-or-6} 
        \end{align}
        Here \eqref{lem:soundness-key-or-4} follows immediately from~\ref{eq:soundness-key-or-assumption-1}, $\supp{\at{P}}{\baseC} \varDelta$, and $\baseD \baseGeq \baseC \baseGeq \baseB$. For \eqref{lem:soundness-key-or-5} and \eqref{lem:soundness-key-or-6}, we simply show the former as the reasoning for the latter is similar. Using \eqref{cl:BI-BeS-supp:inf}, \eqref{lem:soundness-key-or-5} means for arbitrary $\baseY \baseGeq \baseD$ and $\at{U} \in \Bunches(\At)$, if $\supp{\at{U}}{\baseY} \phi$, then $\supp{\at{T}(\at{U}) \aacomma \at{Q}}{\baseY} \tau$. So we fix some $\baseE \baseGeq \baseD$ and $\at{R} \in \Bunches(\At)$ such that $\supp{\at{R}}{\baseE} \phi$, and show that $\supp{\at{T}(\at{R}) \aacomma \at{Q}}{\baseE} \tau$. Given $\supp{\at{R}}{\baseE} \phi$, $\phi \supp{\at{T}(\cdot)}{\baseB} \sigma \to \tau$, and $\baseE \baseGeq \baseB$, it follows that $\supp{\at{T}(\at{R})}{\baseE} \sigma \to \tau$; that is, $\sigma \supp{\at{T}(\at{R}) \aacomma (\cdot)}{\baseE} \tau$. This together with $\supp{\at{Q}}{\baseD} \sigma$ imply that $\supp{\at{T}(\at{R}) \aacomma \at{Q}}{\baseE} \tau$. 
        \item $\chi = \sigma \mand \tau$. Fix some $\baseC \baseGeq \baseB$ and $\at{P} \in \Bunches(\At)$ such that $\supp{\at{P}}{\baseC} \varDelta$, the goal is to show that $\supp{\at{T}(\at{S}(\at{P}))}{\baseC} \sigma \mand \tau$. By \eqref{cl:BI-BeS-supp:mand}, this amounts to fix some $\baseD \baseGeq \baseC$, $\at{Q}(\cdot) \in \BunchesWithHole(\At)$, and $\at{q} \in \At$ such that $\sigma \mcomma \tau \supp{\at{Q}(\cdot)}{\baseD} \at{q}$, and show that $\supp{\at{Q}(\at{T}(\at{S}(\at{P})))}{\baseD} \at{q}$. By the base case, it suffices to show that: 
        \begin{align}
            & \supp{\at{S}(\at{P})}{\baseD} \phi \lor \psi \label{lem:soundness-key-or-7} \\
            & \phi \supp{\at{Q}(\at{T}(\cdot))}{\baseD} \at{q} \label{lem:soundness-key-or-8} \\
            & \psi \supp{\at{Q}(\at{T}(\cdot))}{\baseD} \at{q} \label{lem:soundness-key-or-9} 
        \end{align}
        Here \eqref{lem:soundness-key-or-7} follows immediately from assumption~\ref{eq:soundness-key-or-assumption-1} and $\supp{\at{P}}{\baseC} \varDelta$; we only show \eqref{lem:soundness-key-or-9}, and the argument for \eqref{lem:soundness-key-or-8} is similar. 
        So we fix some $\baseE \baseGeq \baseD$ and $\at{R} \in \Bunches(\At)$ such that $\supp{\at{R}}{\baseE} \psi$, and show that $\supp{\at{Q}(\at{T}(\at{R}))}{\baseE} \at{q}$. Given $\supp{\at{R}}{\baseE} \psi$, $\psi \supp{\at{T}(\cdot)}{\baseB} \sigma \mand \tau$, and $\baseE \baseGeq \baseB$, it follows that $\supp{\at{T}(\at{R})}{\baseE} \sigma \mand \tau$. Together with $\sigma \mcomma \tau \supp{\at{Q}(\cdot)}{\baseD} \at{q}$ and $\baseE \baseGeq \baseD$, it follows by \eqref{cl:BI-BeS-supp:mand} that $\supp{\at{Q}( \at{T}(\at{R}) )}{\baseE} \at{q}$. 
        \item $\chi = \mtop$. It follows from the same line of reasoning for $\chi = \top$. 
        \item $\chi = \sigma \wand \tau$. Fix some $\baseC \baseGeq \baseB$ and $\at{P} \in \Bunches(\At)$ satisfying $\supp{\at{P}}{\baseC} \varDelta$, the goal is then to show that $\supp{\at{T}(\at{S}(\at{P}))}{\baseC} \sigma \wand \tau$; in other words, by \eqref{cl:BI-BeS-supp:wand} and \eqref{cl:BI-BeS-supp:inf}, to show that for arbitrary $\baseX \baseGeq \baseC$ and $\at{U}$, if $\supp{\at{U}}{\baseX} \sigma$, then $\supp{\at{T}(\at{S}(\at{P})) \mmcomma \at{U}}{\baseX} \tau$. 
        So let us fix some $\baseD \baseGeq \baseC$ and $\at{Q} \in \Bunches(\At)$ such that $\supp{\at{Q}}{\baseD} \sigma$, and show that $\supp{\at{T}(\at{S}(\at{P})) \mmcomma \at{Q}}{\baseD} \tau$. 
        By the IH, it suffices to show that: 
        \begin{align}
            & \supp{\at{S}(\at{P})}{\baseD} \phi \lor \psi \label{lem:soundness-key-or-10} \\
            & \phi \supp{\at{T}(\cdot) \mmcomma \at{Q}}{\baseD} \tau \label{lem:soundness-key-or-11} \\ 
            & \psi \supp{\at{T}(\cdot) \mmcomma \at{Q}}{\baseD} \tau \label{lem:soundness-key-or-12}
        \end{align}
        As \eqref{lem:soundness-key-or-10} follows immediately from~\ref{eq:soundness-key-or-assumption-1} $\varDelta \supp{\at{S}(\cdot)}{\baseB} \phi \lor \psi$, $\supp{\at{P}}{\baseC} \varDelta$, and $\baseD \baseGeq \baseC \baseGeq \baseB$, we show \eqref{lem:soundness-key-or-11} and \eqref{lem:soundness-key-or-12}. Since the latter two are similar, we only showcase \eqref{lem:soundness-key-or-11}. Towards this, it suffices to fix some $\baseE \baseGeq \baseD$ and $\at{R} \in \Bunches(\At)$ such that $\supp{\at{R}}{\baseE} \phi$, and show that $\supp{\at{T}(\at{R}) \mmcomma \at{Q}}{\baseE} \tau$. 
        It follows from~\ref{eq:soundness-key-or-assumption-2} $\phi \supp{\at{T}(\cdot)}{\baseB} \sigma \wand \tau$, $\supp{\at{R}}{\baseE} \phi$, and $\baseE \baseGeq \baseB$ that $\supp{\at{T}(\at{R})}{\baseE} \sigma \wand \tau$; that is, $\sigma \supp{\at{T}(\at{R}) \mmcomma (\cdot)}{\baseE} \tau$. This together with $\supp{\at{Q}}{\baseD} \sigma$ and $\baseE \baseGeq \baseD$ entail that 
        $\supp{\at{T}(\at{R}) \mmcomma \at{Q}}{\baseE} \tau$.  
        \item $\chi = \sigma \lor \tau$. 
        We fix some $\baseC \baseGeq \baseB$ and $\at{P}$ satisfying $\supp{\at{P}}{\baseC} \varDelta$, and the goal is $\supp{\at{T}(\at{S}(\at{P}))}{\baseC} \sigma \lor \tau$; which according to \eqref{cl:BI-BeS-supp:disjunction} boils down to showing that for arbitrary $\baseX \baseGeq \baseB$, $\at{U}(\cdot)$, $\at{p}$, if $\sigma \supp{\at{U}(\cdot)}{\baseX} \at{p}$ and $\tau \supp{\at{U}(\cdot)}{\baseX} \at{p}$, then $\supp{\at{U}(\at{T}(\at{S}))}{\baseX} \at{p}$. So we fix some $\baseC \baseGeq \baseB$, $\at{W}(\cdot) \in \BunchesWithHole(\At)$, $\at{q} \in \At$ satisfying that $\sigma \supp{\at{W}(\cdot)}{\baseC} \at{q}$ and $\tau \supp{\at{W}(\cdot)}{\baseC} \at{q}$, and show that $\supp{\at{W}(\at{T}(\at{S}))}{\baseC} \at{q}$. By the IH, it suffices to show that: 
        \begin{align}
            & \supp{\at{S}}{\baseC} \phi \lor \psi \label{lem:soundness-key-or-13} \\ 
            & \phi \supp{\at{W}(\at{T}(\cdot))}{\baseC} \at{q} \label{lem:soundness-key-or-14} \\ 
            & \psi \supp{\at{W}(\at{T}(\cdot))}{\baseC} \at{q} \label{lem:soundness-key-or-15} 
        \end{align}
        Again, \eqref{lem:soundness-key-or-13} follows from the combination of \ref{eq:soundness-key-or-assumption-1}, $\baseC \baseGeq \baseB$, and $\supp{P}{\baseC}\Delta$. We now prove \eqref{lem:soundness-key-or-15}, and omit the similar proof for \eqref{lem:soundness-key-or-14}. So we fix some $\baseD \baseGeq \baseC$ and $\at{Q} \in \Bunches(\At)$ such that $\supp{\at{Q}}{\baseD} \psi$, and show that $\supp{\at{W}(\at{T}(\at{Q}))}{\baseD} \at{q}$. It follows from~\ref{eq:soundness-key-or-assumption-3} $\psi \supp{\at{T}(\cdot)}{\baseB} \sigma \lor \tau$, $\supp{\at{Q}}{\baseD} \psi$, and $\baseD \baseGeq \baseB$ that $\supp{\at{T}(\at{Q})}{\baseD} \sigma \lor \tau$. By monotonicity, $\sigma \supp{\at{W}(\cdot)}{\baseC} \at{q}$ and $\tau \supp{\at{W}(\cdot)}{\baseC} \at{q}$ imply $\sigma \supp{\at{W}(\cdot)}{\baseD} \at{q}$ and $\tau \supp{\at{W}(\cdot)}{\baseD} \at{q}$, respectively. Together with $\supp{\at{T}(\at{Q})}{\baseD} \sigma \lor \tau$, it follows by \eqref{cl:BI-BeS-supp:disjunction} that $\supp{\at{W}(\at{T}(\at{Q}))}{\baseD} \at{q}$. 
        \item $\chi = \bot$. The reasoning is similar to that for $\top$. 
    \end{itemize}
    This completes the induction. 
\end{proof}

\section{Completeness} \label{sec:completeness}

\begin{theorem}[Completeness] \label{thm:BI-BeS-completeness}
    If $\varGamma \supp{}{} \phi$, then $\varGamma \proves \phi$.
\end{theorem}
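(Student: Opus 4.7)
My plan is to follow the standard Sandqvist-style completeness argument, adapted to the bunched setting, as was done by Gheorghiu et al.\ for IMLL.

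The first step is to construct a specific ``simulation base'' $\base{N}$ that internalizes the rules of $\system{NBI}$ at the atomic level. To do this, I fix an injection $(-)^\flat \colon \Formulas \to \Atoms$ sending each formula $\phi$ to a fresh atom $\phi^\flat$, and extend it homomorphically to bunches of formulae, yielding $\varGamma^\flat \in \Bunches(\Atoms)$ for every $\varGamma \in \Bunches(\Formulas)$. The base $\base{N}$ contains, for every NBI rule, a corresponding atomic rule on the flattened atoms --- e.g., the $\irn{\mand}$-rule is mirrored by $(\at{P} \seq \phi^\flat, \at{Q} \seq \psi^\flat) \Rightarrow (\at{P} \mmcomma \at{Q} \seq (\phi \mand \psi)^\flat)$, the $\irn{\to}$-rule by $(\at{P} \aacomma \phi^\flat \seq \psi^\flat) \Rightarrow (\at{P} \seq (\phi \to \psi)^\flat)$, and similarly for all the other connectives. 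Disjunction and $\bot$ must be handled with more care, introducing for each atomic consequence $\at{r}$ an elimination-style rule matching the shape of the $\ern{\lor}$- and $\ern{\bot}$-rules of NBI.

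The second step is to prove the \emph{Simulation Lemma}: for any $\varGamma$ and $\phi$, $\varGamma \proves \phi$ iff $\varGamma^\flat \proves[\base{N}] \phi^\flat$. The ``only if'' direction goes by induction on an NBI-derivation, translating each rule application into the corresponding atomic rule in $\base{N}$ (using \myhyperlink{def:derivability-in-a-base:weak}, \myhyperlink{def:derivability-in-a-base:cont}, \myhyperlink{def:derivability-in-a-base:exch}, \myhyperlink{def:derivability-in-a-base:cut} of Definition~\ref{def:derivability-in-a-base} to handle structural reasoning on bunches). The ``if'' direction is routine since every rule of $\base{N}$ mirrors a derivable rule of \BI{}.

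The third --- and hardest --- step is the \emph{Flattening Lemma}: for any extension $\baseX \baseGeq \base{N}$ and any $\at{S} \in \Bunches(\At)$,
\[
\supp{\at{S}}{\baseX} \phi \quad \text{iff} \quad \at{S} \proves[\baseX] \phi^\flat,
\]
and more generally $\supp{\at{S}}{\baseX} \varGamma$ iff $\at{S} \proves[\baseX] \varGamma^\flat$ (interpreting the latter componentwise via bunched derivability). This is proved by induction on $\phi$, where each connective case requires showing both directions using the corresponding introduction and elimination rules of $\base{N}$ together with the appropriate semantic clause from Figure~\ref{fig:suppBI}. The crucial feature is that the introduction-style clauses ($\mand$, $\land$, $\lor$, $\top$, $\mtop$, $\bot$) are stated in elimination form, so the ``if'' direction typically requires instantiating the universal quantifier over bunches-with-hole $\at{U}(\cdot)$ at the identity hole $(\cdot)$ and over $\at{p}$ at the flattened atom, while the ``only if'' direction exploits the elimination rule in $\base{N}$. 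The implicational cases $\to$, $\wand$ reduce straightforwardly to the inductive hypothesis on premiss and conclusion. The main technical burden is tracking how bunch-extension ($\bunchStrongerThan$) and the structural closure conditions (\myhyperlink{def:derivability-in-a-base:weak}, \myhyperlink{def:derivability-in-a-base:cont}, \myhyperlink{def:derivability-in-a-base:exch}) interact with the nested contextual bunches appearing in the support clauses --- this is where the bunched setting diverges from the multiset-based IMLL case and the flat IPL case.

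Finally, completeness follows by chaining these results. Suppose $\varGamma \entails \phi$. Instantiating the universal quantifier in Definition~\ref{def:validity} at $\base{N}$, with resource bunch $\varGamma^\flat$, and combining with Proposition~\ref{prop:support-identity} (which gives $\supp{\varGamma^\flat}{\base{N}} \varGamma$ once we know the Flattening Lemma applied in the ``if'' direction to $\varGamma^\flat \proves[\base{N}] \varGamma^\flat$), yields $\supp{\varGamma^\flat}{\base{N}} \phi$. The Flattening Lemma then gives $\varGamma^\flat \proves[\base{N}] \phi^\flat$, and the Simulation Lemma gives $\varGamma \proves \phi$. The hard part, as noted, will be the Flattening Lemma --- in particular, the disjunction and $\bot$ cases, where one must produce a suitable elimination rule in $\base{N}$ that interacts correctly with the contextual bunches parameterising the support relation.
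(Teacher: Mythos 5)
Your proposal follows essentially the same route as the paper: a bespoke base $\base{N}$ simulating $\system{NBI}$ on flattened atoms, a lemma identifying each (sub)formula $\xi$ with $\xi^\flat$ semantically over extensions of $\base{N}$ (the paper splits your Flattening Lemma into its (AtComp) and (Flat) claims), and a lifting of derivability in $\base{N}$ back to $\system{NBI}$ (the paper's (Nat)). One small point: the paper takes $\at{p}^\flat = \at{p}$ for atomic $\at{p}$ rather than a fresh atom, which is exactly what makes the atomic base case of your Flattening Lemma immediate.
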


We follow the strategy used by Sandqvist~\cite{Sandqvist2015base} for \IPL{} and by the authors for \IMLL{}~\cite{gheorghiu2023imll}, suitably modified for \BI{}. We require to show that  $\varGamma \supp{}{} \phi$ implies that there is an $\system{NBI}$-proof witnessing $\varGamma \proves{} \phi$. To this end, we associate to each sub-formula $\rho$ of $\varGamma \cup \{\phi\}$ a unique atom $\at{r}$ --- denoted $\rho^\flat$ below,  $(-)^\flat$ is a function from formulas to atoms --- and construct a base $\base{N}$ such that $\at{r}$ behaves in $\base{N}$ as $\rho$ behaves in $\system{NBI}$. Moreover, formulae and their atomizations are semantically equivalent in any extension of $\base{N}$ so that support in $\base{N}$ characterizes both validity 
 and provability --- that is $\phi^\flat \supp{(\cdot)}{\baseBI} \phi$ and $\phi \supp{(\cdot)}{\baseBI} \phi^\flat$. When $\rho \in \At$, we take $\at{r} := \rho$, but for complex $\rho$ we choose $r$ to be alien to $\varGamma \cup\{\phi\}$.


We proceed with a formal treatment.  Let $\varXi$ be the set of all sub-formulae of $\varGamma$ and $\phi$. Let $\flatBI{(\cdot)} \colon \varXi \to \At$ be an injection that acts as an identity on $\varXi\cap\At$ (i.e., $\flatBI{\at{p}} = \at{p}$ for $\at{p}\in \varXi \cap \At$). Let  $\deflatBI{(\cdot)}$ be the left-inverse of $\flatBI{(\cdot)}$ --- that is,
\[
\deflatBI{\at{p}}
:=
\begin{cases}
\chi & \mbox{if $\at{p}=\flatBI{\chi}$} \\
\at{p}  & \mbox{otherwise}
\end{cases}
\]
Both extend to bunches pointwise with identity on units; that is, 
\[
\begin{array}{c@{\qquad}c}
\varDelta^\flat :=
\begin{cases}
   \flatBI{\phi} & \mbox{if } \varDelta = \phi \in \Formulas \\
    \aunit & \mbox{if } \varDelta  = \aunit \\
   \munit & \mbox{if } \varDelta  = \munit \\
     \flatBI{\varDelta}_1 \fatsemi  
     \flatBI{\varDelta}_2 & \mbox{if } 
 \varDelta = \varDelta_1 \fatsemi \varDelta_2 \\
    \flatBI{\varDelta}_1 \fatcomma \flatBI{\varDelta}_2 & \mbox{if }  \varDelta = \varDelta_1 \fatcomma \varDelta_2
\end{cases}
&
\deflatBI{ \at{P} } :=
\begin{cases}
   \deflatBI{\at{p}} &  \mbox{if } 
 \at{P} = \at{p} \in \Atoms \\
 \aunit & \mbox{if } \at{P}  = \aunit \\
   \munit & \mbox{if } \at{P}  = \munit\\
    \deflatBI{ \at{P} }_1 \fatsemi \deflatBI{ \at{P} }_2 & \mbox{if }   \at{P}  =  \at{P} _1 \fatsemi  \at{P} _2 \\
    \deflatBI{ \at{P} }_1 \fatcomma \deflatBI{ \at{P} }_2 &  \mbox{if }  \at{P}  =  \at{P}_1 \fatcomma  \at{P}_2
\end{cases}
\end{array}
\]
We may write $\Gamma(\Delta)^\flat$ and  $P(Q)^\sharp$ to denote $(\Gamma(\Delta))^\flat$ and  $(P(Q))^\sharp$, respectively. 

\begin{figure}[t]
\hrule \vspace{2mm}
        \[
        \begin{array}{c}
        \infer[\ref{baseN:id}]{\flatBI{\sigma} \seq \flatBI{\sigma}}{ } 
        \quad 
        \infer[\ref{baseN:exch}]{\flatBI{\varDelta} \seq \flatBI{\sigma}}{\flatBI{\varDelta'} \seq \flatBI{\sigma}} 
        \quad 
        \infer[\ref{baseN:weak}]{\flatBI{\varDelta( \varLambda \addcomma \varLambda')} \seq \flatBI{\sigma}}{\flatBI{\varDelta(\varLambda)} \seq \flatBI{\sigma}} 
        \quad 
        \infer[\ref{baseN:cont}]{\flatBI{\varDelta(\varLambda)} \seq \flatBI{\sigma}}{\flatBI{\varDelta(\varLambda \addcomma \varLambda)} \seq \flatBI{\sigma}} 
        \\[1.5ex]
        \infer[\ref{baseN:mtop-intro}]{\munit \seq \flatBI{\mtop}}{} 
        \quad
        \infer[\ref{baseN:mtop-elim}]{\flatBI{\varDelta(\varLambda)} \seq \at{p}}{\flatBI{\varDelta(\munit)} \seq \at{p} & \flatBI{\varLambda} \seq \flatBI{\mtop}} 
        \quad 
        \infer[\ref{baseN:wand-intro}]{\flatBI{\varDelta} \seq \flatBI{(\sigma \wand \tau)}}{\flatBI{\varDelta} \mcomma \flatBI{\sigma} \seq \flatBI{\tau}}
        \\[1.5ex]
        \infer[\ref{baseN:wand-elim}]{\flatBI{\varDelta} \multcomma \flatBI{\varLambda} \seq \flatBI{\tau}}{\flatBI{\varDelta} \seq \flatBI{(\sigma \wand \tau)} & \flatBI{\varLambda} \seq \flatBI{\sigma}} 
        \quad 
        \infer[\ref{baseN:mand-intro}]{\flatBI{\varDelta} \multcomma \flatBI{\varLambda} \seq \flatBI{(\sigma \mand \tau)}}{\flatBI{\varDelta} \seq \flatBI{\sigma} & \flatBI{\varLambda} \seq \flatBI{\tau}} 
        \quad 
        \infer[\ref{baseN:mand-elim}]{\flatBI{\varDelta(\varLambda)} \seq \at{p}}{\flatBI{\varDelta(\sigma \multcomma \tau)} \seq \at{p} & \flatBI{\varLambda} \seq \flatBI{(\sigma \mand \tau)}}
        \\[1.5ex]
        \infer[\ref{baseN:atop-intro}]{\aunit \seq \flatBI{\top}}{} 
        \quad
        \infer[\ref{baseN:atop-elim}]{\flatBI{\varDelta(\varLambda)} \seq \at{p}}{\flatBI{\varDelta(\aunit)} \seq \at{p} & \flatBI{\varLambda} \seq \flatBI{\top}} 
        \quad 
        \infer[\ref{baseN:to-intro}]{\flatBI{\varDelta} \seq \flatBI{(\sigma \to \tau)}}{\flatBI{(\varDelta \acomma \sigma)} \seq \flatBI{\tau}} 
        \\[1.5ex]
        \infer[\ref{baseN:to-elim}]{\flatBI{\varDelta} \acomma \flatBI{\varLambda} \seq \flatBI{\tau}}{\flatBI{\varDelta} \seq \flatBI{(\sigma \to \tau)} & \flatBI{\varLambda} \seq \flatBI{\sigma}} 
        \quad 
        \infer[\ref{baseN:land-intro}]{\flatBI{\varDelta} \acomma \flatBI{\varLambda} \seq \flatBI{(\sigma \land \tau)}}{\flatBI{\varDelta} \seq \flatBI{\sigma} & \flatBI{\varLambda} \seq \flatBI{\tau}} 
        \quad 
        \infer[\ref{baseN:land-elim}]{\flatBI{\varDelta(\varLambda)} \seq \at{p}}{\flatBI{\varDelta(\sigma \acomma \tau)} \seq \at{p} & \flatBI{\varLambda} \seq \flatBI{(\sigma \land \tau)}}
        \\[1.5ex]
        \infer[\ref{baseN:lor-intro}]{\flatBI{\varDelta} \seq \flatBI{(\sigma_1 \lor \sigma_2)}}{\flatBI{\varDelta} \seq \flatBI{\sigma}_i}
        \quad 
        \infer[\ref{baseN:lor-elim}]{\flatBI{\varDelta(\varLambda)} \seq \at{p}}{\flatBI{\varDelta(\sigma)} \seq \at{p} & \flatBI{\varDelta(\tau)} \seq \at{p} & \flatBI{\varLambda} \seq \flatBI{(\sigma \lor \tau)}} 
        \quad 
        \infer[\ref{baseN:bot-elim}]{\flatBI{\varDelta} \seq \at{p}}{\flatBI{\varDelta} \seq \flatBI{\bot}} 
        \end{array}
        \vspace{2mm}
        \]
\hrule
	\caption{The Natural Base $\baseBI$}
	\label{fig:baseN} 
 \end{figure}

\labelandtag{baseN:id}{\flatBI{\rn{id}}}
\labelandtag{baseN:exch}{\flatBI{\exch}}
\labelandtag{baseN:weak}{\flatBI{\weak}}
\labelandtag{baseN:cont}{\flatBI{\cont}} \labelandtag{baseN:mtop-intro}{\flatBI{\irn{\mtop}}}
\labelandtag{baseN:mtop-elim}{\flatBI{\ern{\mtop}}}
\labelandtag{baseN:wand-intro}{\flatBI{\irn{\wand}}}
\labelandtag{baseN:wand-elim}{\flatBI{\ern{\wand}}}
\labelandtag{baseN:mand-intro}{\flatBI{\irn{\mand}}}
\labelandtag{baseN:mand-elim}{\flatBI{\ern{\mand}}}
\labelandtag{baseN:atop-intro}{\flatBI{\irn{\atop}}}
\labelandtag{baseN:atop-elim}{\flatBI{\ern{\atop}}}
\labelandtag{baseN:to-intro}{\flatBI{\irn{\to}}}
\labelandtag{baseN:to-elim}{\flatBI{\ern{\to}}}
\labelandtag{baseN:land-intro}{\flatBI{\irn{\land}}}
\labelandtag{baseN:land-elim}{\flatBI{\ern{\land}}}
\labelandtag{baseN:lor-intro}{\flatBI{\irn{\lor}}}
\labelandtag{baseN:lor-elim}{\flatBI{\ern{\lor}}}
\labelandtag{baseN:bot-elim}{\flatBI{\ern{\bot}}}

We construct a base $\baseBI$ such that $\flatBI{\sigma}$ behaves in $\baseBI$ as $\sigma$ behaves in $\calculusBI$. The base $\baseBI$ contains all instances of the rules of Figure~\ref{fig:baseN}, which merits comparison with Figure~\ref{fig:NBI}, for all $\at{p} \in \At$, $\sigma, \sigma_1,\sigma_2, \tau \in \varXi$, and $\varDelta, \varDelta',\varLambda,\varLambda' \in \Bunches{(\varXi)}$ such that $\varDelta \equiv \varDelta'$. Note, we need not include $\ref{baseN:id}$, $\ref{baseN:exch}$, $\ref{baseN:weak}$, and $\ref{baseN:cont}$ because of \myhyperlink{def:derivability-in-a-base:taut},  \myhyperlink{def:derivability-in-a-base:exch},  \myhyperlink{def:derivability-in-a-base:weak},  and \myhyperlink{def:derivability-in-a-base:cont}, respectively. Theorem~\ref{thm:BI-BeS-completeness} (Completeness) follows from the following three properties: 
\begin{itemize}[label=--]
    \item (AtComp). Let $\at{S}(\cdot) \in \BunchesWithHole(\At)$, $\at{p} \in \At$, and $\base{B}$ be a base: 
    $\at{P} \supp{\at{S(\cdot)}}{\base{B}} \at{p}  \qquad \text{iff} \qquad \at{S(P)} \proves[\base{B}] \at{p}. 
    $
    \item (Flat).  For any sub-formula $\xi \in \varXi$ and $\base{N}' \supseteq \base{N}$: $\supp{\at{S}}{\base{N}'} \xi^\flat \qquad \text{iff} \qquad  \supp{\at{S}}{\base{N}'} \xi
    $. 
    \item (Nat). Let $S \in \Bunches(\At)$ and $\at{p} \in \At$: 
    if $\at{S} \proves[\base{N}] \at{p} \text{, then } \at{S}^\natural \proves[\system{NBI}] \at{p}^\natural$.
\end{itemize}

\labelandtag{lem:atcomp}{\textbf{AtComp}}
\labelandtag{lem:flatequivalence}{\textbf{Flat}}
\labelandtag{lem:sharpening}{\textbf{Nat}}

The first claim (i.e.,~\ref{lem:atcomp}) follows from~\myhyperlink{def:derivability-in-a-base:cut} in the definition of derivability in a base (Definition~\ref{def:derivability-in-a-base}) and the ground case of completeness. It is proved by induction on support according to the structure of $\at{P}$. 

The second claim (i.e.,~\ref{lem:flatequivalence}) formalizes the idea that every formula $\xi$ appearing in $\varGamma \cup \{\phi\}$ behaves the same as $\flatBI{\xi}$ in any base extending $\baseBI$. Consequently, 
$\flatBI{\varGamma} \supp{}{\baseBI'} \flatBI{\phi}$ iff $\varGamma \supp{}{\baseBI'} \phi$ for any $\baseBI' \supseteq \baseBI$.

The third claim (i.e.,~\ref{lem:sharpening}) intuitively says that $\baseBI$ is a faithful atomic encoding of $\system{NBI}$, witnessed by $\deflatBI{(\cdot)}$. This together with \eqref{lem:atcomp} guarantees that every $\xi \in \varXi$ behaves in $\baseBI$ as $\flatBI{\xi}$ in $\baseBI$, thus as $\deflatBI{\left( \flatBI{\xi} \right)} = \xi$ in $\system{NBI}$. It is proved by induction on derivability in a base.

\begin{proof}[Proof of Theorem~\ref{thm:BI-BeS-completeness} --- Completeness] 
Assume $\varGamma \supp{}{} \phi$ and let $\baseN$ be the bespoke base for $\varGamma \seq \phi$ defined as in Figure~\ref{fig:baseN}. By \eqref{lem:flatequivalence}, $\varGamma^\flat \supp{(\cdot)}{\baseN} \phi^\flat$. Therefore, by \eqref{lem:atcomp}, $\flatBI{\varGamma} \proves[\baseN] \flatBI{\phi}$. Finally, by \eqref{lem:sharpening}, $\deflatBI{\left( \flatBI{\varGamma} \right)} \proves[\system{NBI}] \deflatBI{\left( \flatBI{\phi} \right)}$ --- that is, $\varGamma \proves[\system{NBI}] \phi$. 
\end{proof} 

It remains only to prove the ancillary lemmas.

\begin{lemma}[\ref{lem:atcomp}]
    Let $\at{S}(\cdot) \in \BunchesWithHole(\At)$ and $\at{p} \in \At$ and let $\base{B}$ be a base: 
    \[ \at{P} \supp{\at{S}(\cdot)}{\base{B}} \at{p}  \qquad \qquad \text{iff} \qquad \qquad \at{S}(\at{P}) \proves[\base{B}] \at{p} \] 
\end{lemma}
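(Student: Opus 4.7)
The plan is to prove both directions by induction on the structure of $\at{P}$, leveraging Proposition~\ref{prop:support-identity} (support identity), Proposition~\ref{prop:der-monotone-on-bases} (monotonicity of derivability), and the structural closure conditions \myhyperlink{def:derivability-in-a-base:weak}, \myhyperlink{def:derivability-in-a-base:cont}, \myhyperlink{def:derivability-in-a-base:exch}, \myhyperlink{def:derivability-in-a-base:cut} from Definition~\ref{def:derivability-in-a-base}. The forward direction is uniform and bypasses the induction: if $\at{P} \supp{\at{S}(\cdot)}{\baseB} \at{p}$, then instantiating \eqref{cl:BI-BeS-supp:inf} with $\baseX := \baseB$ and $\at{U} := \at{P}$, and invoking Proposition~\ref{prop:support-identity} to secure $\supp{\at{P}}{\baseB} \at{P}$, yields $\supp{\at{S}(\at{P})}{\baseB} \at{p}$, which by \eqref{cl:BI-BeS-supp:At} is precisely $\at{S}(\at{P}) \proves[\baseB] \at{p}$.

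For the backward direction, unfolding \eqref{cl:BI-BeS-supp:inf} and \eqref{cl:BI-BeS-supp:At} reduces the goal to: given $\at{S}(\at{P}) \proves[\baseB] \at{p}$, any $\baseX \baseGeq \baseB$, and any $\at{U}$ with $\supp{\at{U}}{\baseX} \at{P}$, prove $\at{S}(\at{U}) \proves[\baseX] \at{p}$; Proposition~\ref{prop:der-monotone-on-bases} lifts the hypothesis to $\at{S}(\at{P}) \proves[\baseX] \at{p}$. The induction then splits on $\at{P}$. When $\at{P} \in \At$, $\supp{\at{U}}{\baseX} \at{P}$ unfolds to $\at{U} \proves[\baseX] \at{P}$, and \myhyperlink{def:derivability-in-a-base:cut} applied to this and $\at{S}(\at{P}) \proves[\baseX] \at{p}$ delivers $\at{S}(\at{U}) \proves[\baseX] \at{p}$. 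When $\at{P} = \aunit$ the condition is vacuous by \eqref{cl:BI-BeS-supp:aunit}, so \myhyperlink{def:derivability-in-a-base:weak} with $\at{Q}' := \at{U}$ yields $\at{S}(\aunit \acomma \at{U}) \proves[\baseX] \at{p}$, and $\aunit \acomma \at{U} \equiv \at{U}$ followed by \myhyperlink{def:derivability-in-a-base:exch} concludes. When $\at{P} = \munit$ the condition unfolds to $\at{U} \bunchStrongerThan \munit$, which by Definition~\ref{def:bunch-stronger-than} witnesses $\at{U}$ as built from $\munit$ by iterated additive-weakening substitutions; iterated \myhyperlink{def:derivability-in-a-base:weak} followed by \myhyperlink{def:derivability-in-a-base:exch} converts $\at{S}(\munit)$ into $\at{S}(\at{U})$.

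For the compound cases, the induction hypothesis is applied twice through successively refined contextual bunches. For $\at{P} = \at{P}_1 \acomma \at{P}_2$, unfolding \eqref{cl:BI-BeS-supp:acomma} gives witnesses $\at{Q}_1, \at{Q}_2$ with $\at{U} \bunchStrongerThan \at{Q}_i$ and $\supp{\at{Q}_i}{\baseX} \at{P}_i$; applying the IH to $\at{P}_1$ with the contextual bunch $\at{S}(\cdot \acomma \at{P}_2)$ transforms $\at{S}(\at{P}_1 \acomma \at{P}_2) \proves[\baseX] \at{p}$ into $\at{S}(\at{Q}_1 \acomma \at{P}_2) \proves[\baseX] \at{p}$, and then the IH on $\at{P}_2$ with $\at{S}(\at{Q}_1 \acomma \cdot)$ gives $\at{S}(\at{Q}_1 \acomma \at{Q}_2) \proves[\baseX] \at{p}$; iterating \myhyperlink{def:derivability-in-a-base:weak} on each $\at{Q}_i$ using the witnessing substitutions of $\at{U} \bunchStrongerThan \at{Q}_i$ reaches $\at{S}(\at{U} \acomma \at{U}) \proves[\baseX] \at{p}$, and a final \myhyperlink{def:derivability-in-a-base:cont} gives $\at{S}(\at{U}) \proves[\baseX] \at{p}$. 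The case $\at{P} = \at{P}_1 \mcomma \at{P}_2$ proceeds identically but with $\mmcomma$ replacing $\aacomma$ in the contextual bunches and with the single joint extension $\at{U} \bunchStrongerThan \at{Q}_1 \mcomma \at{Q}_2$ from \eqref{cl:BI-BeS-supp:mcomma}, so the concluding step is iterated weakening and exchange without any contraction.

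The main obstacle is the additive bunch case: because \eqref{cl:BI-BeS-supp:acomma} witnesses $\at{U}$ as an independent extension of each of $\at{Q}_1$ and $\at{Q}_2$ rather than as an extension of the pair $\at{Q}_1 \acomma \at{Q}_2$, the derivation reconstructed by the IH lives in the context $\at{S}(\at{Q}_1 \acomma \at{Q}_2)$ and the two $\at{Q}_i$ must be pumped up to $\at{U}$ independently, introducing a duplicate $\at{U} \acomma \at{U}$ that only an explicit \myhyperlink{def:derivability-in-a-base:cont} can dissolve. This is exactly the substructural interaction between the semantic bunch-extension $\bunchStrongerThan$ and the syntactic structural rules on derivability that the sequent-style formulation of bases (with explicit structural closures) was designed to accommodate.
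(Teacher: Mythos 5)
Your proposal is correct and follows essentially the same route as the paper: the forward direction via \eqref{cl:BI-BeS-supp:inf} instantiated at $\baseB$ and $\at{P}$ together with Proposition~\ref{prop:support-identity}, and the backward direction by induction on $\at{P}$ using \myhyperlink{def:derivability-in-a-base:cut} for atoms, \myhyperlink{def:derivability-in-a-base:weak}/\myhyperlink{def:derivability-in-a-base:exch} for the units, and a double application of the IH through refined contextual bunches followed by \myhyperlink{def:derivability-in-a-base:cont} in the additive case. The only cosmetic difference is that in the $\aacomma$ case the paper first lifts $\supp{\at{Q}_i}{\baseC}\at{P}_i$ to the common resource via semantic monotonicity (Lemma~\ref{lem:supp-bunch-monotone-on-resource}) before invoking the IH, whereas you invoke the IH at the witnesses $\at{Q}_1,\at{Q}_2$ and do the lifting syntactically with iterated \myhyperlink{def:derivability-in-a-base:weak} afterwards; both land on the same contraction step.
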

\begin{proof}
For the `only if' direction, assume $\at{P} \supp{\at{S}(\cdot)}{\baseB} \at{p}$. By \eqref{cl:BI-BeS-supp:inf}, $\forall \baseX \baseGeq \baseB$ and $\forall \at{U} \in \Bunches(\At)$, if $\supp{\at{U}}{\baseX} \at{P}$, then $\supp{\at{S}(\at{U})}{\baseX} \at{p}$. Set $\baseX = \baseB$ and $\at{U} = \at{P}$.  Since, $\supp{\at{S}(\at{P})}{\baseB} \at{p}$ implies $\at{S}(\at{P}) \proves[\baseB] \at{p}$, it remains only to show $\supp{\at{P}}{\baseB} \at{P}$. This follows from Proposition~\ref{prop:support-identity}

For the `if' direction, we proceed by induction on the structure of $\at{P}$. Throughout we assume $\at{S}(\at{P}) \proves[\baseB] \at{p}$, which we call the \emph{first assumption}. By \eqref{cl:BI-BeS-supp:inf}, the desired conclusion is equivalent to the following: 
\[
\mbox{$\forall \baseC \baseGeq \baseB$ and $\forall \at{Q} \in \Bunches(\At)$, if $\supp{\at{Q}}{\baseC} \at{P}$, then $\supp{\at{S}(\at{Q})}{\baseC} \at{p}$.}
\]
 Therefore, let $\baseC \baseGeq \baseB$ and $ \at{Q} \in \Bunches(\At)$ be arbitrary such that $\supp{\at{Q}}{\baseC} \at{P}$, which we call the \emph{second} assumption. We require to establish $\supp{\at{S}(\at{Q})}{\baseC} \at{p}$ --- that is, $\at{S}(\at{Q}) \proves[\baseC] \at{p}$. 
 \begin{itemize}[label=--]
     \item $\at{P}$ is in $\Atoms$. Trivial by the fact that derivability in a base is monotonic with respect to base-extensions (Proposition~\ref{prop:monotonicity}) and~\myhyperlink{def:derivability-in-a-base:cut} in Definition~\ref{def:derivability-in-a-base}. 
     \item $\at{P}$ is $\aunit$. By the monotonicity of derivability in a base on the first assumption,  $\at{S}(\aunit) \proves[\baseC] \at{p}$. By~\myhyperlink{def:derivability-in-a-base:weak},  $\at{S}(\aunit \acomma \at{Q}) \proves[\baseC] \at{p}$. Hence (by~\myhyperlink{def:derivability-in-a-base:exch}), $\at{S}(\at{Q}) \proves[\baseC] \at{p}$.
     \item $\at{P}$ is $\munit$.  By the monotonicity of derivability in a base on the first assumption,  $\at{S}(\munit) \proves[\baseC] \at{p}$. By \eqref{cl:BI-BeS-supp:munit} on the second assumption, $\at{Q} \bunchStrongerThan \munit$. In other words, $\at{Q} \equiv \munit \acomma \at{R}$ for some $\at{R} \in \Bunches(\Atoms)$. Therefore, $\at{S}(\at{Q}) \proves[\baseC] \at{p}$ obtains by~\myhyperlink{def:derivability-in-a-base:weak}. 
     \item $\at{P}$ is $\at{P}_1 \acomma \at{P}_2$.  By the monotonicity of derivability in a base on the first  assumption,  $\at{S}(\at{P}_1 \acomma \at{P}_2) \proves[\baseC] \at{p}$. By \eqref{cl:BI-BeS-supp:acomma} on the second assumption, $\supp{\at{Q}}{\baseC} \at{P}_1$ and $\supp{\at{Q}}{\baseC} \at{P}_2$, since $\at{Q} \bunchStrongerThan \at{Q}$. Hence, by the induction hypothesis (IH), $\at{S}(\at{Q} \acomma \at{Q}) \proves[\baseC] \at{p}$. Whence, by~\myhyperlink{def:derivability-in-a-base:cont}, $\at{S}(\at{Q}) \proves[\baseC] \at{p}$.
     \item $\at{P}$ is $\at{P}_1 \mcomma \at{P}_2$. By the monotonicity of derivability in a base on the first assumption,  $\at{S}(\at{P}_1 \acomma \at{P}_2) \proves[\baseC] \at{p}$. By \eqref{cl:BI-BeS-supp:mcomma} on the second assumption, there are $\at{Q}_1$ and $\at{Q}_2$ such that $\at{Q} \bunchStrongerThan (\at{Q}_1 \mcomma \at{Q}_2)$,  $\supp{\at{Q}_1}{\baseC} \at{P}_1$, $\supp{\at{Q}_2}{\baseC} \at{P}_2$. Hence, by the IH twice applied, $\at{S}(\at{Q}_1 \mcomma \at{Q}_2) \proves[\baseC] \at{p}$. Whence, by~\myhyperlink{def:derivability-in-a-base:weak}, $\at{S}(\at{Q}) \proves[\baseC] \at{p}$.
 \end{itemize}
This completes the induction.
\end{proof}
Next, we turn to \eqref{lem:flatequivalence}. For this, the crucial observation is that for arbitrary sub-formula $\xi$ of $\varGamma$ or $\varphi$, $\flatBI{\xi}$ behaves the same in $\proves[\baseN]$ as $\xi$ in $\supp{}{\baseN}$. 
\begin{lemma}
\label{prop:flat-connective}
    The following holds for arbitrary $\baseM \baseGeq \baseN$, and sub-formula $\xi$ of $\varGamma$ or $\varphi$:
    \begin{enumerate}
        \item\label{item:flat-connective-mand} $\at{S} \proves[\baseM] \flatBI{(\sigma \mand \tau)}$ iff for arbitrary $\baseX \baseGeq \baseM$, $\at{U}(\cdot) \in \BunchesWithHole(\At)$, $\at{p} \in \At$, if $\at{U}(\flatBI{\sigma} \mcomma \flatBI{\tau}) \proves[\baseX] \at{p}$, then $\at{U(S)} \proves[\baseX] \at{p}$. 
        \item\label{item:flat-connective-mto} $\at{S} \proves[\baseM] \flatBI{(\sigma \mto \tau)}$ iff $\at{S} \mcomma \flatBI{\sigma} \proves[\baseM] \flatBI{\tau}$. 
        \item\label{item:flat-connective-mtop} $\at{S} \proves[\baseM] \flatBI{\mtop}$ iff for arbitrary $\baseX \baseGeq \baseM$, $\at{U}(\cdot) \in \BunchesWithHole(\At)$, $\at{p} \in \At$, if $\at{U}(\munit) \proves[\baseX] \at{p}$, then $\at{U(S)} \proves[\baseX] \at{p}$. 
        \item\label{item:flat-connective-land} $\at{S} \proves[\baseM] \flatBI{(\sigma \land \tau)}$ iff for arbitrary $\baseX \baseGeq \baseM$, $\at{U}(\cdot) \in \BunchesWithHole(\At)$, $\at{p} \in \At$, if $\at{U}(\flatBI{\sigma} \acomma \flatBI{\tau}) \proves[\baseX] \at{p}$, then $\at{U(S)} \proves[\baseX] \at{p}$. 
        \item\label{item:flat-connective-to} $\at{S} \proves[\baseM] \flatBI{(\sigma \to \tau)}$ iff $\at{S} \acomma \flatBI{\sigma} \proves[\baseM] \flatBI{\tau}$. 
        \item\label{item:flat-connective-top} $\at{S} \proves[\baseM] \flatBI{\top}$ iff for arbitrary $\baseX \baseGeq \baseM$, $\at{U}(\cdot) \in \BunchesWithHole(\At)$, $\at{p} \in \At$, if $\at{U}(\aunit) \proves[\baseX] \at{p}$, then $\at{U(S)} \proves[\baseX] \at{p}$. 
        \item\label{item:flat-connective-lor} $\at{S} \proves[\baseM] \flatBI{(\sigma \lor \tau)}$ iff for arbitrary $\baseX \baseGeq \baseM$, $\at{U}(\cdot) \in \BunchesWithHole(\At)$, $\at{p} \in \At$, if $\at{U}(\flatBI{\sigma}) \proves[\baseX] \at{p}$ and $\at{U}(\flatBI{\tau}) \proves[\baseX] \at{p}$, then $\at{U(S)} \proves[\baseX] \at{p}$. 
        \item\label{item:flat-connective-bot} $\at{S} \proves[\baseM] \flatBI{\bot}$ iff for arbitrary $U(\cdot) \in \BunchesWithHole(\At)$, $\at{p} \in \At$, $\at{U(S)} \proves[\baseM] \at{p}$. 
    \end{enumerate}
\end{lemma}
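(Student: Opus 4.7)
The plan is to prove each of the eight \emph{iff} statements by a uniform two-step argument: the forward direction uses the atomic elimination rule $\flatBI{\ern{\ast}}$ in $\baseM$ (which is available because $\baseM \baseGeq \baseN$ and by Proposition~\ref{prop:der-monotone-on-bases}), while the backward direction uses the atomic introduction rule $\flatBI{\irn{\ast}}$ together with the tautologies $\flatBI{\sigma} \proves[\baseM] \flatBI{\sigma}$, $\flatBI{\tau} \proves[\baseM] \flatBI{\tau}$ supplied by~\myhyperlink{def:derivability-in-a-base:taut}. In every case the shape of the atomic rule mirrors exactly the shape of the corresponding support clause of Figure~\ref{fig:suppBI}, which is why the encoding works.

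For the implication-style items~\ref{item:flat-connective-mto} and~\ref{item:flat-connective-to}, each direction is one step. The forward direction of~\ref{item:flat-connective-mto}, say, applies $\flatBI{\ern{\wand}}$ to $\at{S} \proves[\baseM] \flatBI{(\sigma \wand \tau)}$ together with $\flatBI{\sigma} \proves[\baseM] \flatBI{\sigma}$ to obtain $\at{S} \mmcomma \flatBI{\sigma} \proves[\baseM] \flatBI{\tau}$; the backward direction is a single application of $\flatBI{\irn{\wand}}$. For the conjunction-, unit-, and disjunction-style items~\ref{item:flat-connective-mand},~\ref{item:flat-connective-mtop},~\ref{item:flat-connective-land},~\ref{item:flat-connective-top}, and~\ref{item:flat-connective-lor}, the forward direction fixes the arbitrary $\baseX \baseGeq \baseM$, $\at{U}(\cdot)$, $\at{p}$ together with the stated premise on $\at{U}(\cdot)$, and then applies the corresponding elimination rule, now schematically $\flatBI{\ern{\ast}}$ at context $\at{U}(\cdot)$, to rewrite the distinguished sub-bunch to $\at{S}$, yielding $\at{U}(\at{S}) \proves[\baseX] \at{p}$. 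The backward direction instantiates $\baseX := \baseM$, $\at{U}(\cdot) := (\cdot)$, and $\at{p} := \flatBI{\xi}$ in the hypothesis; the missing premise is then provided by the atomic introduction rule applied to the relevant tautologies (for item~\ref{item:flat-connective-mand}, by $\flatBI{\irn{\mand}}$ on $\flatBI{\sigma} \proves[\baseM] \flatBI{\sigma}$ and $\flatBI{\tau} \proves[\baseM] \flatBI{\tau}$ to obtain $\flatBI{\sigma} \mmcomma \flatBI{\tau} \proves[\baseM] \flatBI{(\sigma \mand \tau)}$; for item~\ref{item:flat-connective-mtop}, by $\flatBI{\irn{\mtop}}$; analogously for the additive cases and for $\lor$, where either disjunct suffices).

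The $\bot$-case (item~\ref{item:flat-connective-bot}) is the main obstacle, because there is no atomic introduction rule for $\bot$ to absorb an arbitrary contextual bunch $\at{U}(\cdot)$. The backward direction is immediate: take $\at{U}(\cdot) := (\cdot)$ and $\at{p} := \flatBI{\bot}$ to read off $\at{S} \proves[\baseM] \flatBI{\bot}$. For the forward direction, I would start from \textsc{taut} $\flatBI{\bot} \proves[\baseM] \flatBI{\bot}$, apply $\flatBI{\ern{\bot}}$ to obtain $\flatBI{\bot} \proves[\baseM] \at{p}$ for any $\at{p}$, and then use~\myhyperlink{def:derivability-in-a-base:cut} on $\at{S} \proves[\baseM] \flatBI{\bot}$ inside the hole of $\at{U}(\cdot)$, reducing the problem to deriving $\at{U}(\flatBI{\bot}) \proves[\baseM] \at{p}$. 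The latter is then handled by a short induction on $\holeDepth{\at{U}(\cdot)}$ using the closure conditions~\myhyperlink{def:derivability-in-a-base:cut},~\myhyperlink{def:derivability-in-a-base:weak}, and~\myhyperlink{def:derivability-in-a-base:exch} of Definition~\ref{def:derivability-in-a-base} to transport $\flatBI{\bot} \proves[\baseM] \at{p}$ into the required contextual position.

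Overall, the plan requires no new induction on formula structure; each item is settled directly by the corresponding atomic rule in $\baseN$, with the $\bot$-case needing a small structural induction on the hole-depth to cope with the absence of an introduction rule. The key enabling facts are monotonicity of derivability in a base (Proposition~\ref{prop:der-monotone-on-bases}) and the availability of the structural closure rules on $\proves[\baseM]$ from Definition~\ref{def:derivability-in-a-base}.
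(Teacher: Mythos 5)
For items 1--7 your plan coincides with the paper's proof: the left-to-right direction of each equivalence is obtained by applying the corresponding atomic elimination rule of $\baseN$ (available in any $\baseX \baseGeq \baseM \baseGeq \baseN$ by Proposition~\ref{prop:der-monotone-on-bases}), and the right-to-left direction by instantiating the universally quantified right-hand side at $\baseX := \baseM$, $\at{U}(\cdot) := (\cdot)$, $\at{p} := \flatBI{\xi}$ and discharging the resulting premiss with the atomic introduction rule together with~\myhyperlink{def:derivability-in-a-base:taut}. This is exactly the argument the paper gives for its representative cases.

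The one place where you go beyond the paper is the left-to-right direction of item~\ref{item:flat-connective-bot}, which the paper silently omits, and here your argument has a genuine gap. You reduce the problem to showing $\at{U}(\flatBI{\bot}) \proves[\baseM] \at{p}$ for an \emph{arbitrary} $\at{U}(\cdot) \in \BunchesWithHole(\At)$, to be established by induction on $\holeDepth{\at{U}(\cdot)}$ using~\myhyperlink{def:derivability-in-a-base:cut},~\myhyperlink{def:derivability-in-a-base:weak}, and~\myhyperlink{def:derivability-in-a-base:exch}. The inductive step succeeds when the outermost context-former is $\aacomma$ (additive weakening does the work), but fails when it is $\mmcomma$: from $\at{U}'(\flatBI{\bot}) \proves[\baseM] \at{p}$ none of the closure conditions of Definition~\ref{def:derivability-in-a-base} lets you pass to $\at{U}'(\flatBI{\bot}) \mmcomma \at{Q} \proves[\baseM] \at{p}$, since weakening in a base is additive only and $\flatBI{\ern{\bot}}$ has no contextual form (its conclusion has the same antecedent as its premiss). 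In $\system{NBI}$ the analogous step is made via a $\wand$-detour (derive $\bot \seq \at{q} \wand \bot$ and then apply $\ern{\wand}$), but that detour is unavailable in $\baseN$ because $\at{q} \wand \bot$ need not belong to $\varXi$. So the multiplicative case of your induction needs a different idea; as written, this step would fail.
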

\begin{proof}
    We fix some $\baseM \baseGeq \baseN$ and $\at{S}$. In each case, the two directions of the ``if and only if'' statement follows from the corresponding introduction rule and elimination rule in $\baseN$, respectively. We only demonstrate some representative cases, the others being similar:
    \begin{enumerate}
        \item[4.] We consider the two directions separately. 
        \begin{itemize}[label=--]
            \item Left to Right. Given $\at{S} \proves[\baseM] \flatBI{(\sigma \land \tau)}$, we fix some $\baseO \baseGeq \baseM$, $\at{T(\cdot)} \in \BunchesWithHole(\At)$, and $\at{q} \in \At$ such that $\at{T}(\flatBI{\sigma} \acomma \flatBI{\tau}) \proves[\baseO] \at{q}$. The goal is to show $\at{T(S)} \proves[\baseO] \at{q}$. Note that the rule $\Bigl( \at{S} \seq \flatBI{(\sigma \land \tau)}, \at{T}(\flatBI{\sigma} \acomma \flatBI{\tau}) \seq \at{q} \Bigr) \Rightarrow \at{T(S)} \seq \at{q}$ is in $\baseN$, thus in $\baseO$, so applying~\myhyperlink{def:derivability-in-a-base:rule}, we have $\at{T(S)} \proves[\baseO] \at{q}$. 
            \item Right to Left. Assume the RHS, it follows in particular that if $\flatBI{\sigma} \acomma \flatBI{\tau} \proves[\baseM] \flatBI{(\sigma \land \tau)}$, then $\at{S} \proves[\baseM] \flatBI{(\sigma \land \tau)}$. Here $\flatBI{\sigma} \acomma \flatBI{\tau} \proves[\baseM] \flatBI{(\sigma \land \tau)}$ follows from $\flatBI{\sigma} \proves[\baseM] \flatBI{\sigma}$, $\flatBI{\tau} \proves[\baseM] \flatBI{\tau}$, and that $\flatBI{\sigma} \seq \flatBI{\sigma}, \flatBI{\tau} \seq \flatBI{\tau} \Rightarrow \flatBI{\sigma} \acomma \flatBI{\tau} \seq \flatBI{(\sigma \land \tau)}$ is a rule in $\baseN$ (thus in $\baseM$). 
        \end{itemize}
        \item[6.] We consider the two directions separately. 
        \begin{itemize}[label=--]
            \item Left to Right. Given $\at{S} \proves[\baseM] \flatBI{\top}$, we fix some $\baseO \baseGeq \baseM$, $\at{T}(\cdot) \in \BunchesWithHole(\At)$, $\at{q} \in \At$ such that $\at{T}(\aunit) \proves[\baseO] \at{q}$. The goal is to show $\at{T}(\at{S}) \proves[\baseO] \at{q}$. Note that the rule $\bigl( \at{T}(\aunit) \seq \at{q}, \at{S} \seq \flatBI{\top} \bigr) \Rightarrow \at{T}(\at{S}) \seq \at{q}$ is in $\baseO$, so apply~\myhyperlink{def:derivability-in-a-base:rule} to $\at{T}(\aunit) \proves[\baseO] \at{q}$ and $\at{S} \proves[\baseO] \flatBI{\top}$ we have $\at{T}(\at{S}) \proves[\baseM] \at{q}$. 
            \item Right to Left. Given the RHS, it follows in particular that, if $\aunit \proves[\baseM] \flatBI{\top}$, then $\at{S} \proves[\baseM] \flatBI{\top}$. Since the rule $\empset \Rightarrow \aunit \seq \flatBI{\top}$ is in $\baseO$, it follows that $\aunit \proves[\baseM] \flatBI{\top}$, so $\at{S} \proves[\baseM] \flatBI{\top}$. 
        \end{itemize}
        \item[8.] We consider the two directions separately. 
        \begin{itemize}[label=--]
            \item Right to Left.  Given the RHS, one may choose $\at{U}:= (\cdot)$ and $\at{p} =\flatBI{\bot}$ to obtain the desired result. 
        \end{itemize}
    \end{enumerate}
    This completes the case analysis.
\end{proof}
\begin{lemma}[\ref{lem:flatequivalence}] 
    For any sub-formula $\xi$ of $\varGamma$ or $\phi$, and arbitrary $\base{N}' \supseteq \base{N}$, we have:  
    \begin{equation}\label{eq:flat-equivalence}
        \supp{\at{S}}{\base{N}'} \xi^\flat \qquad \qquad \text{iff} \qquad  \qquad \supp{\at{S}}{\base{N}'} \xi \tag{$\dagger$}
    \end{equation}
\end{lemma}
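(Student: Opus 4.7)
The plan is to proceed by induction on the structure of $\xi$. The base case, where $\xi \in \At$, is immediate: by the definition of the flattening map, $\xi^\flat = \xi$, so both sides of \eqref{eq:flat-equivalence} are literally the same judgement. The inductive step handles each connective using a uniform four-step strategy: (i) apply Lemma~\ref{lem:atcomp} to convert $\supp{\at{S}}{\base{N}'} \xi^\flat$ into the derivability statement $\at{S} \proves[\base{N}'] \xi^\flat$, since $\xi^\flat \in \At$; (ii) apply the clause of Lemma~\ref{prop:flat-connective} matching the principal connective of $\xi$ to unfold this derivability into a condition mentioning only $\sigma^\flat$ and $\tau^\flat$ (and the structural machinery of base-extensions, contextual bunches, and atomic conclusions); (iii) re-apply Lemma~\ref{lem:atcomp} in the opposite direction to turn those atomic derivability judgements back into support judgements; (iv) invoke the induction hypothesis on the immediate subformulas $\sigma$ and $\tau$ to replace $\sigma^\flat, \tau^\flat$ by $\sigma, \tau$; the resulting statement is exactly the unfolding of $\supp{\at{S}}{\base{N}'} \xi$ via the relevant clause of Figure~\ref{fig:suppBI}.

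For concreteness, take $\xi = \sigma \mand \tau$. By Lemma~\ref{lem:atcomp}, $\supp{\at{S}}{\base{N}'} (\sigma \mand \tau)^\flat$ iff $\at{S} \proves[\base{N}'] (\sigma \mand \tau)^\flat$; by Lemma~\ref{prop:flat-connective}.\ref{item:flat-connective-mand}, this holds iff for every $\base{X} \baseGeq \base{N}'$, $\at{U}(\cdot) \in \BunchesWithHole(\At)$ and $\at{p} \in \At$, $\at{U}(\sigma^\flat \mcomma \tau^\flat) \proves[\base{X}] \at{p}$ implies $\at{U}(\at{S}) \proves[\base{X}] \at{p}$. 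Using Lemma~\ref{lem:atcomp} again (twice), this rephrases as: whenever $\sigma^\flat \mcomma \tau^\flat \supp{\at{U}(\cdot)}{\base{X}} \at{p}$ then $\supp{\at{U}(\at{S})}{\base{X}} \at{p}$. Unfolding the antecedent via \eqref{cl:BI-BeS-supp:inf} and \eqref{cl:BI-BeS-supp:mcomma} reduces it to support for $\sigma^\flat$ and $\tau^\flat$ individually, where the IH replaces them with $\sigma$ and $\tau$; re-folding yields $\sigma \mcomma \tau \supp{\at{U}(\cdot)}{\base{X}} \at{p}$, and the outer quantification is precisely \eqref{cl:BI-BeS-supp:mand}, delivering $\supp{\at{S}}{\base{N}'} \sigma \mand \tau$. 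The cases $\land, \lor$ follow the same pattern. The cases $\top, \mtop, \bot$ require no invocation of the IH since their support clauses refer only to atomic antecedents ($\aunit, \munit$, or none), and the corresponding items of Lemma~\ref{prop:flat-connective} deliver the match directly. The implication cases $\to, \wand$ are simpler still: Lemma~\ref{prop:flat-connective}.\ref{item:flat-connective-mto} and \ref{item:flat-connective-to} give a direct derivability characterization whose translation through Lemma~\ref{lem:atcomp} lands on $\sigma^\flat \supp{\at{S} \commavar (\cdot)}{\base{N}'} \tau^\flat$, and the IH on $\sigma$ and $\tau$ (applied at arbitrary extensions $\base{X} \baseGeq \base{N}' \baseGeq \base{N}$ via monotonicity) converts this to the unfolding of \eqref{cl:BI-BeS-supp:implication} or \eqref{cl:BI-BeS-supp:wand}.

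The main obstacle is transporting the IH, stated at the level of single formulas, through the bunch antecedents $\sigma^\flat \mcomma \tau^\flat$ and $\sigma^\flat \acomma \tau^\flat$ that arise from Lemma~\ref{prop:flat-connective}. This is not a deep difficulty but must be done carefully: one steps through the clauses \eqref{cl:BI-BeS-supp:mcomma} and \eqref{cl:BI-BeS-supp:acomma} to break support of a binary bunch into componentwise support, uses monotonicity (Proposition~\ref{prop:monotonicity}) to ensure the IH applies at the witnessing extension $\base{X} \baseGeq \base{N}'$, and then re-packages the result into a single bunch judgement. All remaining quantifier shuffling is routine.
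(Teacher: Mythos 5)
Your proposal is correct and follows essentially the same route as the paper's own proof: induction on the structure of $\xi$, with the trivial atomic base case, and each inductive case handled by chaining the \eqref{cl:BI-BeS-supp:At} clause, the matching item of Lemma~\ref{prop:flat-connective}, Lemma~(\ref{lem:atcomp}), and the induction hypothesis back into the corresponding clause of Figure~\ref{fig:suppBI}. Your explicit remark about transporting the IH through the bunch antecedents $\sigma^\flat \mcomma \tau^\flat$ via \eqref{cl:BI-BeS-supp:mcomma}/\eqref{cl:BI-BeS-supp:acomma} and monotonicity is a point the paper compresses into ``IH + \ref{lem:atcomp}'', but it is the same argument.
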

\begin{proof}
   Let $\baseM \baseGeq \baseN$ be arbitrary. We proceed by induction on the structure of $\xi$.
    \begin{itemize}[label = --]
        \item $\xi$ is atomic. By definition, $\flatBI{\xi}$ is exactly $\xi$, thus \eqref{eq:flat-equivalence} is a tautology. 
        \item $\xi$ is $\sigma \mand \tau$. This obtains as follows:
        \begin{align}
           \supp{\at{S}}{\baseM} \flatBI{(\sigma \mand \tau)} & \qquad \text{iff} \qquad  \at{S}  \proves[\baseM] \flatBI{(\sigma \mand \tau)}  \tag*{\eqref{cl:BI-BeS-supp:At}} \\ 
            & \qquad \text{iff} \qquad \forall \baseX \baseGeq \baseM, \forall \at{U}(\cdot) \in \BunchesWithHole(\At), \forall \at{p} \in \At, \tag{Proposition~\ref{prop:flat-connective}.\ref{item:flat-connective-mand}} \\
            & \qquad \hspace{6ex} \mbox{if $\at{U}(\flatBI{\sigma} \mcomma \flatBI{\tau}) \proves[\baseX] \at{p}$, then  $\at{U(S)}\proves[\baseX] \at{p}$} \tag*{} \\
            & \qquad \text{iff} \qquad   \forall \baseX \baseGeq \baseM, \forall \at{U}(\cdot) \in \BunchesWithHole(\At),
             \tag{IH + \ref{lem:atcomp}} \\
            & \qquad \hspace{6ex} \mbox{if $\sigma \mcomma \tau \supp{\at{U}(\cdot)}{\baseX} \at{p}$, then  $\supp{\at{U}(\at{S})}{\baseX} \at{p} $} \tag*{}\\
            & \qquad \text{iff} \qquad   \supp{\at{S}}{\baseM} \sigma \mand \tau & \tag*{\eqref{cl:BI-BeS-supp:mand}} 
        \end{align}
        \item $\xi$ is $\mtop$. This obtains as follows:
                \begin{align}
             \supp{\at{S}}{\baseM} \flatBI{\mtop} & \qquad \text{iff} \qquad \at{S} \proves[\baseM] \flatBI{\mtop}  \tag*{\eqref{cl:BI-BeS-supp:At}} \\
            & \qquad \text{iff} \qquad \forall \baseX \baseGeq \baseM, \forall \at{U}(\cdot) \in \BunchesWithHole(\At), \forall \at{p} \in \At, \tag{Proposition~\ref{prop:flat-connective}.\ref{item:flat-connective-top}}  \\
            & \qquad \hspace{6ex}
            \text{ if } \at{U}(\munit) \proves[\baseX] \at{p}, \text{ then } \at{U(S)} \proves[\baseX] \at{p}  \notag \\
            & \qquad \text{iff} \qquad \forall \baseX \baseGeq \baseM, \forall \at{U}(\cdot) \in \BunchesWithHole(\At), \forall \at{p} \in \At, \tag{IH} \\
            & \qquad \hspace{6ex} \text{ if } \munit \supp{\at{U}(\cdot)}{\baseX} \at{p}, \text{ then } \at{S} \supp{\at{U}}{\baseX} \at{p} \notag  \\
            & \qquad \text{iff} \qquad \supp{\at{S}}{\baseM}\mtop  \tag*{\eqref{cl:BI-BeS-supp:mtop}}
        \end{align}
        \item $\xi$ is $\sigma \mto \tau$. This obtains as follows:
        \begin{align}
             \supp{\at{S}}{\baseM} \flatBI{(\sigma \mto \tau)} & \qquad \text{iff} \qquad \at{S} \proves[\baseM] \flatBI{(\sigma \mto \tau)}  \tag*{\eqref{cl:BI-BeS-supp:At}} \\
            & \qquad \text{iff} \qquad \at{S} \mcomma \flatBI{\sigma} \proves[\baseM] \flatBI{\tau}  \tag{Proposition~\ref{prop:flat-connective}.\ref{item:flat-connective-mto}}  \\
            & \qquad \text{iff} \qquad \flatBI{\sigma} \supp{\at{S} \mmcomma (\cdot)}{\baseM} \flatBI{\tau} \tag{\ref{lem:atcomp}} \\
            & \qquad \text{iff} \qquad \sigma \supp{\at{S} \mmcomma (\cdot)}{\baseM} \tau \tag{IH+\eqref{cl:BI-BeS-supp:inf}} \\
            & \qquad \text{iff} \qquad \supp{\at{S}}{\baseM} \sigma \mto \tau \tag*{\eqref{cl:BI-BeS-supp:wand}}
        \end{align}
        \item $\xi$ is $\sigma \land \tau$.  \emph{Mutatis mutandis} on the $\ast$-case
        \item $\xi$ is $\top$. \emph{Mutatis mutandis} on the $\mtop$-case
        \item $\xi$ is $\sigma \to \tau$.  \emph{Mutatis mutandis} on the $\mto$-case
        \item $\xi$ is $\sigma \lor \tau$.  \emph{Mutatis mutandis} on the $\ast$-case
        \item $\xi$ is $\bot$. This obtains as follows:
    \begin{align}
        \supp{\at{S}}{\baseM} \flatBI{\bot} & \qquad \text{iff} \qquad \at{S} \proves[\baseM] \flatBI{\bot}  \tag*{\eqref{cl:BI-BeS-supp:At}} \\
        & \qquad \text{iff} \qquad \mbox{$\forall \at{U}(\cdot) \in \BunchesWithHole(\At), \forall \at{p} \in \Atoms, \, \at{U(S)} \proves[\baseM] \at{p}$} \tag{Proposition~\ref{prop:flat-connective}.\ref{item:flat-connective-bot}} \\
        & \qquad \text{iff} \qquad \mbox{$\forall \at{U}(\cdot) \in \BunchesWithHole(\At), \forall \at{p} \in \Atoms, \, \at{S}\supp{\at{U(\cdot)}}{\baseM} \at{p}$} \tag{\ref{lem:atcomp}} \\
        & \qquad \text{iff} \qquad \mbox{$\forall \at{U}(\cdot) \in \BunchesWithHole(\At), \forall \at{p} \in \Atoms, \, \supp{\at{U(S)}}{\baseM} \at{p}$} \tag{Proposition~\ref{prop:support-identity}+\eqref{cl:BI-BeS-supp:inf}}\\
        & \qquad \text{iff} \qquad \mbox{$\supp{\at{S}}{\baseM} \bot$} \tag{\ref{cl:BI-BeS-supp:bot}}
    \end{align}
    \end{itemize}
    This completes the induction.
\end{proof}

Finally, Lemma~\ref{lem:sharpening} says that $(\cdot)^\sharp$ lifts derivability in $\baseN$ faithfully to derivability in BI. 
\begin{lemma}[\ref{lem:sharpening}]
    Let $S \in \Bunches(\At)$ and $\at{p} \in \At$: 
    if $\at{S} \proves[\base{N}] \at{p} \text{, then } \at{S}^\natural \proves{} \at{p}^\natural$.
\end{lemma}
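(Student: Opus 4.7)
The plan is to prove the claim by induction on the derivation of $\at{S} \proves[\baseN] \at{p}$, following the clauses of Definition~\ref{def:derivability-in-a-base}. Preliminarily, one observes that $(\cdot)^\natural$ commutes with bunch-formation and substitution --- namely, $(\at{P}(\at{Q}))^\natural = \at{P}(\cdot)^\natural(\at{Q}^\natural)$ and $\at{P} \equiv \at{Q}$ implies $\at{P}^\natural \equiv \at{Q}^\natural$ --- which is immediate from the pointwise definition of $(\cdot)^\natural$ on bunches. Also, for any subformula $\xi \in \varXi$, one has $(\flatBI{\xi})^\natural = \xi$, so translating a $\baseN$-derivable sequent atom-by-atom produces the corresponding $\system{NBI}$-sequent.

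For the base cases, \myhyperlink{def:derivability-in-a-base:taut} gives $\at{p} \proves[\baseN] \at{p}$, whose image under $(\cdot)^\natural$ is derivable in $\system{NBI}$ by \ax; and the only initial rules (\myhyperlink{def:derivability-in-a-base:initial}) in $\baseN$ are the axioms $\munit \seq \flatBI{\mtop}$ and $\aunit \seq \flatBI{\top}$, whose sharpenings $\munit \seq \mtop$ and $\aunit \seq \top$ are provable by $\irn\mtop$ and $\irn\top$. For the inductive step using \myhyperlink{def:derivability-in-a-base:rule}, one checks case-by-case that each rule in Figure~\ref{fig:baseN} is the atomization (under $\flatBI{(\cdot)}$) of the corresponding rule in Figure~\ref{fig:NBI}. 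For example, if the last step is $\ref{baseN:wand-intro}$, inferring $\flatBI{\varDelta} \seq \flatBI{(\sigma \wand \tau)}$ from $\flatBI{\varDelta} \mcomma \flatBI{\sigma} \seq \flatBI{\tau}$, then the induction hypothesis gives $\varDelta \mcomma \sigma \proves \tau$ in $\system{NBI}$, and applying $\irn\wand$ yields $\varDelta \proves \sigma \wand \tau$, which is the sharpening of the conclusion. The elimination rules, such as $\ref{baseN:mand-elim}$ or $\ref{baseN:lor-elim}$, are handled identically, using the commutation of $(\cdot)^\natural$ with substitution in the contextual bunch $\varDelta(\cdot)$.

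The structural clauses \myhyperlink{def:derivability-in-a-base:weak}, \myhyperlink{def:derivability-in-a-base:cont}, and \myhyperlink{def:derivability-in-a-base:exch} translate directly via the rules \weak, \cont, and \hyperlink{NBI-exchange}{\exch} of $\system{NBI}$ (the last using that coherent equivalence is preserved by $(\cdot)^\natural$). Finally, \myhyperlink{def:derivability-in-a-base:cut} is handled by the cut-admissibility of $\system{NBI}$ recorded in Lemma~\ref{lem:cut-admissibility}: from the inductive hypotheses $\at{T}^\natural \proves \at{q}^\natural$ and $\at{S}(\at{q})^\natural = \at{S}(\cdot)^\natural(\at{q}^\natural) \proves \at{p}^\natural$, cut-admissibility delivers $\at{S}(\cdot)^\natural(\at{T}^\natural) \proves \at{p}^\natural$, which is exactly $\at{S}(\at{T})^\natural \proves \at{p}^\natural$. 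The main technical burden is purely bookkeeping: verifying the one-to-one correspondence between the nineteen atomic rule-schemes of $\baseN$ and the rules of $\system{NBI}$, rather than any deep mathematical obstacle --- the work of ensuring that the bunch structure is preserved under $(\cdot)^\natural$ has already been absorbed into the preliminary observation about substitution.
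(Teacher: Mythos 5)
Your proposal is correct and follows essentially the same route as the paper's own proof: induction on the derivation of $\at{S} \proves[\base{N}] \at{p}$ per Definition~\ref{def:derivability-in-a-base}, matching each rule-scheme of $\baseBI$ to its counterpart in $\system{NBI}$ via the left-inverse $\binat{(-)}$, handling the structural clauses by $\weak$, $\cont$, $\exch$, and discharging \myhyperlink{def:derivability-in-a-base:cut} with Lemma~\ref{lem:cut-admissibility}. Your explicit treatment of the \myhyperlink{def:derivability-in-a-base:initial} case (the unit axioms) and of the commutation of $\binat{(-)}$ with substitution in contextual bunches is slightly more careful than the paper's terse presentation, but it is the same argument.
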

\begin{proof}
    We proceed by induction on $\at{S} \proves[\baseN] \at{p}$ according to Definition~\ref{def:derivability-in-a-base}. 
    \begin{itemize}[label=--]
        \item~\myhyperlink{def:derivability-in-a-base:taut}. In this case, it must be that $\at{S} = \at{p}$. The result follows by the $\rn{taut}$-rule in $\system{NBI}$.
    \item~\myhyperlink{def:derivability-in-a-base:rule}. Let $\Xi$ be all the sub-formulae of $\varGamma$ and $\phi$ relative to which $\biflat{(-)}$ and $\baseN$ are constructed. We proceed by case analysis on the rules in $\base{N}$. We only illustrate $\biflat{\ern \mand}$, the others being similar. 
    
    It must be that $\at{S} = \biflat{\at{\varGamma'}(\varDelta)}$, for some $\varGamma' \in \BunchesWithHole(\varXi)$ and $\varDelta \in \Bunches(\varXi)$, such that $\biflat{\varGamma'(\psi_1 \mcomma \psi_2)} \proves[\baseN] \biflat{\chi}$ and $\biflat{\varDelta} \proves[\baseN] \biflat{(\psi_1 \mand \psi_2)}$, for some $\psi_1, \psi_2, \chi \in \varXi$. By the IH, $\varGamma'(\psi_1 \mcomma \psi_2) \proves \chi$ and $\varDelta \proves (\psi_1 \mand \psi_2)$, using that $\binat{(-)}$ is a left-inverse of $\biflat{(-)}$. The result follows by $\ern \mand$ in $\system{NBI}$.

    \item~\myhyperlink{def:derivability-in-a-base:weak}. It must be that $\at{S} = \at{P}(\at{Q} \fatsemi \at{Q}')$, for some $\at{P} \in \BunchesWithHole(\At)$ and $\at{Q}, \at{Q}' \in \Bunches(\Atoms)$, and $\at{P}(\at{Q}) \proves[\baseN] \at{p}$. By the IH, $\binat{\at{P}(\at{Q})} \proves \binat{\at{p}}$. The result follows from application of the $\weak$-rule in $\system{NBI}$.
        \item~\myhyperlink{def:derivability-in-a-base:cont}. \emph{Mutatis mutandis} on the case for~\myhyperlink{def:derivability-in-a-base:weak}.
        \item~\myhyperlink{def:derivability-in-a-base:exch}. \emph{Mutatis mutandis} on the case for~\myhyperlink{def:derivability-in-a-base:cont}.
        \item~\myhyperlink{def:derivability-in-a-base:cut}.  It must be that $\at{S} = \at{P}(\at{Q})$, for some $\at{P} \in \BunchesWithHole(\At)$ and $\at{Q} \in \Bunches(\Atoms)$, such that $\at{Q} \proves[\baseN] \at{q}$ and $\at{S}(\at{q}) \proves[\baseN] \at{p}$, for some $\at{q} \in \Atoms$. By the IH, both $\binat{\at{Q}} \proves \binat{\at{q}}$ and $\binat{\at{S}(\at{q})} \proves \binat{\at{p}}$. The result follows from $\rn{cut}$-admissibility for BI (Lemma~\ref{lem:cut-admissibility}).
    \end{itemize}
    This completes the induction.
\end{proof}

\section{Conclusion} \label{sec:conclusion}

This paper delivers a B-eS for BI. While BI can be understood as the free combination of IPL and IMLL, both of which have been given B-eS before (see Sandqvist~\cite{Sandqvist2015hypothesis} and Gheorghiu et al.~\cite{gheorghiu2023imll}), several technical challenges remain in giving BI's B-eS. Foremost is that BI uses a more complex data-structure for contexts than either of the other logics: bunches. Relative to the earlier work, giving B-eS for IPL and for IMLL, handling bunches with the appropriate structural behaviour is the major challenge in giving a B-eS for BI. These details are contained above. Nevertheless, there is faithful embedding from IMLL and IPL into BI that is reflected in the relationship between their B-eS and the one for BI in this paper. Importantly, BI shares many connectives and the concept of bunches with relevance logics. The treatment herein suggests a method for a systematic/uniform presentation of the B-eS for relevance logics (e.g., following the approach by Read~\cite{Read1988}). 

A reason for studying the B-eS of substructural logics in the systems-oriented sciences is that it clearly and naturally supports, within the context of BI's usual resource reading \cite{pym2019resource} (cf. Figure~\ref{fig:suppBI}, clauses for $\mmcomma$ and $\aacomma$), the celebrated number-of-uses reading of the implications. This reading is rarely, if at all, reflected in the truth-functional semantics of these logics. In this paper, we illustrate the use of the B-eS for BI by a toy example in the setting of \emph{access control}. This is a promising area for the use of B-eS as it is dynamic in the sense that `access' is about actions as opposed to states and this dynamics corresponds to the passing of resources through implications on the number-of-uses reading. See \cite{MFPS-IRS-2024} for a more substantive discussion of these ideas by the present authors. 

In \cite{GGP2024stateeffect}, the authors have proposed a \emph{state-effect} reading of model-theoretic and proof-theoretic semantics that explains their relationship. In this reading, when logic is used in modelling systems, M-tS captures the properties of a system, while P-tS/B-eS captures its dynamics. Since BI has a well-established and intuitive reading as a logic of `resources', this work may be used in modelling dynamical systems in which there is some appropriate notion of resource, include automata, Petri nets, simulation modelling, etc. On the last, Kuorikoski and Reijula \cite{jaakko} have recently shown P-tS (in general) to be a suitable paradigm of meaning for executable models. 

Developing these applications in areas such as resource semantics and modelling remains future work.

\section*{Acknowledgements}
 
We are grateful to Yll Buzoku, Diana Costa, Sonia Marin, and Elaine Pimentel for many useful discussions on the P-tS for substructural logics, to Eike Ritter and Edmund Robinson for discussions on the category-theoretic formulation of B-eS, and to Tim Button for discussion of P-tS's philosophical context.

\section*{Declarations}

This work has been partially supported by the UK EPSRC grants EP/S013008/1 AND EP/R006865/1, and by Gheorghiu's EPSRC Doctoral Studentship.

\bibliography{refs}

\end{document}